\newcommand{\remove}[1]{}
\newtheorem{theorem}{Theorem}[section]
\newtheorem{corollary}[theorem]{Corollary}
\newtheorem{lemma}[theorem]{Lemma}
\newtheorem{claim}[theorem]{Claim}
\theoremstyle{definition}
\newtheorem{definition}[theorem]{Definition}
\theoremstyle{remark}
\newtheorem{remark}[theorem]{Remark}
\newcommand{\ceil}[1]{\left\lceil #1 \right\rceil}
\newcommand{\floor}[1]{\left\lfloor #1 \right\rfloor}
\newcommand{\given}{\,\middle\vert\,}
\newcommand{\suchthat}{\,\colon\,}
\newcommand{\dec}{{\cal D}}
\title{Adjacency Sketches in Adversarial Environments}
\author{Moni Naor\thanks{Department of Computer Science and Applied Math, Weizmann Institute of Science; Email: {\tt moni.naor@weizmann.ac.il}. Incumbent of the Judith Kleeman Professorial Chair. Research supported in part by grants from the Israel Science Foundation (no. 2686/20), by the Simons Foundation Collaboration on the Theory of Algorithmic Fairness and by the Israeli Council for Higher
Education (CHE) via the Weizmann Data Science Research Center.} \and Eugene Pekel\thanks{Department of Computer Science and Applied Math, Weizmann Institute of Science; Email: {\tt eugene.pekel@weizmann.ac.il}.}}
\newif\ifthesis
\begin{document}

\ifthesis
\includepdf[pages=-,width=\textwidth]{mscthesis_titlepage.pdf}
\else
\maketitle
\fi

\begin{abstract} 
An adjacency sketching or implicit labeling scheme for a family $\cal F$ of graphs is a method that defines for any $n$ vertex $G \in \cal F$ an assignment of labels to each vertex in $G$, so that the labels of two vertices tell you whether or not they are adjacent.
The goal is to come up with labeling schemes that use as few bits as possible to represent the labels.
By using randomness when assigning labels, it is sometimes possible to produce adjacency sketches with much smaller label sizes, but this comes at the cost of  introducing some probability of error. 
Both deterministic and randomized labeling schemes have been extensively studied, as they have applications for distributed data structures and deeper connections to universal graphs and communication complexity.
The main question of interest is which graph families have schemes using short labels, usually $O(\log n)$ in the deterministic case or constant for randomized sketches.

In this work we consider the resilience of probabilistic adjacency sketches against an adversary making adaptive queries to the labels.
This differs from the previously analyzed probabilistic setting which is ``one shot".
We show that in the adaptive adversarial case the size of the labels is tightly related to the maximal degree of the graphs in $\cal F$.
This results in a stronger characterization compared to what is known in the non-adversarial setting.
In more detail, we construct sketches that fail with probability $\varepsilon$ for graphs with maximal degree $d$ using $2d\log (1/\varepsilon)$ bit labels and show that this is roughly the best that can be done for any specific graph of maximal degree $d$, e.g.\ a $d$-ary tree.
\end{abstract}

\section{Introduction}

We study an adversarial model of implicit graph representations. Traditional approaches rely on a global data structure, such as the adjacency matrix, to represent the structure of a graph. In an implicit representation, on the other hand, the edges of the graph are not explicitly stored but are instead defined through a set of rules or functions. This is achieved by associating a label with each vertex such that the adjacency of any two vertices can be inferred from their labels. The benefit of this approach is that the graph structure is encoded solely in the set of labels, making it particularly useful for graph algorithms, data structures, distributed computing, and communication networks.

Previous works in this area provide for many graph families adjacency labeling schemes using $O(\log n)$-bit labels for $n$ vertex graphs, see Kannan, Naor and Rudich~\cite{kannan1992implicat}, as well as Alstrup et al.~\cite{alstrup2015adjacency}. 
In fact, $\Omega(\log n)$ is a lower bound for any graph where the number of possible neighborhoods is not limited, such as a star. 
Since a vertex-induced universal graph for a family of graphs $\mathcal{F}$ can be constructed from an adjacency labeling scheme for $\mathcal{F}$ (see also \cref{def:universal_graph} below), and vice versa,  those results imply the existence of polynomial sized universal graphs for several graph families. Indeed, many of the results related to adjacency labeling are expressed in terms of the properties of the universal graph into which a graph family is embedded; e.g.\ Alstrup, Dahlgaard and Knudsen~\cite{AlstrupDK17} showed that trees have linear-sized universal graphs from the $\log n +O(1)$ labeling scheme and Alon~\cite{alon2017asymptotically} showed that the family of {\em all} graphs on $n$ vertices has a universal graph whose size is $(1+o(1))\cdot 2^{\frac{n-1}{2}}$ which is asymptotically optimal.

The goal of adjacency schemes is to minimize label size.  So it is also natural to consider adjacency sketches which are randomized representations, aiming for obtaining labels of constant size, but at the cost of having some error probability. Natural probabilistic variants of the results for graph families such as trees were proven by Fraigniaud and Korman~\cite{fraigniaud2009randomized} and various other graph families were studied by Harms, Wild and Zamaraev~\cite{harms2022randomized} using a deeper equivalence of those sketches to communication complexity. Generally, the study of both schemes and sketches was strongly motivated by the implicit graph conjecture, which attempted to identify which graph families have labeling schemes using $O(\log n)$ bits (and its probabilistic version regarding which graph families have adjacency sketches using a constant number of bits). Both conjectures were recently refuted by Hatami, Hatami and Hambardzumyan~\cite{hatami2022implicit,hambardzumyan2022counter} and tight characterizations of the label sizes for general graph families are still unknown.

This work emphasizes a different but inherent question related to the probabilistic properties of adjacency sketches. The correctness of such sketches (and randomized data structures in general) is usually analyzed under the assumption that the vertices we choose to query are fixed and the randomness is over the choice of labels for the graph. Inspired by adversarial models for other data structures, such as the work by Naor and Yogev~\cite{naor2015bloom}, we consider a more robust model where each query is made adaptively by an unbounded adversary {\em after learning the outputs of previous queries}. Other works also consider adversarial models specifically in the context of sketching, e.g.\ several parties computing a joint function on massive inputs as studied by Mirnov, Naor and Segev~\cite{mironov2008sketching} and linear sketches for dimension reduction in the work by Hardt and Woodruff~\cite{hardt2013robust}. Note that whenever we have a sequence of queries where the next one may be a function of the results of the previous answers it is hard to justify a non-adaptive analysis. 

We formalize the notion of implicit graph representations which are resilient to adversarial attacks. Our main contribution is a construction of a resilient scheme for graphs with maximal degree $d$ using $O(d)$-bit labels; more specifically (i)  For probability of forgery $\varepsilon\in (0,1)$ the size of the label is $2d\log(1/\varepsilon)$, and  (ii) We prove a corresponding lower bound showing this is asymptotically tight. While the upper bound works for the family of graphs of maximal degree $d$ (no assumption regarding the specific graph), the lower bound is applicable to almost all graphs of maximal degree $d$ (those that have many vertices of degree $\Theta(d)$), that is even if the scheme is aware of the graph that is being labeled.

In other words, we have a tight characterization of the length of the labels of degree $d$ graphs in the adversarial case, something we do not have in general in other probabilistic settings.

\subsection{Definitions and Notations}

For a natural number $n\in\mathbb{N}$ let $\left[n\right]=\left\{1,\ldots,n\right\}$. For a probability distribution $D$, we write $x\sim D$ when $x$ is a random variable sampled from $D$. A finite set can be written instead of a distribution $D$ to represent the uniform distribution over the set. For an event $A$ we write $\mathds{1}_A$ for the indicator of $A$, i.e.\ the random variable which is $1$ when the event occurs and $0$ otherwise.

The (Shannon) entropy of a distribution $D$ over $\{0,1\}^n$ is defined as
$$H[D]=-\sum_{\alpha\in\{0,1\}^n}\mathds{P}_D\left[\alpha\right]\cdot \log\left(\mathds{P}_D\left[\alpha\right]\right)$$
We denote by $H_\infty [D]$ the {\em min entropy} of the distribution $D$ which is the largest $k\in\mathbb{R}$ such that all outcomes occur with probability at most $2^{-k}$, i.e.\ $H_\infty [D] = -\log\left(\max_{\alpha\in\{0,1\}^n}\mathds{P}_D\left[\alpha\right]\right)$.

\begin{definition}
The statistical distance between two distributions $D$ and $D'$ over $\{0,1\}^n$ is
$$
\Delta\left(D,D'\right)
= \frac{1}{2}\sum_{\alpha\in\{0,1\}^n} \left| \mathds{P}_D\left[\alpha\right] - \mathds{P}_{D'}\left[\alpha\right] \right|
= \max_{S\subseteq\{0,1\}^n} \left| \mathds{P}_D\left[S\right] - \mathds{P}_{D'}\left[S\right] \right|
$$
\end{definition}

For an undirected graph $G$ let $V(G)$ be the vertex set of $G$ and let $E(G)$ be the edge set of $G$. The neighborhood of a vertex $x\in V(G)$ is the set of all vertices adjacent to $x$ ($x$'s neighbors) and it is denoted with $N_G (x)$. The degree of $x$ is the number of neighbors of $x$ and is denoted $\deg_G (x)$. When it is clear what is the graph $G$, we simply use $N(x)$ and $\deg (x)$.

For a family $\cal F$ of graphs let ${\cal F}_n$ be the subset of $\cal F$ containing those graphs with $n$ vertices. 

\begin{definition}[Adjacency Labeling]
An adjacency labeling scheme with size $c(n)$ for a family $\mathcal{F}$ of finite graphs 
consists of two algorithms:
(i)  A decoder $\dec_{\mathcal{F}}\colon \{0,1\}^* \times \{0,1\}^* \mapsto \{0,1\}$ and (ii) An encoder outputting for each graph $G\in\mathcal{F}_n$ a function $f_G\colon V\mapsto \{0,1\}^{c(n)}$ labeling the vertices of $G$ with strings of length $c(n)$, so  that for any two vertices $u$ and $v$, given $f_G(u)$ and $f_G(v)$ it is possible to correctly decide the adjacency of $u$ and $v$ in $G$, i.e.
$$\forall u,v\in V(G) \ :\  \dec_{\mathcal{F}}(f_G(u),f_G(v)) = \mathds{1}_{(u,v)\in E(G)}$$
\end{definition}

\begin{definition}[Universal Graph]
\label{def:universal_graph}
A universal graph with size $c(n)$ for a family $\mathcal{F}$ of finite graphs is a sequence $U_\mathcal{F}=(U_n)_{n\in\mathbb{N}}$ such that $\lvert V(U_n)\rvert = c(n)$ and for all $n\in\mathbb{N}$ and $G\in\mathcal{F}_n$, $G$ is a vertex induced subgraph of $U_n$, i.e.\ there exists a mapping $\phi\colon V(G)\mapsto V(U_n)$ satisfying
$$ \forall u,v\in V(G) \ :\  \mathds{1}_{(\phi (u),\phi (v))\in E(U_n)} = \mathds{1}_{(u,v)\in E(G)}$$
\end{definition}

\begin{definition}[Adjacency Sketch]\label{def:adjacency sketch}
An adjacency sketch with size $c(n)$ and error $\delta >0$ for a family $\mathcal{F}$ of finite graphs consists of two algorithms: a randomized encoder $L_G$ and a deterministic decoder $\dec_{\mathcal{F}}\colon\{0,1\}^*\times \{0,1\}^*\mapsto\{0,1\}$. On input $G\in\mathcal{F}_n$, the encoder outputs a function $\ell\colon V(G)\mapsto\{0,1\}^{c(n)}$ from the distribution $L_G$ of labeling functions. The encoder and decoder satisfy the condition that for all $G\in\mathcal{F}$,
$$ \forall u,v\in V(G) \ :\  \mathds{P}_{\ell \sim L_G} \left[\dec_{\mathcal{F}}( \ell(u),\ell(v)) = \mathds{1}_{(u,v)\in E(G)}\right] \geq 1-\delta$$
The probability is over the randomness of the encoder $L_G$  (choice of labels made by selecting $\ell$).
\end{definition}

\begin{definition}[Probabilistic Universal Graph]
A probabilistic universal graph with size $c(n)$ and error $\delta >0$ for a family $\mathcal{F}$ of finite graphs is a sequence of graphs $U_\mathcal{F}=(U_n)_{n\in\mathbb{N}}$ such that $\lvert V(U_n)\rvert = c(n)$ and for all $n\in\mathbb{N}$ and $G\in\mathcal{F}_n$ there is a probability distribution $\Phi_G$ over maps $\phi\colon V(G)\mapsto V(U_n)$ satisfying
$$ \forall u,v\in V(G) \ :\  \mathds{P}_{\phi \sim \Phi_G} \left[\mathds{1}_{(\phi (u),\phi (v))\in E(U_n)} = \mathds{1}_{(u,v)\in E(G)}\right] \geq 1-\delta.$$
The probability is over the randomness of the mapping $\Phi_G$  (the choice of mapping made by selecting $\phi$).
\end{definition}

In Kannan et al.~\cite{kannan1992implicat} it was observed that adjacency labeling schemes and universal graphs are equivalent concepts. Probabilistic labeling schemes for adjacency were studied in \cite{fraigniaud2009randomized}, but our definitions of adjacency sketches and PUGs are based on Harms~\cite{harms2020universal} and \cite{harms2022randomized}. They mirror the concepts of (deterministic) adjacency labeling schemes and universal graphs in the sense that equivalence can be shown by a similar argument: identify the vertices of $U_n$ with binary strings of length $c(n)=\ceil{\log\lvert V(U_n) \rvert}$, the labeling function $\ell\colon V(G)\mapsto\{0,1\}^{c(n)}$ with the map $\phi\colon V(G)\mapsto V(U_n)$ and the decoder $\dec$ with the edge relation in $U_n$.

In later sections, when it is clear from the context, we will drop the $\mathcal{F}$ and simply denote the decoder with $\dec$ or the (probabilistic) universal graph $U$.

Finally, we mention the communication complexity model known as simultaneous messages: two parties are given inputs  \( x \) and \(y\) respectively and should compute some function \( f \) of the inputs \emph{without communicating with each other}. Instead, each one sends a message to a third party (a referee) who calculates \( f(x,y) \) given the messages received. There are two varieties of this model: with shared randomness and with (only) private randomness. Harms, Wild and Zamaraev~\cite{harms2020universal,harms2022randomized} showed a tight relation between the {\em shared} randomness variety and non-adversarial sketches. We explore the connection to the {\em private} randomness case is Section~\ref{section:smmpc}.

\subsubsection{Resilience to Adaptive Adversaries}
\label{sec: adaptive adversaries}
We are now ready to define the main new concept, that of an adjacency  sketching scheme that is resilient to adaptive adversaries. 

\begin{definition}\label{resilience_definition}
Let $\mathcal{S}$ be an adjacency sketch for a graph family $\mathcal{F}$ which only errs on non-edges. Consider the following game against an adversary:
\begin{enumerate}
    \itemsep 0em
    \item The adversary chooses a graph $G\in\mathcal{F}_n$.
    \item Labels are assigned to $G$ by output $\ell$ of the encoder of $\mathcal{S}$.
    \item The adversary can adaptively ask for the labels of vertices $x_1,x_2,\ldots,x_k\in V(G)$. The number of queries $k$ does not need to be decided in advance, but $k\leq n-2$.
    \item The adversary declares that vertices $x_{k+1}$ and $x_{k+2}$ that have not been asked so far are its candidates.\footnote{For simplicity, we assume throughout the paper that $x_{k+1} \neq x_{k+2}$, but all our schemes can handle the case where $x_{k+1}=x_{k+2}$ as well.}
\end{enumerate}
The adversary wins if it is able to find $x_{k+1},x_{k+2} \in V(G)$ with wrong label decoding. We say that the adjacency sketch $\mathcal{S}$ is adversarial resilient with probability of forgery $\varepsilon \geq 0$ if the adversary wins the game with probability at most $\varepsilon$.
\end{definition}

\paragraph{Amplification:}\label{amplification_method}
Suppose that we have a scheme  with some fixed probability of forgery, say  $1/2$, then we can create a scheme with probability of forgery $\varepsilon$ by using several ($k=\lceil \log (1/\varepsilon) \rceil$) such schemes independently and in parallel and  where the new decoder accepts if and only if all the decoders of the old scheme accepts.  To win against this scheme, the adversary must find a non-adjacent pair of vertices that has wrong decoding in all $k$ coordinates. 
Since the original probability of forgery  was $1/2$ and the labeling functions are independently sampled, the new probability of forgery is $(1/2)^k$: the general argument for amplification of the failure of the adversary follows that of parallel repetition of (single prover) interactive proofs (see Goldreich~\cite{Goldreich98} Appendix C.1). 

\bigskip
\noindent
The general question we explore is how small can the label size be as a function of local properties of the graph and the desired error (or probability of forgery).  We will mostly discuss adjacency sketches which only err on non-edges. Unless explicitly stated otherwise, the term ``labeling scheme" or ``scheme" will be used to refer to such constructions.

\subsection{Our Results}
The work presented in this paper characterizes the length of adjacency sketches in the adversarial model introduced in \cref{resilience_definition}.  Recall that our labeling schemes refer to adjacency sketches which only err on non-edges and where the adversary  adaptively queries the labels of vertices and attempts to find a pair of {\em non-adjacent} vertices which are decoded as having an edge according to the scheme. The discussion in \cref{section:discussion} motivates the choice to focus on this type of sketches and explores some variations of the adversarial model.

We start with an investigation of resilient labeling schemes in the simplified scenario of graphs where the degree is at most 1, i.e.\ every vertex has at most one neighbor - a matching (\cref{section:low_degree}).
A reasonable approach is to look at pairs of adjacent vertices, give one of them a random label $u$ from the set of possible labels and give the other vertex a random label $v$ from the subset of labels $u$ accepts. The main question is how to choose the structure, i.e.\ the subset of labels each $u$ accepts, such that this is optimal. For example, coloring-based schemes use colors as labels and every color accepts all colors excluding itself (exactly like proper coloring of graphs in general). The main result of this section is the construction of \cref{projective_plane_scheme} based on finite projective planes which achieves probability of forgery $\varepsilon$ using labels of size roughly $2\log (1/\varepsilon)$.

We derive a corresponding lower bound as a function  of $\varepsilon$ by analyzing natural adversarial strategies in terms of the distribution a labeling scheme induces on the corresponding probabilistic universal graph (PUG). This can be summarized as follows:

\begin{theorem}
\label{thm: projective}
For any $\varepsilon\in (0,1)$ there is a labeling scheme (following \cref{resilience_definition}) for the matching graph with labels of size roughly $2\log (1/\varepsilon)$ and probability of forgery $\varepsilon$. Furthermore, this is the best possible label size.
\end{theorem}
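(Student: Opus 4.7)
My plan is to construct a scheme based on finite projective planes. Let $q\approx 1/\varepsilon$ be a prime power and fix $\Pi_q$, the projective plane of order $q$, which has $q^2+q+1$ points, the same number of lines, $q+1$ points on each line, $q+1$ lines through each point, and a unique line through any two distinct points. The label alphabet will be the disjoint union of the points and lines of $\Pi_q$, of total size $2(q^2+q+1)$, so each label uses $\log(2(q^2+q+1))=2\log q+O(1)$ bits. The decoder accepts two labels iff one is a point, the other a line, and the two are incident. The encoder acts on each matched pair $(u,v)$ by flipping a fair coin and either setting $\ell(u)$ to a uniformly random point $p$ and $\ell(v)$ to a uniformly random line through $p$, or swapping these two roles; unmatched vertices receive a uniformly random label from the whole alphabet. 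This calibration makes every vertex's marginal label distribution uniform over the alphabet and forces every edge of $G$ to be accepted with probability one, so the scheme errs only on non-edges.

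For the resilience analysis, the key observation is that after any adaptive sequence of queries, each declared label $\ell(x_{k+i})$ has a conditional distribution of one of two forms: uniform over the full alphabet (if $x_{k+i}$ is unmatched or its matched partner was not queried), or uniform over the $q+1$ labels incident to a previously revealed label (if the partner was queried). Moreover, these two conditional distributions are independent between the two declared vertices. A short case analysis over the resulting combinations --- including whether the revealed partner labels are points or lines --- shows that the probability the two declared labels are incident is at most $1/(q+1)$ in every case (the worst being ``both constrained, opposite types'', which the adversary can try to engineer by inspecting the revealed labels). Choosing $q$ in the desired range then gives label size $2\log(1/\varepsilon)+O(1)$ with forgery probability at most $\varepsilon$.

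\textbf{Lower bound.} I would translate the problem into the PUG formulation: a scheme with labels of size $c$ corresponds to a probabilistic universal graph $U$ on $N\leq 2^c$ vertices, where each edge of $G$ maps to an edge of $U$ (since the scheme only errs on non-edges) and each non-edge maps to a non-edge of $U$ with probability $\geq 1-\varepsilon$. Consider a matching $G$ containing two disjoint pairs $(a,b)$ and $(c,d)$ together with the natural adversary who queries $b$ and $d$, learns $v=\phi(b)$ and $w=\phi(d)$, and declares $(a,c)$. Conditional on the queries, $\phi(a)$ is supported on $N_U(v)$ and $\phi(c)$ on $N_U(w)$, and the forgery event is that this pair lies in $E(U)$. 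The goal is then to lower bound the resulting conditional probability by $\Omega(1/\sqrt{N})$, which combined with $N\leq 2^c$ yields $c\geq 2\log(1/\varepsilon)-O(1)$. I would achieve this via a two-sided counting argument: the non-adversarial PUG guarantee bounds $|E(U)|$ (at most $\varepsilon N^2/2$ in the well-spread case), while the requirement that every matched pair sit on an edge of $U$ forces the label marginals to be spread over many vertices, whose neighborhoods in $U$ cannot be too small on average. Applying Cauchy--Schwarz to the distribution of edges between two random neighborhoods of $U$ then lower bounds the adversary's success by $\Omega(1/\sqrt{N})$, matching the projective-plane construction up to constants.

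\textbf{Main obstacle.} The principal difficulty is the lower bound. A direct use of the PUG property gives only $\varepsilon\geq \Omega(1/N)$, i.e., $c\geq \log(1/\varepsilon)$, missing the crucial factor of two in the exponent. To recover it the argument must genuinely exploit adaptivity: the adversary's single query per pair projects each declared label into a small neighborhood in $U$, reducing a ``single collision in $U$'' problem to a ``birthday-like'' collision problem between two neighborhoods of typical size $\Theta(\sqrt{N})$. Making this quantitatively tight --- in particular, ruling out clever schemes that concentrate their labels to defeat the averaging --- is where the bulk of the technical work lies.
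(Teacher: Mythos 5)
Your upper bound matches the paper's construction in all essentials: both use the incidence relation of a finite projective plane of order $q \approx 1/\varepsilon$, labeling each matched pair with a random incident point--line pair. The cosmetic differences (a coin flip to symmetrize the point/line assignment; using a disjoint union of points and lines rather than exploiting self-duality to save a bit) do not matter. However, your claimed bound of $1/(q+1)$ on the forgery probability is incorrect. In the worst case --- both declared vertices constrained, opposite types, with the revealed point $P$ lying \emph{on} the revealed line $\lambda$ (which the adversary can engineer by inspecting revealed labels on a long matching) --- the probability that a uniform point on $\lambda$ lies on a uniform line through $P$ is $\frac{2q+1}{(q+1)^2} \approx 2/(q+1)$, not $1/(q+1)$: the event $\ell(a)=P$ contributes an extra $1/(q+1)$ because every line through $P$ contains $P$. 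The paper's Claim~2.3 carries out exactly this case analysis and obtains $2/(p+1)$. This is only a constant factor and does not change the $2\log(1/\varepsilon)+O(1)$ conclusion, but your ``short case analysis'' as stated is wrong.

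The lower bound is where there is a genuine gap, and you acknowledge it yourself. You propose a \emph{non-adaptive} two-query adversary (fix edges $(a,b),(c,d)$; query $b,d$; declare $(a,c)$) together with a Cauchy--Schwarz/counting argument over neighborhoods of the PUG, and then say that making this tight ``is where the bulk of the technical work lies.'' That is the theorem: the sketch is not a proof. More importantly, the paper's argument (Theorem~\ref{matching_lower_bound}) is structurally different and genuinely adaptive. It fixes a threshold $\alpha$, restricts attention to the subgraph $U_\alpha$ of labels with mass $\geq \alpha/m$, and introduces the weighted average degree $\Delta_\alpha$. The adversary chooses between two strategies based on the size of $\Delta_\alpha$: Strategy~1 repeatedly queries neighbors until it finds a label $u$ with $\deg_{U_\alpha}(u) \leq 2\Delta_\alpha$ (Markov), identifies $u$'s most likely $U_\alpha$-neighbor $v$, then keeps querying until it finds a vertex with label $v$ and pairs the leftovers, yielding forgery probability $\geq (1-\alpha)/(2\Delta_\alpha)$; Strategy~2 is pure random guessing, yielding $\geq (\alpha/m)\Delta_\alpha$. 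Balancing the two at $\alpha = 1/3$ gives $\Omega(1/\sqrt m)$. Your two-query adversary cannot reproduce Strategy~1, because it never gets to re-query to locate a vertex whose label it wants, nor to retry until the first revealed label has small $U_\alpha$-degree. In particular, against a scheme whose label distribution is deliberately skewed (some labels far more likely than others, some $U$-degrees much larger than others), the averaged birthday calculation you sketch can fall short of $\Omega(1/\sqrt m)$, and the counting step ``the PUG guarantee bounds $|E(U)| \leq \varepsilon N^2/2$ in the well-spread case'' has no argument that the scheme must be well-spread. So while the goal and the PUG reformulation are correct, the key lemma is missing, and the route you outline is not the paper's route and has not been shown to close the gap from $\log(1/\varepsilon)$ to $2\log(1/\varepsilon)$.
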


The focus of \cref{section:high_degree,section:learning} is bounded degree graphs where the maximal degree is $d$. The goal is to derive a tight characterization of the labeling schemes for the family of such graphs.

\paragraph{Upper Bound:}
First, we generalize some of the earlier coloring-based examples in the paper and show that even with relatively small labels one can, somewhat surprisingly, achieve non-trivial resilience against an adversary.

\begin{theorem}
\label{thm: coloring}
There is a labeling scheme (following \cref{resilience_definition}) for the family of graphs with maximal degree $d$ that has a probability of forgery at most $1-\Omega (1/d)$ using labels of size $O(\log d)$.
\end{theorem}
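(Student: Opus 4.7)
My plan is to use a \emph{uniformly random proper coloring} of $G$ with a slightly enlarged palette as the labeling scheme. Set $q = 3d$; since $\chi(G) \le d + 1 \le q$, proper $q$-colorings of $G$ exist, and the encoder samples $c$ uniformly at random from them. The label of vertex $v$ is $c(v) \in [q]$, encoded in $\lceil \log q \rceil = O(\log d)$ bits, and the decoder $\dec(a, b)$ outputs $1$ iff $a \neq b$. Because $c$ is a proper coloring, $\dec$ is always correct on edges, so the scheme errs only on non-edges, as required.

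The analysis rests on the spatial Markov property of the uniform distribution on proper colorings: for any $T \subseteq V(G)$, conditional on $c|_{V \setminus T}$ the restriction $c|_T$ is uniform over proper colorings of $G[T]$ respecting the fixed boundary. Applied to $T = \{u, v\}$ for a non-adjacent pair, $c(u)$ and $c(v)$ are conditionally independent, each uniform on $A_u = [q] \setminus c(N(u))$ and $A_v = [q] \setminus c(N(v))$. Since $|N(u)|, |N(v)| \le d$, both sets have size at least $q - d = 2d$, and inclusion--exclusion gives $|A_u \cap A_v| \ge 2(2d) - q = d$. Hence
\[
\Pr\!\left[c(u) = c(v) \;\middle|\; c|_{V \setminus \{u,v\}}\right] \;=\; \frac{|A_u \cap A_v|}{|A_u| \cdot |A_v|} \;\ge\; \frac{d}{(3d)^2} \;=\; \frac{1}{9d}.
\]
For the adaptive adversary, the queries reveal $c|_Q$ for some $Q \subseteq V(G)$, and the candidate pair $(x_{k+1}, x_{k+2})$ is a deterministic function of $c|_Q$. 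Averaging the pointwise bound above over the conditional distribution of $c|_{V \setminus \{x_{k+1}, x_{k+2}\}}$ given $c|_Q$ (which is still supported on proper extensions of the boundary) yields $\Pr[c(x_{k+1}) = c(x_{k+2}) \mid \text{queries}] \ge 1/(9d)$ for any non-adjacent output pair, so the forgery probability is at most $1 - 1/(9d) = 1 - \Omega(1/d)$.

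I do not expect a single hard obstacle: the key technical point is the spatial Markov property of the uniform measure on proper colorings, which follows from the fact that it is a Gibbs distribution with purely nearest-neighbor antiferromagnetic interactions (so conditioning on any boundary produces another uniform measure, now over proper colorings of the complementary graph respecting the boundary). The mild slack of $q = 3d$ (rather than $q = d + 1$) is precisely what makes the inclusion--exclusion lower bound on $|A_u \cap A_v|$ non-vacuous; with only $d + 1$ colors the same qualitative conclusion should still hold but would require a more delicate combinatorial analysis of the chromatic polynomial, so using a constant multiple of $d$ colors is the simplest route.
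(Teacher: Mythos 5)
Your proof is correct, but it takes a genuinely different route from the paper's. The paper labels via a \emph{sequential greedy random} coloring (fix an order, color each vertex uniformly among the colors not used by already-colored neighbors). That distribution is \emph{not} the uniform measure on proper colorings, and the paper explicitly flags that the conditional law of $\ell(x)$ given all other colors is therefore \emph{not} uniform on $S_x$; the bulk of the paper's proof consists of two lemmas showing (i) a locality property, that the conditional law depends only on $\pi|_{N(x)}$, and (ii) that the conditional law is within a multiplicative factor $\sqrt{e}$ of uniform, which is then pushed through to obtain a collision probability of at least $1/(9ed)$. You sidestep this entirely by labeling with a \emph{uniformly random} proper $3d$-coloring, for which the Gibbs/domain-Markov property gives that, conditional on $c|_{V\setminus\{u,v\}}$, the pair $(c(u),c(v))$ is exactly uniform on $A_u \times A_v$ (using that $u,v$ are non-adjacent), yielding the cleaner pointwise bound $|A_u\cap A_v|/(|A_u||A_v|) \ge 1/(9d)$ and the tower-property step to handle the adversary's partial view $c|_Q$. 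Your approach buys exact conditional uniformity and a slightly better constant; the paper's sequential scheme buys a trivially efficient encoder (one greedy pass), whereas exactly sampling a uniform proper coloring is a priori a harder task, even if approximate sampling at $3d > 2d$ colors is polynomial via Glauber dynamics. Since the theorem as stated does not demand efficiency, both are valid proofs. One caution on your closing remark: with only $d+1$ colors the intersection $A_u \cap A_v$ can actually be empty, so the pointwise collision bound genuinely fails there --- the slack to $\Theta(d)$ colors is not just a convenience but is needed for the argument's structure.
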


This theorem also naturally implies, via amplification, the existence of a labeling scheme using labels of size $O(d\log d)$ having a constant probability of forgery. In \cref{covering_by_matchings} we show that this can be improved. We start from the projective plane method for matching labeling and cover the labeled graph using at most $d+1$ matching (this is always possible by Vizing's theorem), effectively using $d+1$ projective plane schemes simultaneously. The size after this first step would still be $O(d\log d)$. 

The technique employed to reduce the size to $O(d)$ bits is to use a data structure for retrieval.
The key idea is the following: if we think of the label of each vertex as a collection of smaller labels (``sub-labels")), our setting doesn't require us to explicitly store for every vertex which matching has which smaller label. We only need to retrieve the correct smaller label when decoding. This is similar to the retrieval problem, where we have a subset $S$ of keys from a large universe and each key is associated with a string. The requirement is to answer retrieval queries with the correct string for keys in $S$ and arbitrary string for keys not in $S$. We prove that the construction of a retrieval data structure by Dietzfelbinger and Pagh~\cite{dietzfelbinger2008succinct} also satisfies an additional property that is very useful for    our setting and construction: we want the output of queries retrieving strings for keys {\em not} in $S$ be {\em unpredictable} (the outcome should have high entropy from the adversary's point of view). We get the following:

\begin{theorem}
\label{thm: general}
For any $\varepsilon\in (0,1)$ there is a labeling scheme (following \cref{resilience_definition}) for the family of graphs with maximal degree $d$ using labels of size roughly $2d\log (1/\varepsilon)$ and probability of forgery $\varepsilon$. The scheme is computationally efficient: there is a polynomial time algorithm that given a graph finds the appropriate labeling. 
\end{theorem}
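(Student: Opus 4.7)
The plan is to combine the matching scheme of \cref{thm: projective} with an edge decomposition and a space-efficient retrieval structure. By Vizing's theorem, I would properly edge-color $G$ using at most $d+1$ colors, which partitions $E(G)$ into matchings $M_1,\ldots,M_{d+1}$. For each matching $M_i$ I would independently invoke the projective plane scheme from \cref{projective_plane_scheme} tuned to some per-color forgery probability $\varepsilon'$, producing a sub-label $s_i(v)$ of roughly $2\log(1/\varepsilon')$ bits for every vertex $v$ incident to $M_i$. The decoder for the full scheme tries every color: given $\ell(u)$ and $\ell(v)$ it extracts, for each $i\in[d+1]$, candidate sub-labels $s_i(u)$ and $s_i(v)$, and declares ``adjacent'' iff the projective plane decoder accepts for some $i$.

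Since $\deg_G(v) \le d$, the vertex $v$ appears in at most $d$ of the matchings, so only $|S_v|\le d$ of its sub-labels are meaningful. Storing all $d+1$ sub-labels by naive concatenation costs $(d+1)\cdot 2\log(1/\varepsilon')$ bits, which is $O(d\log d)$ once we set $\varepsilon'$ small enough to absorb a union bound over colors. The key space saving is to replace the concatenation with the Dietzfelbinger--Pagh retrieval data structure~\cite{dietzfelbinger2008succinct}, storing the partial function $i \mapsto s_i(v)$ only for $i\in S_v$; the size of this structure is proportional to $|S_v|$ times the sub-label length, yielding roughly $2d\log(1/\varepsilon)$ bits per vertex. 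Efficiency is immediate since Vizing's coloring, finite projective plane arithmetic, and the Dietzfelbinger--Pagh construction are all polynomial time.

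Correctness on edges is automatic: if $(u,v)\in M_i$ then the retrieval returns the true $s_i(u), s_i(v)$ and the decoder of \cref{thm: projective} accepts. Bounding the forgery probability on a non-edge pair $(x_{k+1}, x_{k+2})$ is done color by color. In any color $i$ where both candidates lie in $M_i$ but are not matched to each other, the matching scheme's forgery bound applies against the adaptive adversary, because across different colors the sub-label randomness is independent, and within color $i$ the only queried vertices whose labels could carry information about $s_i(x_{k+1})$ or $s_i(x_{k+2})$ are their $M_i$-neighbors, none of which are the candidates. In any color $i$ where at least one candidate is not matched in $M_i$, the retrieval structure is queried on a key outside $S$; here the decisive ingredient is an \emph{unpredictability lemma} for Dietzfelbinger--Pagh asserting that, conditional on the observed parts of the encoding and on all stored values, the value returned at any out-of-$S$ key is (essentially) uniformly distributed and independent of the sub-labels of the candidate's real matching partners. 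With that lemma the per-color probability of a false accept is at most $\varepsilon'$ and a union bound over the $d+1$ colors gives total forgery probability $\varepsilon$.

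The main obstacle is proving the unpredictability lemma: Dietzfelbinger--Pagh was designed for succinctness and retrieval speed, not for cryptographic-style hiding, so I would need to reopen their construction (built from random hash functions and the solution of a sparse linear system over $\mathbb{F}_2$) and argue that, once the hash functions are fixed, solving the linear system determines the in-$S$ outputs exactly while leaving the out-of-$S$ outputs as a uniform function of the unconstrained solution bits, so that the adversary's view of $\ell(x_1),\ldots,\ell(x_k)$ gives essentially no information about retrieval outputs at keys outside the respective sets. Coupling this hiding property with the per-color independence of the underlying projective plane labels is what allows the analysis sketched above to carry over from the non-adaptive to the adaptive setting.
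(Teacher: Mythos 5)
Your outline follows the paper's high-level plan (decompose into matchings, use the projective-plane sub-labels, compress with a Dietzfelbinger--Pagh retrieval structure, and prove an unpredictability/min-entropy property for out-of-set retrievals, which is indeed exactly the key lemma the paper establishes). However, there is a concrete gap in how you decompose the graph and decode, and it causes the scheme to miss the stated bound.

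You edge-color via Vizing, obtaining matchings $M_1,\ldots,M_{d+1}$ indexed by a color that is a property of the \emph{edge}, not of the vertex. Consequently, on input $\ell(u),\ell(v)$, your decoder has no way to know which single matching index $i$ could carry the edge $(u,v)$, so it must ``try every color'' and accept if \emph{any} of the $d+1$ projective-plane tests passes. That introduces a union bound over $d+1$ events, forcing the per-matching forgery probability down to $\varepsilon' = \varepsilon/(d+1)$, hence sub-labels of length $2\log((d+1)/\varepsilon)$ bits. Even after the retrieval structure eliminates the cost of storing matching indices, you still pay $d\cdot 2\log((d+1)/\varepsilon) = 2d\log(1/\varepsilon) + \Theta(d\log d)$ bits. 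That is the $O(d\log d)$ bound already obtained by the simpler ``covering by matchings'' argument; the retrieval structure alone does not break the $\log d$ factor, because the bottleneck is the union bound at decoding, not the storage of indices.

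The missing idea is to replace the Vizing \emph{edge} coloring by a distance-2 \emph{vertex} coloring $\sigma\colon V\to C$ with $|C|=d^2+1$ colors. Then every unordered pair of colors $\{c_1,c_2\}$ induces a matching $G_{c_1,c_2}$, and each vertex stores its own color explicitly (only $O(\log d)$ additive bits). Given $\ell(x)$ and $\ell(y)$, the decoder reads $\sigma(x)$ and $\sigma(y)$ and performs exactly \emph{one} retrieval test, for the single matching $G_{\sigma(x),\sigma(y)}$; there is no union bound over matchings at all. The forgery probability is then governed by a single projective-plane test plus the min-entropy guarantee for out-of-set retrievals, so $\varepsilon'$ can be taken to be $\varepsilon$ itself (up to constants) and the label size is $2d\log(4/\varepsilon)+O(\log d)+O(\log\log d)$, matching the claim. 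In short: you have the right retrieval lemma and the right building block for matchings, but you need the decoder to identify the relevant matching from the two labels rather than enumerating all of them, and distance-2 vertex coloring is what makes that possible.
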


If we interpret this result as regarding probabilistic embedding, then it says that the family of max degree $d$ graphs has a probabilistic universal graph of size $O((1/\varepsilon)^{2d})$ where any graph from the family  can be embedded and a faulty pair cannot be found, even in an  adaptive manner, with probability better than $\varepsilon$.

Also, note that the size of this construction is within a constant of the size of the construction for the non-adversarial setting (see \cref{apdx:degree_d_non_adversarial}).

\paragraph{Lower Bound:}
To conclude the characterization of schemes for graphs with maximum degree $d$, we show a linear in $d$ lower bound (in \cref{section:learning}).
Our major finding is that the aforementioned bound is tight not only w.r.t.\ the {\em family} of graphs of maximum degree $d$, but also holds true for any {\em specific} graph, provided that there are a sufficient number of vertices of degree $d$, e.g.\ a $d$-ary tree.

We introduce an adversarial strategy based on the algorithm for learning {\em adaptively changing distributions} (ACDs) by Naor and Rothblum~\cite{naor2006learning}. The general task is similar to impersonating one party in a conversation where two legitimate parties share a secret 
and the objective is to identify impersonators by observing the public messages exchanged between them.
The ACD is the distribution over the next message, and the objective is to produce a distribution that is close to the real one without access to the secret. In our setting, we can think of the secret as the label of a vertex $x$ which the adversary has not seen, and the messages are the labels given to the neighbors $\{\gamma^x_1,\ldots,\gamma^x_d\}\subseteq N(x)$. The main question is how close can the adversary get to the distribution over the labels for $\gamma^x_i$ by observing the labels of $\gamma^x_1,\ldots,\gamma^x_{i-1}$. A key insight by \cite{naor2006learning} which is also used in the variant described in \cref{section:lower_bound_strategy} is that we expect to either predict the label of $\gamma^x_i$ (in the sense of having a close distribution to it) or to learn significant information about the label of $x$. This is the case since when the next neighbor label is hard to predict, the adversary gains a lot of information about the label of $x$ by querying the next neighbor.

When the maximal degree of the graph is large enough, as compared to the label size, then the adversary is able to execute enough rounds of the algorithm to effectively learn the distribution over labels of a vertex that wasn't queried and select candidates for forgery accordingly. This leads to the following result:

\begin{theorem}
\label{thm: lower}
Let $\alpha\in (0,1)$. For every $d\in\mathbb{N}$, there exists some $n>0$ such that for any graph containing $n$ non-adjacent vertices of degree at least $d$ and any labeling of that graph (following \cref{resilience_definition}) using labels of size less than $\Omega(d\cdot\alpha^4)$ bits, the probability of forgery by an adversary is at least $1-\alpha$.
\end{theorem}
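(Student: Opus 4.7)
The plan is to instantiate the adaptively changing distributions (ACD) learning framework of Naor--Rothblum~\cite{naor2006learning} as sketched in the introduction. The adversary will commit to a graph $G$ containing the promised $n$ non-adjacent vertices of degree at least $d$, fix a target vertex $x$ of degree exactly $d$ with neighbors $\gamma^x_1,\ldots,\gamma^x_d$, and query them one at a time. Throughout, it maintains a posterior distribution $D_i$ on $\ell(x)$ given the labels revealed so far, together with the prediction $\widehat{P}_i$ on $\ell(\gamma^x_i)$ that $D_{i-1}$ and the (publicly known) encoder induce.

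The core of the argument is the ACD dichotomy: round $i$ is either \emph{predictable}, meaning $\Delta(\widehat{P}_i,P_i)\le \alpha/4$ where $P_i$ is the true conditional distribution, or \emph{informative}, meaning a suitable potential --- an expected KL-type divergence between successive posteriors on $\ell(x)$ --- increases by $\Omega(\alpha^2)$. Since each label has at most $c$ bits, this potential stays bounded by $O(c)$, so at most $O(c/\alpha^2)$ rounds can be informative. Substituting the hypothesis $c = O(d\alpha^4)$ yields at most $O(d\alpha^2)$ informative rounds among the $d$ queries. A telescoping argument along $D_0,D_1,\ldots,D_d$ then shows that the adversary's final simulated distribution $\widehat{D}_x$ is within statistical distance $\alpha/4$ of the true posterior of $\ell(x)$ conditioned on $\ell(\gamma^x_1),\ldots,\ell(\gamma^x_d)$.

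I next run the same procedure on a second target $x'$ of degree at least $d$ that is non-adjacent to $x$ and shares no neighbour with $x$; such an $x'$ exists by taking $n$ large enough and greedily pruning the promised non-adjacent vertices, since each vertex conflicts with only $O(d^2)$ others. This yields $\widehat{D}_{x'}$ within $\alpha/4$ of the true posterior of $\ell(x')$ given its own neighbour labels. The adversary then declares the candidate pair $(x,x')$, which is non-adjacent and has never been queried.

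The main obstacle I expect is the last step: lower-bounding $\mathds{P}\left[\dec(\ell(x),\ell(x'))=1\right]$ by $1-\alpha$. Statistical closeness reduces this to lower-bounding $\mathds{P}_{a\sim\widehat{D}_x,\,b\sim\widehat{D}_{x'}}\left[\dec(a,b)=1\right]$. To push this all the way to $1-\alpha$ I will exploit the one-sided correctness of the scheme on edges: any label in the support of $D_x$ must decode as adjacent to each of the $d$ queried labels $\ell(\gamma^x_i)$, and symmetrically for $D_{x'}$. Translating these strong consistency constraints into a combinatorial statement over the $2^c$-size label space --- using $c = O(d\alpha^4)$ to force the supports of $\widehat{D}_x$ and $\widehat{D}_{x'}$ to concentrate on label sets with very large pairwise decoder-overlap --- is the delicate piece that closes the argument, and is where I expect most of the technical work to live.
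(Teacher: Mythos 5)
Your proposal correctly identifies the ACD learning machinery of Naor--Rothblum as the engine, and the potential/entropy dichotomy (\emph{either predict or learn}) is exactly the right tool for bounding how many queries a center can absorb before its label is effectively pinned down. However, the plan has a fundamental flaw in the forgery step, and it is not merely the ``delicate technical piece'' you flag; it is a wrong target that no amount of combinatorial work over the label space will repair.

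You propose to learn the posteriors on $\ell(x)$ and $\ell(x')$ for two far-apart, unqueried centers, and then declare $(x,x')$ as the candidate pair. But $x$ and $x'$ are genuinely non-adjacent, so a correct scheme \emph{should} output $0$ on $(\ell(x),\ell(x'))$, and learning the two posteriors gives you no leverage to make the decoder output $1$. For a concrete failure, take the projective-plane scheme: two non-adjacent vertices of the same ``type'' (both points, say) \emph{never} decode to $1$, regardless of what the adversary knows about their labels. Moreover, your one-sided-error consistency constraints pin the supports of $\widehat{D}_x$ and $\widehat{D}_{x'}$ to labels compatible with their \emph{own} respective neighborhoods; nothing relates the support of $\widehat{D}_x$ to the support of $\widehat{D}_{x'}$, let alone forces pairwise decoder-acceptance between them. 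What the paper does instead is pair a center with a \emph{neighbor of a different center}: the object being learned is the distribution $D_i^{\ell(x)}$ over the label of $x$'s next (unqueried) neighbor, and this distribution is, by one-sided correctness, supported entirely on labels that $\ell(x)$ accepts. The adversary then finds two centers $x,y$ with close learned next-neighbor distributions $D_x\approx D_y$ and declares $(x,\gamma^y_{j+1})$ where $\gamma^y_{j+1}$ is $y$'s next unqueried neighbor. Since $D_j^{\ell(y)}\approx D_y\approx D_x\approx D_i^{\ell(x)}$, and the last has zero mass on labels that $\ell(x)$ rejects, the label of $\gamma^y_{j+1}$ is accepted by $\ell(x)$ with probability $1-O(\varepsilon+\delta)$, which is the forgery.

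This also explains a second structural omission in your proposal: with only two centers there is no reason for any pair of learned distributions to be close. The paper runs the learning procedure over a large collection $\mathcal{C}$ of at least roughly $(2/\varepsilon)^{2^s}$ pairwise-distant centers (double-exponential in the label size $s$; later improved to $2^{O(s^2/\varepsilon^2)}$ via characterizing multisets), and applies a packing argument in $\mathbb{R}^{2^s}$ to guarantee that two stable centers have $\varepsilon$-close next-neighbor distributions. Separately, the entropy-drop lemma together with a probabilistic-recurrence bound controls the fraction of ``bad'' centers that get fully queried without becoming stable. Your proposal conflates two distinct objects (posterior on the center's label vs.\ predicted distribution of the next neighbor's label), omits the pigeonhole step that makes two such predicted distributions collide, and consequently arrives at a candidate pair for which no forgery argument exists.
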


The result actually refers to any particular (large enough) graph in the family of graphs with maximal degree $d$ and not just to the family of graphs with maximal degree $d$. Since this family is relatively large, it is not surprising that labels of at least $d$ bit are required even in the non-adversarial setting (see \cref{apdx:degree_d_non_adversarial}). Our claim is stronger in the sense that even if the graph is known in advance, and even if it has a particular structure such as a tree, labeling it using short labels (compared to $d$) will result in a probability of forgery that is close to 1. That is as long as the graph contains enough non-adjacent vertices of degree $d$.

A motivating example for this lower bound would be a tree with degree $d$. It is known that in the non-adversarial setting one can label it using labels of constant size (independent of $d$), but as we show this is impossible in the adaptive adversarial setting.

\paragraph{How large should \texorpdfstring{$n$}{n} be?} The proof of Theorem~\ref{thm: lower} actually yields a bound on $n$ which is double exponential in the degree $d$, so for larger $d$, say as in the hypercube (a specific graph on $2^d$ vertices) this results in a $\Omega(\log d)$ lower bound for the label size in the adversarial setting. In \cref{section:smmpc} we improve the analysis to only require $2^{O(d^2)}$ vertices of high degree. This is done using a technique from the lower bound on the communication complexity of computing equality in the simultaneous messages model with private coins (SMMPC) by Babai and Kimmel~\cite{BabaiK97}. This yields a $\Omega(\sqrt{d})$ lower bound on the label size for the hypercube.

Furthermore, we note that the SMMPC equality protocol can also be used to construct adversarial resilient schemes using labels of size $O(\sqrt{\log n})$ (where $n$ is the size of the graph) for certain graphs. This proves useful for trees where $d$ is large, say $\Theta(\log n)$, and for the hypercube as well. In fact, the construction for the hypercube shows that the previously mentioned lower bound is in fact tight.

\subsection{A Discussion of The Model}\label{section:discussion}
We consider some variants of the adversarial model described above and see the consequences.

\paragraph{The Single Vertex Guess Model:}
A possible definition of resilience against an adaptive adversary is where one of the vertices of the candidate pair was queried before, but the other one was not. In this model, the adversary asks for the labels $x_1, x_2, \ldots, x_k\in V(G)$ and then only has to guess a vertex $x_{k+1}$ that has not appeared so far such that the decoding for the pair $\ell (x_k), \ell (x_{k+1})$ is wrong (the pairing with $x_k$ is w.l.o.g. as the adversary can ask about any vertex again). Again, we consider a one-sided error of the type where $(x_k,x_{k+1})\notin E(G)$ but the decoding of their labels is 1. This definition is impossible to widely satisfy due to the following strategy for the adversary:

\begin{claim}
Consider some labeling scheme using less than $m$ labels and suppose a graph $G$ containing an induced matching on $2m$ vertices is labeled using the scheme. Then, the adversary in the single vertex model can win against such a labeling of $G$ with probability 1.
\end{claim}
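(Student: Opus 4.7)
The plan is to exhibit a deterministic adversary that wins with probability $1$ by forcing a label collision via pigeonhole and then exploiting the fact that the scheme never errs on edges. Denote the induced matching on $2m$ vertices by $\{a_1 b_1, \ldots, a_m b_m\}$, so the only edges among these vertices are the $m$ matching edges $(a_i, b_i)$.

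The adversary first queries the $m$ vertices $b_1, b_2, \ldots, b_m$ and obtains their labels $\ell(b_1), \ldots, \ell(b_m)$. Since the scheme uses fewer than $m$ distinct labels, pigeonhole guarantees the existence of indices $i \neq j$ with $\ell(b_i) = \ell(b_j)$. The adversary then declares the candidate pair consisting of the (already queried) vertex $b_i$ together with the fresh vertex $a_j$; under the note that pairing with any previously queried vertex is without loss of generality in the single vertex guess model (the adversary may re-query $b_i$ to make it the last queried vertex), this is a valid declaration, and the total number of queries is at most $m+1 \le n-2$ because $n \ge 2m$.

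To see that the adversary wins, I appeal to the assumption that the scheme errs only on non-edges. Since $(a_j, b_j)$ is an edge of $G$, the decoder satisfies $\dec(\ell(a_j), \ell(b_j)) = 1$ with certainty over the randomness of $\ell$. Substituting $\ell(b_i)$ for $\ell(b_j)$, which is permitted because the two labels are equal and $\dec$ depends only on its label arguments, gives $\dec(\ell(a_j), \ell(b_i)) = 1$. However, the pair $(a_j, b_i)$ is a non-edge because the matching is induced and $i \neq j$, so the decoder is wrong on the declared pair. The only step requiring a moment of care is verifying that the candidate $a_j$ is genuinely fresh, which is immediate because the adversary queried only $b$-vertices; there is no deeper obstacle, as the whole argument reduces to counting labels and invoking the one-sided error guarantee.
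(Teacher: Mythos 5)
Your proof is correct and mirrors the paper's own argument: query one endpoint from each of the $m$ matching edges, invoke pigeonhole to find a label collision, and then exploit the one-sided error guarantee on the colliding edge to forge on the cross non-edge. The only cosmetic difference is notation ($a_i, b_i$ versus the paper's $x_i, y_i$) and your explicit note that $b_i$ can be re-queried last, which the paper handles with the same w.l.o.g.\ remark in its model description.
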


\begin{proof}
Let $G$ be the labeled graph with the induced matching on $2m$ vertices whose edges are $(x_i,y_i)\in E(G)$ for $i\in\left[m\right]$. The adversarial strategy is to query the labels of $x_1,\ldots,x_m$. Since there are less than $m$ labels, there is a pair $x_i$ and $x_j$ such that $i\neq j$ but $\ell(x_i)=\ell(x_j)$. The adversary will select as candidates $x_i$ and $y_j$.

Assuming the scheme only errs on non-edges, $\ell(x_j)$ must accept $\ell(y_j)$ but $\ell(x_i)=\ell(x_j)$ and $x_i$ is not adjacent to $y_j$. This means the adversary is guaranteed to win.
\end{proof}

\paragraph{Black Box Model:}
A different model of the adversary is inspired by the notion of an evaluation oracle introduced by Boyle, LaVigne and Vaikuntanathan~\cite{boyle2019adversarially} (called  there Evaluation-Oracle PPH\footnote{Boyle, LaVigne and Vaikuntanathan~\cite{boyle2019adversarially} as well as other, such as Holmgren et al.~\cite{HolmgrenLTW22}, consider Property Preserving Hashing which is related to sketching, but assumes a particular function or graph, rather than a family, and the label or sketch is the value of a hash function applied to the name. Their notion of 'robust' is related to our notion of adversarial resilience. They suggest a hierarchy of adversarial access to the hash function assigning the label.}). This adversary only has ``black-box" access to the labeling scheme, in the sense that the results of queries are outputs of the decoder for the labeling scheme and the adversary never observes any labels directly. More formally, the queries performed by the adversary are of the form $(x,y)\in V(G)\times V(G)$ and as a result it learns $\dec(\ell(x),\ell(y))$. The adversary can ask about any number of pairs. Then, it has to pick a new pair $(x',y')\notin E(G)$ and the adversary wins if
$\dec(\ell (x'),\ell (y'))=1$. Note that it is possible the vertices $x'$ and $y'$ participated in past queries. The only restriction is that the adversary did not query $(x',y')$ as a pair.

\begin{claim}
Let $\varepsilon\in \left(0,1\right)$. Consider some labeling scheme using $m$ labels and suppose a graph $G$ containing an induced matching on at least $\frac{4m^2}{\varepsilon}$ vertices is labeled using the scheme. Then, the adversary in the black box model can win against such a labeling of $G$ with probability at least $1-\varepsilon$.
\end{claim}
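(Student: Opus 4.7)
The plan is to have the adversary pick a random column index $j^*\in [N/2]$ of the induced matching (where $N\geq 4m^2/\varepsilon$ is the number of matching vertices, so $N/2\geq 2m^2/\varepsilon$ is the number of matching edges) and then leave every query involving $y_{j^*}$ unasked. Concretely, the adversary asks $\dec(\ell(x_i),\ell(y_j))$ for every pair $(i,j)\in [N/2]^2$ with $i\neq j$ and $j\neq j^*$. Because the matching is induced, every such pair is a non-edge, so the queries are legal. These queries fully reveal the row $R_{j^*}=(\dec(\ell(x_{j^*}),\ell(y_j)))_{j\neq j^*}$ and reveal each other row $R_k$ on the support $\{j\suchthat j\neq k,\ j\neq j^*\}$. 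The adversary then picks any $k\neq j^*$ whose partial row $R_k$ agrees with $R_{j^*}$ on their common support and submits the (unqueried) non-edge $(x_k,y_{j^*})$ as its forgery candidate.

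Success is shown by two pigeonhole events over the random $j^*$. Event $A$: the label $\ell(x_{j^*})$ is non-singleton among $\{\ell(x_i)\}_{i\in[N/2]}$; since at most $m$ of the $N/2$ $x$-vertices can host a singleton label, $\Pr[A]\geq 1-m/(N/2) \geq 1-\varepsilon/(2m)$. Event $B$: the label $\ell(y_{j^*})$ appears at least three times in $\{\ell(y_j)\}_{j\in[N/2]}$; at most $2m$ $y$-vertices live in label-classes of size $\leq 2$, so $\Pr[B]\geq 1-2m/(N/2) = 1-\varepsilon/m$.

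Under $A$ some $k^*\neq j^*$ satisfies $\ell(x_{k^*})=\ell(x_{j^*})$, so $R_{k^*}$ agrees with $R_{j^*}$ on their common support and the adversary's matching-row set is non-empty. Under $B$ there are at least two indices $j'\neq j^*$ with $\ell(y_{j'})=\ell(y_{j^*})$, so for whichever $k$ the adversary has selected there is such a $j'$ different from both $j^*$ and $k$; this $j'$ lies in the common support of $R_k$ and $R_{j^*}$. The observed equality $\dec(\ell(x_k),\ell(y_{j'}))=\dec(\ell(x_{j^*}),\ell(y_{j'}))$, combined with $\ell(y_{j'})=\ell(y_{j^*})$ and the determinism of $\dec$ in its label inputs, yields $\dec(\ell(x_k),\ell(y_{j^*}))=\dec(\ell(x_{j^*}),\ell(y_{j^*}))=1$; the right-hand side equals $1$ because $(x_{j^*},y_{j^*})$ is a matching edge and the scheme errs only on non-edges. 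Hence $(x_k,y_{j^*})$ is a forged non-edge.

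A union bound gives failure probability at most $\varepsilon/(2m)+\varepsilon/m=3\varepsilon/(2m)\leq\varepsilon$ for $m\geq 2$; the degenerate case $m=1$ is trivial since a one-label scheme must accept its single label against itself to preserve edges and therefore accepts every non-edge too. The main obstacle is that matching partial rows on the queried columns do not by themselves determine the entry at the unqueried column $j^*$, and the $k$ selected by the adversary need not satisfy $\ell(x_k)=\ell(x_{j^*})$; this is sidestepped by using the duplicate index $j'$ provided by event $B$ as a proxy for column $j^*$, leveraging only that $\dec$ is a deterministic function of its two label inputs.
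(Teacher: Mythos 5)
Your proof is correct. The overall strategy — leave one vertex $y_{j^*}$ unqueried, observe the query matrix on the remaining columns, then pick a row $R_k$ that matches $R_{j^*}$ on the common support and submit $(x_k,y_{j^*})$ — is the same as the paper's, which picks a random edge $(x_k,y_k)$, queries everything not involving $x_k$ or $y_k$, then matches $x_k$'s row against the equivalence classes of the other vertices. Where you diverge is the queries and, more importantly, the counting. You query only $x$--$y$ pairs (the paper also queries $x$--$x$ and $y$--$y$ pairs, which play no real role). More substantively, you decouple the required collision into two events: $\ell(x_{j^*})$ is non-singleton (so a matching row exists) and $\ell(y_{j^*})$ appears at least three times (so there is a proxy column $j'\neq k,j^*$ on which the observed row agreement can be transferred to the hidden column $j^*$). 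The paper instead uses a single joint event — another edge $k'$ whose label \emph{pair} equals that of edge $k$ — and counts over the $m^2$ possible label pairs. Your decomposition yields a slightly tighter bound ($3\varepsilon/(2m)$) and, notably, makes explicit the proxy step the paper glosses over: the paper writes as though the adversary can select $x_{k'}$ from the equivalence class $V_i$, but the adversary cannot distinguish $x_{k'}$ from other members of $V_i$; the correct argument, which your event $B$ and the index $j'$ supply (and which also repairs the paper's version), is that any selected $k$ with a matching row must agree with $R_{j^*}$ at a column whose $y$-label duplicates $\ell(y_{j^*})$, and determinism of $\dec$ then forces acceptance at the hidden column. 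You also handle the degenerate $m=1$ case explicitly, which the paper omits.
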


\begin{proof}
Let $G$ be the labeled graph containing an induced matching on $2n$ vertices whose edges are $(x_i,y_i)\in E(G)$ for $i\in\left[n\right]$. Assume that $G$ is labeled using $m$ labels. We suggest an adversarial strategy based the following observation: every label is characterized by its relation to other labels, i.e.\ we can determine the label of a vertex (up to equivalence) by querying it with vertices having all possible labels.

The adversarial strategy is to first select a random edge $(x_k,y_k)\in E(G)$ and then query all pairs $(x_i,x_j),(x_i,y_j),(y_i,y_j)$ for $i,j\neq k$. The queries performed in this step effectively split all vertices in the matching (except $x_k$ and $y_k$) into sets $V_1,\ldots,V_m$. The vertices in each set $V_i$ either have the same label or equivalent labels (in the sense that they accept or reject the same vertices in the matching). Note that it could be the case that there are additional labels used outside the matching, i.e.\ there are empty sets among $V_1,\ldots,V_m$, but this is irrelevant unless those labels are used for either $x_k$ or $y_k.$

Assume there is some $k'\neq k$ such that $(x_k,y_k)$ and $(x_{k'},y_{k'})$ are labeled by the same pair of labels. In the next step, the adversary queries $x_k$ together with all vertices in the matching except $y_k$. Let $V_i$ be the set whose vertices behave the same as $x_k$ in relation to all other vertices in the matching. By our assumption, $V_i$ is not empty and w.l.o.g. $x_{k'}\in V_i$. Then, the candidates selected by the adversary are $y_k$ and $x_{k'}$. Since the scheme only errs on non-edges and $\ell(x_y)$ accepts $\ell(y_k)$, so will $\ell(x_{k'})$. As a result, the adversary wins.

It remains to show that the assumption about $k'$ holds with probability $1-\varepsilon$ for large enough $n$. We say the edge $k$ in the matching is common if there are at least $\frac{\varepsilon n}{m^2}$ edges in the matching using the same pair of labels. By the union bound on pairs of labels, our randomly chosen edge $k$ is uncommon with probability at most $\varepsilon$. When it is common, the edge $k'$ with the same pair of labels exists for $n\geq\frac{2m^2}{\varepsilon}$.
\end{proof}

\paragraph{Does the adversary know it is about to win?}
A stronger notion of adversarial resilience is the ``Bet-or-Pass" model described by Naor and Oved~\cite{naor2022bet}. The main difference between this model and the game described in \cref{resilience_definition} is that the adversary can either output a pair of candidates or choose to forfeit the current iteration of the game. When it goes forward it should have a probability no better than $\varepsilon$ of succeeding and the adversary should decide to bet with a probability greater than $0$. Our main constructions maintain their resilience against a ``Bet-or-Pass" adversary.

\paragraph{What side is the error on?}
We provide some motivation for the choice to focus only on adjacency sketches that err on non-edges, i.e.\ for every edge $(x,y)\in E(G)$ in the labeled graph $G$ the decoder must satisfy $\dec(\ell(x),\ell(y))=1$ and errors only occur for pairs of vertices which are not adjacent.

As mentioned above, when considering one-sided error, it is known that randomness can be used to improve label size compared to the deterministic setting for sketches which only err on non-edges. One example is adjacency labeling for trees which requires $\Omega (\log n)$-bit labels in the deterministic case but can be done using constant size labels in the probabilistic setting as seen in \cite{fraigniaud2009randomized}. On the other hand they also showed that this is not the case for probabilistic sketches which only err on edges, making the study of the adversarial resilience of such sketches irrelevant.

\begin{theorem}[\cite{fraigniaud2009randomized}]
Let $\varepsilon \in (0,1)$. Any adjacency sketch for trees that only errs on edges with probability at most $\varepsilon$ must use labels of size $\log n+\log (1-\varepsilon)-O(1)$ for $n$-vertex trees.
\end{theorem}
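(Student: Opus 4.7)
The plan is to exhibit a specific ``hard'' $n$-vertex tree -- the path $P_n = v_1 v_2 \cdots v_n$ -- for which any qualifying sketch needs labels of size $\log((1-\varepsilon)n) - O(1)$, from which the theorem follows. (Note that e.g.\ the star is not hard: a two-label scheme with $\dec(0,0)=\dec(1,1)=0$ and $\dec(0,1)=1$ labels $K_{1,n-1}$ deterministically with $k=1$, so the lower bound must come from a different tree.)

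Since the sketch never errs on non-edges, in every realization of the labels $\ell$ one has $\dec(\ell(v_i),\ell(v_j))=0$ whenever $|i-j|\ge 2$, while each of the $n-1$ path edges is correctly decoded with probability at least $1-\varepsilon$. In particular, the ``decoded graph'' $G(\ell)=\{(u,v):\dec(\ell(u),\ell(v))=1\}$ is always a subgraph of $P_n$, so its connected components are sub-paths, and by linearity of expectation the number of edge errors $E=(n-1)-|E(G(\ell))|$ satisfies $\mathbb{E}[E]\le \varepsilon(n-1)$.

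The central step is a combinatorial lemma asserting that the labels on $P_n$ are mostly distinct. Taking $\dec$ to be symmetric (undirected adjacency), suppose $\ell(v_i)=\ell(v_j)=a$ with $|i-j|\ge 2$, and that $v_i$ has a path-neighbor $v_{i'}$ (with $i'\in\{i-1,i+1\}$) such that the edge $(v_i,v_{i'})$ is correctly decoded and $|j-i'|\ge 2$. Then the edge constraint gives $\dec(a,\ell(v_{i'}))=1$, while the non-edge constraint on $(v_j,v_{i'})$ forces $\dec(\ell(v_j),\ell(v_{i'}))=\dec(a,\ell(v_{i'}))=0$, a contradiction. Hence every label collision at path-distance $\ge 2$ forces nearby edges to be in error, and a modest case analysis (for sub-paths of size at most $3$ and for the two path endpoints) yields that the number of vertices sharing a label with some other vertex in the realization is $O(E)$. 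Letting $D$ be the number of distinct labels used, this gives $D\ge n-O(E)$ pointwise, so $\mathbb{E}[D]\ge (1-O(\varepsilon))n$.

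Since $D\le 2^k$ always, we conclude $2^k\ge \mathbb{E}[D]\ge (1-O(\varepsilon))n$, equivalently $k\ge \log n+\log(1-\varepsilon)-O(1)$. The main obstacle is the charging of label collisions to edge errors in the combinatorial lemma: I need the accounting to be sharp enough that the implicit constant in front of $\varepsilon$ is absorbed into $O(1)$ after the logarithm, yielding precisely $\log(1-\varepsilon)$ rather than a weaker $\log(1-c\varepsilon)$. A secondary technicality is reducing from an asymmetric $\dec$ to the symmetric case, which can be handled by symmetrising at the cost of an absorbable constant.
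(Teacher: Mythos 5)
The paper cites this theorem from Fraigniaud and Korman~\cite{fraigniaud2009randomized} without reproducing a proof, so I cannot compare line-by-line; but the high-level strategy you pursue (hard instance is the path $P_n$; the decoded graph is always a subgraph of $P_n$ because non-edges are never declared; label collisions at path-distance $\ge 2$ force edge errors; deduce a lower bound on the number of distinct labels) is the right one and matches the spirit of the cited argument.

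There is, however, a genuine quantitative gap that you yourself flag, and it is not a cosmetic one. Establishing only that the number of colliding vertices is $O(E)$ gives $D \ge n - cE$ for some constant $c$ that your case analysis leaves unpinned, and hence only $k \ge \log n + \log(1 - c\varepsilon) - O(1)$. For $\varepsilon$ near $1/c$ this is vacuous, whereas the theorem is supposed to hold for all $\varepsilon \in (0,1)$. So the ``absorb the constant into $O(1)$ after the log'' hope does not work: $\log(1 - c\varepsilon)$ is not $\log(1-\varepsilon) - O(1)$ when $c>1$. The fix is to sharpen the combinatorial lemma to give $D \ge (n-1) - E$ exactly, with no multiplicative slack, and this is cleaner than the case analysis you outline. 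Define $S = \{\, v_i : 1 \le i \le n-1,\ \dec(\ell(v_i),\ell(v_{i+1})) = 1 \,\}$, so $|S| = (n-1) - E$. All labels on $S$ are distinct: if $v_i, v_j \in S$ with $i < j$ and $\ell(v_i) = \ell(v_j)$, then since $v_j \in S$ we have $\dec(\ell(v_j),\ell(v_{j+1})) = 1$, hence $\dec(\ell(v_i),\ell(v_{j+1})) = 1$; but $(v_i,v_{j+1})$ is a non-edge (because $j+1-i \ge 2$), contradicting the fact that the scheme never errs on non-edges. Thus $D \ge (n-1) - E$ pointwise, $\mathbb{E}[D] \ge (1-\varepsilon)(n-1)$, and $2^k \ge \mathbb{E}[D]$ gives $k \ge \log n + \log(1-\varepsilon) - o(1)$, which is what you want. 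Note also that this version of the argument uses only the one-directional event $\dec(\ell(v_j),\ell(v_{j+1}))=1$ and the non-edge constraint (which is automatically symmetric because both ordered pairs of a non-adjacent pair must decode to $0$), so the secondary technicality you raise about symmetrising an asymmetric decoder simply does not arise.
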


A possible relaxation of Definition~\ref{resilience_definition} is to allow two-sided errors, i.e.\ to err both on edges and non-edges (with an error probability of $1/3$ or some other fixed value bounded away from $1/2$). All the constructions (Theorems ~\ref{thm: projective}, \ref{thm: coloring}, and \ref{thm: general}) hold of course, and the lower bound (Theorem~\ref{thm: lower}) can be modified for the two-sided variant  as discussed in~\cref{lower bound for two-sided error}. We choose to focus on the one-sided variant due to the natural one-sided properties of our constructions and for the simplicity of some of the proofs.

\paragraph{Amplification with Two-Sided Error:}
The natural amplification of error technique is to employ several schemes independently in parallel and take the majority. We can get a probability of forgery $\varepsilon$ by using $k=O(\log (1/\varepsilon ))$ schemes with some fixed probability of forgery bounded away from $1/2$, but unlike \cref{amplification_method}, accept a pair of labels when at least $k/2$ of the schemes accept.

This technique is described in more detail in Proposition 2.2 of~\cite{harms2022randomized} for sketches in the non-adversarial case. In our case, we need to refer to a general argument for games against an adversary such as parallel repetition of single prover interactive proofs as in~\cref{amplification_method}.

The proof we mentioned in~\cref{sec: adaptive adversaries} (due to Goldreich~\cite{Goldreich98}) bounds the winning probability when using parallel repetition and where the adversary wins only if it wins in all games. This is also true for any subset $S\subseteq \left[k\right]$ of the $k$ repetitions: the probability  the adversary wins in all games in $S$ is the product of the probabilities of winning each game (again from~\cite{Goldreich98}). To apply this result to the case where the final winner is determined by the player who won in a majority of games, we can apply the generalized concentration bound (an extension of the Chernoff-Hoeffding bounds) by Panconesi and Srinivasan~\cite{panconesi1997randomized} (see also Impagliazzo and Kabanets~\cite{impagliazzo2010constructive}). These bounds do not assume independence of the events, just that for every subset, the probability that all of it be '1' should be the product of the individual probabilities.\footnote{For instance, Theorem 3.1 in \cite{impagliazzo2010constructive} says that if $X_1, X_2, \ldots, X_k$ are Boolean random variables and there is a $\delta$ such that for every set $S \subseteq  [k]$, $\mathds{P}[\wedge_{i \in S} X_i = 1] \leq \delta^{|S|}$. Then for any $\gamma$ such that $\delta \leq  \gamma \leq 1$, 
$$\mathds{P}\left[\sum_{i=1}^k X_i > \gamma k\right] \leq  
e^{-k  \cdot D(\gamma\parallel \delta)}
$$ where $D(\cdot\parallel \cdot)$ is the relative entropy function.} 

\section{Low Degree Graphs}\label{section:low_degree}

We present several labeling schemes for low-degree graphs (where each vertex has one or two neighbors). Matchings are of particular interest. The scheme we suggest for labeling a matching using projective planes (see \cref{projective_plane_scheme}) will also be used later on for our main positive results regarding graphs with maximal degree $d$. We also provide a tight lower bound for matching labeling (Theorem~\ref{matching_lower_bound} and Corollary~\ref{cor: tightness projective}).

\subsection{Coloring Paths}\label{coloring_paths}

The first schemes we examine have very simple decoding: the labels can be thought of as `colors'  and labels for two vertices are decoded as adjacent iff they have different colors. 
We will show how to use colors so as to label a collection of paths such that the probability of finding a wrong decoding (forgery by an adversary) is bounded by a constant, specifically $7/8$. The idea is to choose an arbitrary endpoint and color the path in a greedy and random manner with 6 colors starting at that endpoint. We continue in the same direction to its neighbor and color it using a random color different from the previous color and so on.

We want to argue that the probability of two non-adjacent vertices having the same color (given information about the other vertices) is at least some positive constant.

\begin{claim}
Given any configuration of colors generated by this scheme, suppose that we know all the colors other than vertices $x$ and $y$ which aren't neighbors. Then, the probability that the colors of $x$ and $y$ are the same is at least $1/8$.
\end{claim}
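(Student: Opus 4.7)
The plan is to exploit the Markov structure of the random coloring. I would first observe that the greedy procedure that colors each vertex uniformly from the five colors different from its predecessor defines a reversible Markov chain on each path, with symmetric transition rule: the chance of going from color $c$ to color $c'$ is $1/5$ if $c\neq c'$ and $0$ if $c=c'$.

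Next, using this Markov property (and the fact that $x$ and $y$ are non-adjacent, so even on the same path there is at least one vertex of known color strictly between them), I would argue that conditional on all known colors, the colors of $x$ and $y$ are independent random variables, each depending only on the colors of its immediate path-neighbors. A short Bayes calculation using the symmetric transition rule then shows that color$(x)$ is conditionally \emph{uniform} over the set $A\subseteq\{1,\ldots,6\}$ of colors different from $x$'s known neighbor color(s), and similarly color$(y)$ is uniform over an analogous set $B$. Since $x$ has at most two neighbors in the path (fewer at endpoints) and their colors are distinct only in the worst case, we get $|A|,|B|\geq 4$.

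Finally, by independence,
$$\Pr\bigl[\text{color}(x)=\text{color}(y)\bigr] \;=\; \frac{|A\cap B|}{|A|\cdot|B|},$$
and inclusion-exclusion inside $\{1,\ldots,6\}$ yields $|A\cap B|\geq |A|+|B|-6$. Plugging in the three cases $|A|,|B|\in\{4,5,6\}$ (minimum total $|A|+|B|=8$) shows the ratio is minimized when $|A|=|B|=4$ and $|A\cap B|=2$, giving $\tfrac{2}{16}=\tfrac{1}{8}$, as claimed.

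The only step requiring genuine care is the first one: verifying that conditioning on \emph{all} the other colors (not just the immediate neighbors) really reduces to conditioning on the neighbors, and that this conditional distribution is uniform on $A$. This follows from the Markov property combined with the observation that the product of the ``incoming'' and ``outgoing'' transition probabilities is the same constant $1/25$ for every admissible color of $x$, so normalization produces the uniform distribution on $A$. Beyond this, the argument is a clean counting step and I expect no substantive obstacle.
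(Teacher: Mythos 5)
Your proposal is correct and follows essentially the same route as the paper: both arguments reduce to showing that, conditioned on every other color, the colors of $x$ and $y$ are independent and each uniform over a set $S_x$ (resp.\ $S_y$) of at least $4$ admissible colors, so the collision probability is $|S_x\cap S_y|/(|S_x||S_y|)\geq 2/16=1/8$ in the worst case. You spell out the Markov-chain/Bayes justification for conditional uniformity and independence a bit more explicitly than the paper does, but the key observations and the final case count are the same.
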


\begin{proof}
For a vertex $x$, let $S_x$ be the set of colors {\em not} used by its neighbors. The size of $S_x$ is either $4$ or $5$, depending on whether its two neighbors have the same color or not.  Note that the distribution of the color of $x$ conditioned on the known colors is uniform from the set $S_x$. This is true because the color for $x$ is chosen at random to be different from the previous color and we know it is not the next color. The same also applies to $y$ with the set $S_y$. Since the palette is of size 6 we have $\left| S_x \cap S_y \right| \geq 1$ which means there is at least one color available to both.

If both $S_x$ and $S_y$ have size 4, then the probability that $x$ and $y$ have the same color is $1/8$, if one has size 4 and the other has size 5, then the probability is $3/20$ and if both have size 5, then the probability is $4/25$. In general, for $p>4$ colors the probability is at least $$\min{\left\{\frac{p-4}{(p-2)^2},\frac{p-3}{(p-2)(p-1)},\frac{p-2}{(p-1)^2}\right\}}=\frac{p-4}{(p-2)^2}\leq 1/8$$
with equality when $p=6$.
\end{proof}
 
Since in any color configuration generated by this scheme, the probability of two non-adjacent vertices to have the same color is at least $1/8$, then at any point the adversary decides on two non-adjacent vertices as its `victims' it wins against this scheme with probability at most $7/8$.

\subsection{Labeling a Matching}

The scheme we used for a collection of paths can also be used to label a matching, which is just a collection of single-edge paths. Assuming a palette size $p$, for every vertex $x$ the set $S_x$ has size $p-1$, so two non-adjacent vertices have a different color with probability at least $\frac{p-2}{(p-1)^2}\leq 1/4$ where the best is achieved with $p=3$ colors. Thus, the probability of forgery is at most $3/4$. We would like to use this as a building block in more general schemes so in this section we will try to improve the probability of forgery and investigate the limit of what can be achieved for a matching.

Note that one can take the previous scheme using 3 colors with a probability of forgery $3/4$ and reduce the probability of forgery by using several such labels in parallel as described in \cref{amplification_method}.
The amplified probability of forgery using $k$ copies in parallel is $(3/4)^k$.
This means that for any $\varepsilon\in\left(0,3/4\right)$, we reduce the probability of forgery to $\varepsilon$ by having labels of size $$\ceil{\frac{\log (3) \log (1/\varepsilon)}{\log (4/3)}}\approx\ceil{3.82\log (1/\varepsilon)}.$$
We will see that it is possible to obtain a better constant than $3.82$ and get down to $2$ (which is optimal).

\subsubsection{The Projective Plane Method}\label{projective_plane_scheme}
We suggest an efficient method for labeling a matching where the probability of forgery is roughly $1/\sqrt{m}$, when there are $m$ possible labels. As we will see in the next section, this is the best possible for any labeling scheme for a matching. The construction is similar to the technique used by Gilbert, MacWilliams and Sloane~\cite{gilbert1974codes} in 1974 in order to construct a message authentication code that can detect forgery. An adversary altering a message using that code will succeed with probability at most $1/\sqrt{K}$ where $K$ is the number of possible keys (and this is also the best possible key size for such a code).

The method is based on finite projective planes. Recall that a projective plane is a set of points, a set of lines and an incidence relation such that:
\begin{enumerate}
    \itemsep 0pt
    \item Any two distinct points are incident to a single line.
    \item Any two distinct lines are incident to a single point.
    \item There exists a set of four points such that no line is incident to three of the points.
\end{enumerate}
For a prime $p$ (or a prime power), a finite projective plane of order $p$ is known to exist. A finite projective plane of order $p$ has $p^2+p+1$ points (as well as lines) and $p+1$ points on each line (also lines through each point).

The labeling scheme using a finite projective plane of order $p$ is as follows:  On every edge of the matching $G$ label the endpoints by a random pair of a point and a line such that the point is on the line. Each label has size $\ceil{\log(p^2+p+1)}$ bits which is determined by the number of points and lines. We do not keep an additional bit to distinguish between points and lines because the finite projective plane of order $p$ is self-dual, so both points and lines can be represented by the same encoding. Two labels are decoded as an edge if and only if the point is indeed on the line.

\begin{claim}
\label{claim: projective}
Let $\varepsilon\in \left(0,1\right)$ and let $p$ be the first prime such that $p+1\geq 2/\varepsilon$. The construction using the projective plane of order $p$ yields a scheme with probability of forgery at most $\varepsilon$ and size at most $\ceil{2\log (4/\varepsilon)}$.
\end{claim}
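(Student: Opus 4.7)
The proof splits into a routine size calculation and a forgery-probability bound. For the size, Bertrand's postulate ensures $p < 4/\varepsilon - 1$ for $p$ the smallest prime with $p+1\geq 2/\varepsilon$, so each label, encoding one of the $p^2+p+1$ points/lines, uses at most $\ceil{\log(p^2+p+1)} \leq \ceil{\log((p+1)^2)} \leq \ceil{2\log(4/\varepsilon)}$ bits.

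For the forgery bound the plan is to exploit the independence of the labels across different edges of the matching. Fix any adversarial strategy. After the queries, let $Q$ be the queried vertex set and $\mathcal{V}$ the resulting view, and let $x,y$ be the unqueried non-adjacent candidates the adversary picks. By independence across edges, $\ell(x)$ and $\ell(y)$ are conditionally independent given $\mathcal{V}$, and the marginal of $\ell(x)$ depends only on whether the matching-partner $x'$ of $x$ lies in $Q$: if $x'\notin Q$ then $\ell(x)$ is uniform on all $p^2+p+1$ labels, while if $x'\in Q$ with $\ell(x')=a$ then $\ell(x)$ is uniform on the $p+1$ labels incident to $a$ (symmetrically for $y$). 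Since the adversary's overall success probability is at most the conditional probability $\Pr[\dec(\ell(x),\ell(y))=1\mid \mathcal{V}]$ maximized over its choice of $(x,y)$, it suffices to uniformly bound this conditional probability across all views.

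A case analysis on which of $x',y'$ are queried, using the projective-plane axioms that each label is incident to exactly $p+1$ others and that any two distinct labels share exactly one common incident label, yields three bounds: $(p+1)/(p^2+p+1)$ when at most one of $x',y'$ is in $Q$; $1/(p+1)$ when both are in $Q$ with non-incident labels $a,b$; and $(2p+1)/(p+1)^2$ in the extremal subcase where both partners are queried and their labels $a,b$ are incident, obtained by counting ordered pairs $(c,d)$ with $c$ incident to $a$, $d$ incident to $b$, and $c$ incident to $d$ — each of the $p$ choices $c\neq b$ yields exactly one such $d$, while $c=b$ (a valid choice precisely because $b$ is incident to $a$) yields all $p+1$ labels incident to $b$, giving $p+(p+1)=2p+1$. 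Each of these three quantities is at most $2/(p+1)\leq \varepsilon$, which proves the claim. I expect the extremal subcase to be the main technical step, since the factor of $2$ it introduces is exactly what determines the tight constant in the label size.
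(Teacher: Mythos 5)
Your proof is correct and follows the same core idea as the paper's (the projective-plane incidence counting that yields the factor $2/(p+1)$), but you are more thorough. The paper simply says ``we assume without the loss of generality'' that the adversary has queried the matching partners of both candidates and then computes $\mathds{P}[\ell(w)=P] + \frac{1}{p+1}\bigl(1-\mathds{P}[\ell(w)=P]\bigr)$ with $\mathds{P}[\ell(w)=P]\leq \frac{1}{p+1}$; this is exactly your case 3 (both partners queried, incident labels), which is indeed the extremal case. You justify that WLOG explicitly: the independence of the label choices across edges of the matching implies that the conditional marginals of $\ell(x)$ and $\ell(y)$ given the adversary's view depend only on whether the partners $x',y'$ were queried, so the worst view is characterizable, and you check the other two cases $\bigl((p+1)/(p^2+p+1)$ and $1/(p+1)\bigr)$ are dominated by the $(2p+1)/(p+1)^2$ bound. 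This is a genuine improvement in rigor over the paper's terse treatment and would make it easier for a reader to see why conditioning on more information can only help the adversary here. The observation that the factor of $2$ from the extremal subcase is exactly what forces $p+1\geq 2/\varepsilon$ is also on target.
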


\begin{proof}
Given $\varepsilon$ and $p$, it is clear the size of the labels is
$$\ceil{\log\left(\text{\# of points}\right)}=\ceil{\log(p^2+p+1)}\leq \ceil{2\log (p+1)}\leq \ceil{2\log (4/\varepsilon)}.$$
To analyze the probability of successful forgery, we assume without the loss of generality that the adversary selects $y,w\in V(G)$ as candidates for forgery (i.e.\ $y$ and $w$ are not adjacent) and $(x,y),(z,w)\in E(G)$ are two edges in the matching such that the labels $\ell(x)=P$ and $\ell(z)$ are known to the adversary,
\begin{align*}
\mathds{P}\left[\dec(\ell(y),\ell(w))=1\right] &
= \mathds{P}\left[ \ell(y)\text{ incident to }\ell(w) \right] \\&
= \mathds{P}\left[ \ell(w)=P \wedge \ell(y)\text{ incident to }\ell(w) \right]
+ \mathds{P}\left[ \ell(w)\neq P \wedge \ell(y)\text{ incident to }\ell(w) \right] \\&
= \mathds{P}\left[ \ell(w)=P \right]
+ \mathds{P}\left[ \ell(w)\neq P \wedge \ell(y)\text{ incident to }\ell(w) \right] \\&
= \mathds{P}\left[ \ell(w)=P \right]
+ \frac{1}{p+1} \left( 1-\mathds{P}\left[ \ell(w)=P \right] \right) 
\end{align*}
We used the fact that by construction $\ell(y)$ must be incident to $\ell(x)=P$ so given $\ell(w)\neq P$, the probability that $\ell(y)$ is incident to $\ell(w)$ is the probability that out of $p+1$ lines incident to $P$, the only line which is incident to both $P$ and $\ell(w)$ was selected as the label of $y$.

A priori, the probability that $\ell(w)=P$ is $\frac{1}{p^2+p+1}$, because $\ell(w)$ is selected randomly. Here $\ell(z)$ is known and the adversary infers that $\ell(w)$ is one of $p+1$ points incident to $\ell(y)$. This means that to maximize the probability to hit $P$ the adversary should find a vertex $y$ such that $P$ is among the points incident to $\ell(y)$.

This means that $\mathds{P}\left[ \ell(w)=P \right]\leq\frac{1}{p+1}$ and hence $\mathds{P}\left[\dec(\ell(y),\ell(w))=1\right]\leq \frac{2}{p+1}\leq\varepsilon$. 
\end{proof}

\begin{remark}\label{projective_plane_min_entropy}
Note that if the labels are chosen by a certain distribution rather than the uniform one, then the probability of forgery is determined by the {\em min entropy} of the distribution.
For an edge $(x,y)$ in the matching consider the conditional distribution of the line $\ell(y)$ given the point $\ell(x)$ which we do not assume to be uniform. Our main argument was that for any point $P'\neq \ell(x)$, there are $p+1$ lines incident to $\ell(x)$ and only one of them is incident to both $\ell(x)$ and $P'$. The probability of forgery is determined by the probability that $\ell(y)$ is indeed the line incident to both $\ell(x)$ and $P'$ which is at most $2^{-H_{\infty} \left( \ell(y) \given \ell(x) \right)}$. This is true since the min-entropy gives an upper bound on the probability of any outcome in the distribution of $\ell(y)$ given $\ell(x)$.
\end{remark}

\subsubsection{Lower Bound for Matching Labeling}

\begin{theorem}\label{matching_lower_bound}
Given any labeling scheme (satisfying \cref{resilience_definition}) for a matching using $m$ labels, the probability of successful forgery by an adversary is at least $\Omega\left(1/\sqrt{m}\right)$.
\end{theorem}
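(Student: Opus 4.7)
The plan is to derive the lower bound by combining two complementary adversary strategies. Consider a matching $G$ with $n$ edges, where $n$ will be chosen sufficiently large. Let $\mu$ denote the distribution of labels on a single edge (after symmetrizing over permutations of matching edges so all edges are identically distributed), $\nu$ its single-vertex marginal, and $D_a$ the conditional distribution of a vertex's label given that its partner has label $a$. Since matching edges are undirected, $\mu$ is symmetric. Write $N(a) = \{b : \dec(a, b) = 1\}$ and $p(a) = \sum_{b \in N(a)} \nu(b)$. Without loss of generality, we restrict $\dec$ so that $\dec^{-1}(1)$ equals the support of $\mu$, since shrinking $\dec$ only decreases the adversary's success probability.

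Attack~A (random output): the adversary makes no queries and outputs two unqueried vertices from distinct edges. Its success probability is $q := \sum_{a,b} \nu(a)\nu(b)\dec(a,b) = \sum_a \nu(a)\, p(a)$, giving $\varepsilon \geq q$. Attack~B (prediction by label guess): the adversary queries one endpoint $u_i$ of every edge, observing labels $a_1,\ldots,a_n$ iid from $\nu$, and for a chosen pair $(i,j)$ outputs the unqueried partners $(v_i, v_j)$. The key observation is that whenever $\phi(v_j) = a_i$, the adversary wins: by validity of the labeling of edge $i$, $\phi(v_i) \in N(a_i)$, and symmetry of $\dec$ then yields $\dec(\phi(v_i), \phi(v_j)) = \dec(\phi(v_i), a_i) = 1$. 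Hence the success for pair $(i,j)$ is at least $D_{a_j}(a_i) = \mu(a_j, a_i)/\nu(a_j)$. Choosing $(i,j)$ to maximize this and using that for large $n$ every label in the support of $\nu$ appears among $\{a_i\}$ with high probability, we obtain $\varepsilon \geq \max_{a,a' \in \mathrm{supp}(\nu)} D_{a'}(a)$, which by symmetry of $\mu$ gives $\mu(a,b) \leq \varepsilon \nu(b)$ for all $a,b$ (this is the min-entropy bound of \cref{projective_plane_min_entropy} applied to $D_{a'}$).

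Combining the two bounds: from $\mu(a,b) \leq \varepsilon\nu(b)$ and $\nu(a) = \sum_{b \in N(a)} \mu(a,b)$ we get $\nu(a) \leq \varepsilon \sum_{b \in N(a)} \nu(b) = \varepsilon\, p(a)$, i.e., $p(a) \geq \nu(a)/\varepsilon$. Plugging into Attack~A: $q = \sum_a \nu(a)\, p(a) \geq \sum_a \nu(a)^2/\varepsilon = \|\nu\|_2^2/\varepsilon \geq 1/(m\varepsilon)$, where the last step uses the Cauchy--Schwarz bound $\|\nu\|_2^2 \geq 1/m$. Combined with $\varepsilon \geq q$, this gives $\varepsilon^2 \geq 1/m$, i.e., $\varepsilon \geq 1/\sqrt{m}$. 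The main technical obstacle I anticipate is making the ``for large $n$'' step in Attack~B quantitatively rigorous, carefully controlling the additive error coming from the probability that not every label in $\mathrm{supp}(\nu)$ is observed among $\{a_i\}_{i \in [n]}$, and ensuring that the initial symmetrization of $\mu$ across matching edges does not weaken the bound by more than a constant factor.
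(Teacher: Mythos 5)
Your proof is correct and reaches the same conclusion, but it takes a genuinely different route than the paper's. Both arguments combine two complementary adversarial strategies — a blind ``random pair'' attack and a more targeted ``label-guessing'' attack — but the way the two are combined differs substantially. The paper introduces the $\alpha$-heavy subgraph $U_\alpha$ of the PUG and the weighted average degree $\Delta_\alpha = \sum_u q_u\deg_{U_\alpha}(u)$, proves that Strategy~1 succeeds with probability $\geq (1-\alpha)/(2\Delta_\alpha)$ (via Markov on $\deg_{U_\alpha}(u)$ plus a union bound to stay inside $U_\alpha$) and Strategy~2 with probability $\geq \alpha\Delta_\alpha/m$, and then cases on whether $\Delta_\alpha \lessgtr \sqrt{m}$ with $\alpha=1/3$. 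You instead distill the targeted attack into the single pointwise inequality $\mu(a,b) \leq \varepsilon\,\nu(b)$ for all $a,b$, chain it through $\nu(a)\leq\varepsilon\,p(a)$, and finish by Cauchy--Schwarz ($\|\nu\|_2^2\geq 1/m$); no auxiliary subgraph, no threshold parameter, no case split. This is cleaner and also yields a marginally sharper constant ($\varepsilon\geq 1/\sqrt m$ vs.\ the paper's $1/(3\sqrt m)$). Your Attack~B is also a simplification of the paper's Strategy~1: instead of fixing a ``good'' label $u$ and then searching for a vertex carrying $u$'s likeliest $U_\alpha$-neighbor $v$, you scan all pairs of observed labels for the one maximizing $D_{a_j}(a_i)$, which removes the need to quantify how likely $v$ is.

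Two caveats, both shared with the paper's own treatment and therefore not gaps specific to your argument. First, the ``for large $n$'' step — ensuring every label in $\mathrm{supp}(\nu)$ appears among the queried endpoints — is handled by the paper only via a footnote assuming $G$ is large relative to $m$; you correctly flag this and it is no worse than what the paper does, since $n$ may need to grow like $1/\min_a\nu(a)$. Second, your Attack~A success probability $\sum_{a,b}\nu(a)\nu(b)\dec(a,b)$ and your claim that the conditional law of $\phi(v_j)$ given all observations equals $D_{a_j}$ both implicitly treat the labels of distinct matching edges as independent. The paper's Strategy~2 computation $\sum_u q_u\bigl(\sum_{v\in N_U(u)}q_v\bigr)$ makes exactly the same implicit assumption. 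Neither write-up argues that cross-edge correlations cannot help the scheme, so if you want your version to be airtight you would face the same remaining burden as the published proof. Finally, a small stylistic point: the WLOG ``shrink $\dec^{-1}(1)$ to $\mathrm{supp}(\mu)$'' is unnecessary — since the scheme errs only on non-edges you already have $\mathrm{supp}(\mu)\subseteq\dec^{-1}(1)$, which is all that the identity $\nu(a)=\sum_{b\in N(a)}\mu(a,b)$ requires, and shrinking $\dec$ has no effect on anything downstream.
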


\begin{proof}
Assume the graph $G$ is a matching labeled by a scheme whose induced PUG is a graph $U$ with $m$ vertices. For a label $u\in V(U)$ let $q_u$ be the probability that a random vertex in $G$ has the label $u$.

Fix some $\alpha\in (0,1)$ and let $U_\alpha$ be the subgraph of $G$ induced by vertices $u\in V(U)$ for which $q_u\geq\frac{\alpha}{m}$. The graph $U_\alpha$ is not empty since there must be at least one label $u$ such that $q_u\geq \frac{1}{m}$. For any $u\in V(U)$, including those not in $U_\alpha$, let $\deg_{U_\alpha}(u)$ be the number of vertices in $U_\alpha$ which are adjacent to $u$ in $U$. Denote by $\Delta_\alpha$ the weighted mean of those vertex degrees, i.e.\ 
$$\Delta_\alpha = \sum_{u\in V(U)}q_u\cdot\deg_{U_\alpha}(u).$$
Consider the following two strategies for the adversary - we will show that at least one of them has the desired probability of successful forgery: 
\begin{enumerate}
    \item Select a random vertex in $G$ and ask for the label of its neighbor. Let $u$ be the resulting label. Repeat this step, discarding the previously used edge of $G$ in every iteration, until $u$ satisfies $\deg_{U_\alpha}(u)\leq 2\Delta_\alpha$. Let $v$ be the most likely neighbor of $u$ which is in $U_\alpha$.
    
    Then, continue querying the labels of one endpoint of every edge in $G$ until a vertex labeled $v$ is found. Select as candidates the original vertex (adjacent to the vertex labeled $u$) and the neighbor of the vertex labeled $v$. See also \cref{fig:matching_strategy_1}.
    \item Select two random non-adjacent vertices in $G$ as candidates.
\end{enumerate}

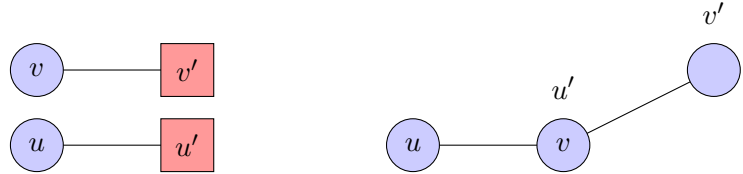
\begin{figure}[ht]
\centering
\begin{tikzpicture}[nodes={draw,circle,fill=blue!20,minimum size=20pt}]
\draw (0,0) node {$u$} -- (2,0) node [rectangle,fill=red!40] {$u'$};
\draw (0,1) node {$v$} -- (2,1) node [rectangle,fill=red!40] {$v'$};
\node [draw=none,rectangle,fill=none] at (1,-1) {(a) View of the matching $G$.};

\draw (5,0) node {$u$} -- (7,0) node [label=$u'$] {$v$} -- (9,1) node [label=$v'$] {};
\node [draw=none,rectangle,fill=none] at (7,-1) {(b) View of the PUG $U$.};
\end{tikzpicture}
\caption{Strategy (1) visually illustrated. The labels $u'$ and $v'$ of the vertices in red are the unknown labels of the selected candidates.}
\label{fig:matching_strategy_1}
\end{figure}

\paragraph{Analysis of Strategy 1:}
Before analyzing the probability of forgery using this strategy, we will show that:
\begin{enumerate}
    \itemsep 0em
    \item The condition $\deg_{U_\alpha}(u)\leq 2\Delta_\alpha$ is likely to be satisfied.
    \item The adversary is able to find a vertex with the label $v$ which is $u$'s likeliest neighbor in $U_\alpha$.
\end{enumerate}
Observe that the number of neighbors the label $u$ has in $U_\alpha$ is a random variable with expectation $\Delta_\alpha$. This is because the probability of seeing $u$ as the label of the neighbor of a random vertex in a matching is the same as the probability of seeing it as the label of a random vertex (which is $q_u$). Then, Markov's inequality implies that $\mathds{P}_u\left[\deg_{U_\alpha}(u)\geq 2\Delta_\alpha\right]\leq 1/2$ so for more than half of possible labels $u$ encountered by this strategy, it holds that $\deg_{U_\alpha}(u) < 2\Delta_\alpha$.

The adversary can repeat the initial step (henceforth ignoring the previously used edge of $G$) to find a ``good $u$" as long as it selects uniformly at random in each attempt and the distribution doesn't change when removing a small number of edges from $G$.\footnote{We assume that $G$ is large enough compared to $m$ so that the distribution over labels doesn't change in a noticeable manner when a small constant number of edges is removed from $G$. This also applies to other parts of the proof where we sample (without replacement) more than one random vertex from $G$, such as the second strategy.}

Next, we show that the adversary is able to find a vertex with the label $v$. Observe that by definition $q_v\geq \frac{\alpha}{m}$. At this step, the adversary looks at the labels of roughly half the vertices (one random endpoint of every remaining edge in $G$), stopping when $v$ is found. The probability of finding a label that is not $v$ is at most $1-\frac{\alpha}{m}$ in every edge the adversary checks. By choosing a large enough graph $G$, the adversary fails to find a vertex with label $v$ with negligible probability.

We proceed to analyze the probability of forgery. Let $u'$ and $v'$ be the (unknown) labels of the of the selected candidates. This would imply that $(u,u')$ and $(v,v')$ are edges in $U$ as the scheme only errs on non-edges. We consider the probability that $u'$ is a vertex in $U_\alpha$. By the union bound,

$$
\mathds{P}\left[ u'\in V(U_\alpha)\right]
= 1 - \sum_{w\notin V(U_\alpha)}\mathds{P}\left[ u' = w\right]
\geq 1 - \sum_{w\notin V(U_\alpha)}\frac{\alpha}{m}
\geq 1 - \alpha.
$$
Moreover, we consider the probability $u'$ is actually $v$ assuming $u'\in V(U_\alpha)$. Note that in that case, $\deg_{U_\alpha}(u)>0$ since $u'$ would be a vertex from $U_\alpha$ which is adjacent to $u$ in the PUG. Since $v$ is $u$'s likeliest neighbor in $U_\alpha$,

$$
\mathds{P}\left[u'=v \given u'\in V(U_\alpha)\right]
\geq \frac{1}{\deg_{U_\alpha}(u)}\sum_{w\in N_{U_\alpha}(u)}\mathds{P}\left[u'=w \given u'\in V(U_\alpha)\right]
= \frac{1}{\deg_{U_\alpha}(u)}.
$$
The adversary wins if $u'$ is $v$, since the labels of the candidates would be $v$ and $v'$ which are decoded as an edge. This implies the probability of forgery is at least $\frac{1-\alpha}{\deg_{U_\alpha}(u)}\geq \frac{1-\alpha}{2\Delta_\alpha}$ using this strategy.

\paragraph{Analysis of Strategy 2:}
When selecting as candidates two random (non-adjacent) vertices in $G$ at random, the adversary wins when the labels of the candidates form an edge in $U$. This is because a pair of labels adjacent in $U$ are decoded as an edge.

For each label $u\in V(U)$ we look at the probability of selecting a random vertex in $G$ which has label $u$ times the probability the second random vertex has a label from $N_U (u)$. The following calculation gives a lower bound for the probability of forgery:

\begin{align*}
\sum_{u\in V(U)}q_u\left(\sum_{v\in N_U (u)}q_v\right) &
\geq \sum_{u\in V(U)}q_u\left(\sum_{v\in N_{U_\alpha} (u)}q_v\right) \\&
\geq \sum_{u\in V(U)}q_u\left(\sum_{v\in N_{U_\alpha} (u)}\frac{\alpha}{m}\right) \\&
= \frac{\alpha}{m} \sum_{u\in V(U)}q_u\cdot \deg_{U_\alpha} (u) \\&
= \frac{\alpha}{m} \Delta_\alpha
\end{align*}

\noindent
To finish, we fix $\alpha =\frac{1}{3}$. Based on the labeling scheme, select one of the two adversarial strategies depending on the value of $\Delta_{\frac{1}{3}}$ for the scheme which can be calculated in advance, without querying any of the labels. The adversary can achieve probability of forgery at least $\frac{1}{3\sqrt{m}}$ by using the first strategy when $\Delta_{\frac{1}{3}}\leq\sqrt{m}$ and the second strategy otherwise.
\end{proof}

Since the projective plane scheme needs $4 /\lceil \varepsilon^2\rceil$ symbols, we get:
\begin{corollary}
\label{cor: tightness projective}
The projective plane method is optimal (to within additive $O(1)$)  in terms of the tradeoff between the length of the label and the   probability of forgery for graphs of max-degree $1$.
\end{corollary}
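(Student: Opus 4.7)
The plan is to combine the upper bound from Claim \ref{claim: projective} with the lower bound from Theorem \ref{matching_lower_bound} and read off the matching constants.

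First I would recall the precise quantitative statements. From Claim \ref{claim: projective}, fixing a target probability of forgery $\varepsilon$ and choosing the smallest prime $p$ with $p+1 \geq 2/\varepsilon$, the projective plane scheme uses $p^2+p+1 = O(1/\varepsilon^2)$ labels, which translates to labels of length at most $\ceil{2\log(4/\varepsilon)} = 2\log(1/\varepsilon) + O(1)$ bits. (Bertrand's postulate guarantees such a prime $p$ exists with $p = O(1/\varepsilon)$.) On the other hand, Theorem \ref{matching_lower_bound} says that any scheme for the matching family using $m$ distinct labels must have probability of forgery at least $\tfrac{1}{3\sqrt{m}}$.

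Next I would instantiate the lower bound in label-length form. Suppose we have any scheme with labels of length $k$, and let $\varepsilon$ be its probability of forgery. The number of possible labels is at most $m \leq 2^k$, so Theorem \ref{matching_lower_bound} gives $\varepsilon \geq \tfrac{1}{3\sqrt{m}} \geq \tfrac{1}{3} \cdot 2^{-k/2}$. Solving for $k$ yields $k \geq 2\log(1/\varepsilon) - 2\log 3 = 2\log(1/\varepsilon) - O(1)$. Comparing with the upper bound $2\log(1/\varepsilon) + O(1)$ delivered by the projective plane scheme shows that the two differ only by an additive constant, which is exactly the claim of the corollary.

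There is essentially no obstacle here beyond bookkeeping: both the achievability and the converse have already been proved, and the only thing to verify is that the leading constant $2$ in front of $\log(1/\varepsilon)$ matches on the two sides. I would simply remark that any slack is absorbed into the additive $O(1)$ term (coming from rounding $p$ up to the next prime on the upper bound side, and from the constant $\tfrac{1}{3}$ of Theorem \ref{matching_lower_bound} on the lower bound side), and conclude.
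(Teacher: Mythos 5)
Your proof is correct and takes exactly the route the paper intends: the paper's one-line justification preceding the corollary ("Since the projective plane scheme needs $4/\lceil\varepsilon^2\rceil$ symbols, we get:") is just a compressed version of your comparison of Claim~\ref{claim: projective} with Theorem~\ref{matching_lower_bound}. Your more explicit bookkeeping of the $m \leq 2^k$ translation and the two $O(1)$ slacks is the same argument spelled out in full.
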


\section{High Degree Graphs}\label{section:high_degree}

We now consider labeling schemes for graphs with maximal degree $d$. We first show that by adapting the ideas of labeling via coloring from \cref{coloring_paths}, one can construct a scheme with short labels that provides some weak resilience against an adaptive adversary. We show a lower bound for coloring-based schemes. 

The latter part of this section is a construction that uses the projective plane method from \cref{projective_plane_scheme} to build a scheme with labels of size $O(d)$, better than can be achieved by coloring.

\subsection{Small Labels Using Colors}\label{coloring_degree_d}
We will show that even with relatively small labels one can get {\em some} resilience against an adversary.

\begin{theorem}\label{short_labels_degree_d_scheme}
The family of graphs with maximal degree $d$ has a labeling scheme (following \cref{resilience_definition}) with size $O(\log d)$ and probability of forgery $1-\Omega (1/d)$.
\end{theorem}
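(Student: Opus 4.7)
The plan is to use a uniformly random proper $p$-coloring of $G$ with palette size $p=\Theta(d)$: the label of a vertex $v$ is its color $c(v) \in [p]$, encoded in $\lceil \log p \rceil = O(\log d)$ bits, and the decoder outputs $1$ iff the two labels differ. Since a graph of maximum degree $d$ is properly $(d{+}1)$-colorable (greedy algorithm), the set of proper $p$-colorings of $G$ is nonempty whenever $p \geq d+1$; I will take $p = 4d$ and let $c\colon V(G) \to [p]$ be drawn uniformly from this set. Because $c$ is a proper coloring the decoder never errs on edges, so every forgery corresponds precisely to an adaptively chosen non-adjacent pair $x_{k+1}, x_{k+2}$ with $c(x_{k+1}) \neq c(x_{k+2})$.

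The next step is to bound $\mathds{P}[c(x_{k+1}) = c(x_{k+2})]$ from below by $\Omega(1/d)$ against any adaptive adversary. My approach is to condition on the stronger event $c|_{V(G) \setminus \{x_{k+1}, x_{k+2}\}} = \sigma$: this reveals strictly more information than the adversary's actual view (the colors of queried vertices), so a pointwise lower bound in $\sigma$ yields the desired bound for every adversarial strategy by averaging over $\sigma$.

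The key structural fact, and the only step requiring real care, is the following property of the uniform distribution on proper colorings: conditioned on any proper coloring $\sigma$ of $V \setminus \{x_{k+1}, x_{k+2}\}$, the pair $(c(x_{k+1}), c(x_{k+2}))$ is uniformly distributed over $A_{x_{k+1}} \times A_{x_{k+2}}$, where $A_v := [p] \setminus \sigma(N(v))$ is the set of colors available at $v$. The reason is that the proper extensions of $\sigma$ are in bijection with $A_{x_{k+1}} \times A_{x_{k+2}}$, using that $x_{k+1}, x_{k+2}$ are non-adjacent, and by uniformity all extensions are equiprobable. Consequently
$$\mathds{P}\left[c(x_{k+1}) = c(x_{k+2}) \given c|_{V \setminus \{x_{k+1}, x_{k+2}\}} = \sigma\right] = \frac{|A_{x_{k+1}} \cap A_{x_{k+2}}|}{|A_{x_{k+1}}| \cdot |A_{x_{k+2}}|} \geq \frac{p - 2d}{p^2},$$
since $|A_v| \leq p$ and $|A_{x_{k+1}} \cap A_{x_{k+2}}| \geq p - |\sigma(N(x_{k+1}) \cup N(x_{k+2}))| \geq p - 2d$. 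With $p = 4d$ this is $1/(8d)$, so the adversary wins with probability at most $1 - 1/(8d) = 1 - \Omega(1/d)$, as required.

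Once the independence-with-uniform-marginals property above is in hand, the rest is direct calculation together with the averaging argument. For the existence statement of the theorem, efficient sampling from the uniform distribution on proper $p$-colorings is not required, though it can be obtained by standard means such as Glauber dynamics when $p = \Omega(d)$; a genuinely efficient scheme with far better error dependence is developed later in \cref{thm: general}.
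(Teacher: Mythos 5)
Your proof is correct, and it takes a genuinely different (and arguably cleaner) route than the paper. The paper labels via a \emph{sequential} random greedy coloring in a predetermined order: each vertex receives a color chosen uniformly from those not used by its already-colored neighbors, with palette size $3d$. Because this distribution is not uniform over proper colorings, the conditional distribution on the colors of the two candidate vertices given everything else is only \emph{approximately} uniform, and the paper needs two lemmas (Lemmas~\ref{short_labels_degree_d_color_locality_lemma} and~\ref{short_labels_degree_d_color_dist_lemma}) to control the deviation: the ratio of any two conditional color probabilities is at most $\sqrt{e}$, yielding a final bound of $1 - \frac{1}{9e\cdot d}$. You instead sample a \emph{uniformly} random proper $p$-coloring, for which the conditional distribution of $(c(x_{k+1}), c(x_{k+2}))$ given any full coloring $\sigma$ of the remaining vertices is \emph{exactly} uniform over $A_{x_{k+1}} \times A_{x_{k+2}}$ — this follows immediately from the conditional uniformity of the uniform measure, together with the observation that the proper extensions of $\sigma$ are in bijection with $A_{x_{k+1}} \times A_{x_{k+2}}$ precisely because the candidates are non-adjacent. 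This eliminates the $\sqrt{e}$ loss and, with the optimized choice $p=4d$, gives the slightly sharper bound $1 - \frac{1}{8d}$. Your averaging step is also sound: conditioning on the full restriction $\sigma$ is a refinement of the adversary's view (the transcript of queried colors), and since the candidate pair is a deterministic function of that transcript, a pointwise-in-$\sigma$ bound yields a bound given the transcript by averaging. The one thing the paper's construction buys that yours does not automatically have is a trivially efficient encoder — sequential greedy coloring runs in linear time, whereas uniform sampling of proper colorings requires appeal to Markov-chain mixing results (Glauber dynamics mixes rapidly for $p > 2d$, so $p = 4d$ works, but this is a nontrivial ingredient). You correctly flag this point; since the theorem as stated does not demand efficiency, your existence proof is complete.
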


We present a scheme for labeling graphs via coloring which can be thought of as an extension of the scheme from \cref{coloring_paths} for paths.

\subsubsection{Sequential Coloring}
To label a graph $G$ with maximal degree $d$, we use a palette of $3d$ colors. Order $G$ in some predetermined order on the vertices, e.g.\ the BFS order on $G$. At every step, we use a random color that is different from all previously colored neighbors. Recall that in this scheme, labels for two vertices are decoded as adjacent if and only if they have different colors.

Given any configuration of colors generated in this manner,  we would like to argue that if we know all colors other than of two vertices $x$ and $y$ which are not neighbors of each other, then the distribution of the remaining colors for $x$ is uniform from a set $S_x$ such that $2d = 3d-d \leq \lvert S_x \rvert \leq 3d$ and the same is true for $y$ and $S_y$. Then, $\left| S_x \cap S_y \right| \geq d$ so the probability for $x$ and $y$ to have the same color (and be decoded as non-adjacent) is at least $\frac{\lvert S_x\cap S_y\rvert}{\lvert S_x\rvert\lvert S_y\rvert}\geq \frac{1}{9d}$.

Unfortunately, there is a flaw in this argument, due to the distribution of the remaining colors for $x$ and $y$ in this case not being uniform. We will show that it is not far from uniform and finish the argument for the resilience of this scheme.

\bigskip
\noindent
In the following, let $\pi$ be the configuration of colors generated by this scheme which excludes at least two vertices (those are the candidates chosen by the adversary while $\pi$ consists of all the colors known to the adversary). For any vertex $x\in V(G)$, denote by $N(x)$ the neighborhood of $x$ in $G$. Also, event $\ell (x)=c$ is the event ``$x$ is colored $c$" and let $S_x$ be the set of remaining colors for $x$ assuming $\pi$.

\begin{lemma}\label{short_labels_degree_d_color_locality_lemma}
For every $x\in V(G)$ which is not part of $\pi$ and any $c\in S_x$, the probability for $\ell (x)=c$ given $\pi$ depends only on the color given to the neighbors of $x$ by $\pi$, i.e.\
$$\mathds{P}\left[
    \ell (x)=c \given \pi
\right]
= \mathds{P}\left[
    \ell (x)=c \given \pi\rvert_{N(x)}
\right]$$
\end{lemma}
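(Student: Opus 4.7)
The plan is to leverage the factorized form of the probability that the sequential process yields a given proper coloring. For any proper $3d$-coloring $\sigma$ in the support of the process,
$$\mathds{P}[\sigma] = \prod_{i=1}^n \frac{1}{3d - k_i(\sigma)},$$
where $k_i(\sigma)$ counts the distinct colors among the previously-colored neighbors of $v_i$ in the predetermined ordering. I would then write the target as the ratio
$$\mathds{P}\left[\ell(x)=c \given \pi\right] = \frac{\sum_{c'} \mathds{P}[\sigma_{c,c'}]}{\sum_{c,c'} \mathds{P}[\sigma_{c,c'}]},$$
where $\sigma_{c,c'}$ completes $\pi$ by setting $\ell(x)=c$ and $\ell(y)=c'$, and examine which per-step factors survive in this ratio.

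A factor $\frac{1}{3d-k_i}$ can depend on $c$ only when $v_i$ equals $x$ or is a neighbor of $x$ appearing after $x$ in the ordering; call the latter set $N^{>}(x)$. For $v_i = x$, the previously-colored neighbors of $x$ are fixed by $\pi$ and lie entirely in $N(x)$, so $k_x$ is already a function of $\pi|_{N(x)}$ alone. Factors for indices $v_i$ outside $\{x,y\} \cup N^{>}(x) \cup N^{>}(y)$ are constant in both $c$ and $c'$ and cancel between numerator and denominator, while the $c'$-dependent factors disappear after marginalizing over $\ell(y)$. This reduces the problem to understanding the product $\prod_{v_i \in N^{>}(x)} \frac{1}{3d - k_i(\sigma)}$.

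For each $v_i \in N^{>}(x)$ one has $k_i = |\{c\} \cup \{\pi(v_j) : v_j \in N(v_i)\setminus\{x\},\ j<i\}|$, so $c$ affects $k_i$ only through whether $c$ coincides with a color that $\pi$ assigns to some previously-placed neighbor of $v_i$ other than $x$. The main step, and the real content of the lemma, is to show that once one partitions the values $c \in S_x$ by the resulting pattern of $k_i$'s and normalizes, the expression depends on $\pi$ only through $\pi|_{N(x)}$. The main obstacle is that these factors a priori involve colors that $\pi$ assigns to distance-two neighbors of $x$, so one must find the right algebraic grouping—or argue inductively along the tail of the ordering—to make the cancellation of the non-local dependencies explicit and obtain the claimed Markov-type identity.
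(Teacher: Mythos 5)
Your reduction to the factored form $\mathds{P}[\sigma] = \prod_i 1/(3d - k_i(\sigma))$ is the right starting point, and it surfaces the genuine difficulty that the paper's own argument glosses over: the factors for $v_i \in N^{>}(x)$ involve the colors that $\pi$ assigns to $v_i$'s other earlier-colored neighbors, which may lie at distance two from $x$. However, the ``main step'' you defer --- showing that after normalizing over $c \in S_x$ these distance-two dependencies vanish --- is not a gap that can be filled, because the claimed identity is not actually an identity. Consider the $4$-cycle $r - w - z - x - r$ (plus an isolated vertex $y$) colored in BFS order $r, w, x, z$. Your product gives, for $c \in S_x = [3d] \setminus \{\pi(r), \pi(z)\}$, the proportionality $\mathds{P}[\ell(x) = c \mid \pi] \propto 1/\bigl(3d - \lvert\{\pi(w), c\}\rvert\bigr)$, which puts slightly more mass on $c = \pi(w)$. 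Two configurations $\pi_1, \pi_2$ with $\pi_1|_{N(x)} = \pi_2|_{N(x)}$ but $\pi_1(w) \neq \pi_2(w)$ thus yield different conditional distributions, so $\mathds{P}[\ell(x)=c\mid\pi]$ genuinely depends on $\pi$ beyond $\pi|_{N(x)}$.

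For comparison, the paper's proof simply asserts that the distribution of $\ell(x)$ given $\pi$ is determined by the palette $P_x$ (via the earlier-colored neighbors $V_0$) and by the constraint that $\ell(x)$ avoid the colors of the later-colored neighbors $V_1$, and concludes that it depends only on $\pi|_{N(x)}$. That argument overlooks precisely what your decomposition makes visible: observing $\pi(z)$ for $z \in N^{>}(x)$ also reveals information through the size of the palette $z$ drew from, coupling $\ell(x)$ with the colors of $z$'s other earlier-colored neighbors. So your route is more careful than the published one and has located the true obstruction; it just cannot close the gap, because the conditional-independence claim fails. What the downstream argument actually needs --- and what does survive --- is a multiplicative bound on the ratio $\mathds{P}[\ell(x)=c\mid\pi]/\mathds{P}[\ell(x)=c'\mid\pi]$ with the full conditioning on $\pi$, which is essentially the palette-size estimate of the next lemma and does not require this Markov-type reduction.
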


\begin{proof}
In every configuration of colors generated by the scheme, the vertex $x$ is colored by a color chosen uniformly at random from the legal palette $P_{x}$ at that stage of the process which is defined by the neighbors of $x$ colored up to that point.\footnote{Not to be confused with $S_{x}$ which is the set of remaining colors for $x$ when all other colors are known (similar to the view an adversary has of $\pi$).}

Let $y_1,\ldots,y_i=x,\ldots,y_n$ be the order of $V(G)$ which produced $\pi$. The set $P_x$ is determined by the colors given to $V_0 =\{y_1,\ldots,y_{i-1}\}\cap N(x)$ and $\ell (x)$ must not be any of the colors given to $V_1 =\{y_{i+1},\ldots,y_n\}\cap N(x)$. This means that the distribution over the colors $S_x$ is a result of the colors given to $V_0$ and $V_1$ (and independent of the colors given to other vertices). Specifically, since $\pi\rvert_{V_0},\pi\rvert_{V_1}\subseteq\pi\rvert_{N(x)}\subseteq \pi$ and $c\in S_x$, the probability to see $x$ colored $c$ assuming the colors $\pi$ is the same as the probability to see the same color assuming $\pi\rvert_{N(x)}$.
\end{proof}

\begin{lemma}\label{short_labels_degree_d_color_dist_lemma}
For every $x\in V(G)$ which is not part of $\pi$ and any $c,c'\in S_x$, conditioned on the configuration of colors for the neighbors of $x$,
$$\frac{
\mathds{P}\left[
    \ell (x)=c \given \pi\rvert_{N(x)}
\right]
}{
\mathds{P}\left[
    \ell (x)=c' \given \pi\rvert_{N(x)}
\right]
} \leq \sqrt{e}$$
\end{lemma}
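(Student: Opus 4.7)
The plan is to reduce the desired ratio to a product of small per-vertex factors, one for each neighbor of $x$ colored after $x$, and then appeal to $\lvert V_1\rvert\leq d$ (where $V_1$ denotes this set, as in the proof of \cref{short_labels_degree_d_color_locality_lemma}) to collect the bound $\sqrt{e}$.

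First I will pass to the likelihood ratio. By the color-permutation symmetry of the sequential coloring process, the marginal $\mathds{P}\left[\ell(x)=c\right]$ is uniform over the $3d$ palette, so Bayes' rule lets me replace the desired posterior ratio by
\[
\frac{\mathds{P}\left[\pi\rvert_{N(x)} \given \ell(x)=c\right]}{\mathds{P}\left[\pi\rvert_{N(x)} \given \ell(x)=c'\right]}.
\]
I will bound this by writing each side as a marginal over colorings $\rho$ of the vertices outside $N(x)\cup\{x\}$, using the formula that each full coloring has probability $\prod_v 1/\lvert P_v\rvert$, where $P_v$ is the palette available to $v$ when it is colored.

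For any fixed $\rho$ I claim only the palettes of vertices $y_j\in V_1$ can differ between the $c$- and $c'$-scenarios. Indeed, vertices colored before $x$ never see $\ell(x)$; non-neighbors of $x$ colored after $x$ never see it either (since $x$ is not in their neighborhood); and $1/\lvert P_x\rvert$ is the same under both, since $c,c'\in S_x\subseteq P_x$. The per-$\rho$ ratio thus collapses to $\prod_{y_j\in V_1}\lvert P_{y_j}(c')\rvert/\lvert P_{y_j}(c)\rvert$. For each such $y_j$ one has $\lvert P_{y_j}(c)\rvert = 3d - \lvert A_j\cup\{c\}\rvert$, where $A_j$ is the set of colors on $y_j$'s \emph{other} already-colored neighbors (a set fixed by $\sigma$ and $\rho$, hence identical in both scenarios). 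Since $\lvert A_j\rvert\leq \deg(y_j)-1\leq d-1$, both palette sizes are at least $2d$ and differ by at most $1$, yielding the per-factor bound $(2d+1)/(2d) = 1 + 1/(2d)$.

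Multiplying over the $\lvert V_1\rvert\leq d$ factors gives $(1+1/(2d))^d \leq e^{1/2} = \sqrt{e}$. The one subtlety I will need to address is that this per-$\rho$ bound survives marginalization: this reduces to verifying that the set of $\rho$ for which $(c,\sigma,\rho)$ is a realizable coloring coincides with that for $c'$, which I will do by noting that validity reduces to no-clash conditions among neighbors, and $c,c'\in S_x$ prevents $c$ or $c'$ from clashing with any color appearing in $\sigma$. I expect this step---propagating the pointwise bound to the marginal---to be the main obstacle requiring care, while the per-$y_j$ palette-size count is a routine calculation.
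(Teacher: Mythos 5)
Your argument follows the paper's proof essentially step for step: reduce the posterior ratio to a likelihood ratio (the prior over colors being uniform by the color-permutation symmetry), observe that swapping $c$ for $c'$ can only shift the palettes of $x$'s later-colored neighbors, bound each such per-vertex factor by $1+\tfrac{1}{2d}$ using the palette-size floor of $2d$, and take the $\lvert V_1\rvert\le d$-fold product to obtain $\sqrt{e}$. The only real difference is that you make the marginalization over the colors $\rho$ outside $N(x)\cup\{x\}$ explicit and verify that the per-$\rho$ ratio bound and the set of realizable $\rho$ are both stable under the swap, whereas the paper compresses this into a single product-of-inverse-palette-sizes formula for $\mathds{P}[\ell(x)=c\cap\pi\rvert_{N(x)}]$ -- a harmless but slightly imprecise shorthand that your version tightens up.
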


\begin{proof}
We analyze the distribution over $S_x$ for a vertex $x\in V(G)$ whose color is unknown to the adversary. It holds that

$$\frac{
\mathds{P}\left[
    \ell (x)=c \given \pi\rvert_{N(x)}
\right]
}{
\mathds{P}\left[
    \ell (x)=c' \given \pi\rvert_{N(x)}
\right]
}
=\frac{
\mathds{P}\left[
    \ell (x)=c \cap \pi\rvert_{N(x)}
\right]
}{
\mathds{P}\left[
    \ell (x)=c' \cap \pi\rvert_{N(x)}
\right]
}
\frac{
\mathds{P}\left[
    \ell (x)=c'
\right]
}{
\mathds{P}\left[
    \ell (x)=c
\right]
}$$
Any configuration which gives $x$ the color $c$ can be mapped to an equivalent configuration with $c$ and $c'$ switched globally. This means $\mathds{P}\left[ \ell (x)=c \right] = \mathds{P}\left[ \ell (x)=c' \right]$.

Let $d=\lvert N(x)\rvert$ and let  $z_1,\ldots,z_d$ be the vertices of $N(x)$ ordered according to the order of $V(G)$ which produced $\pi$. Recall that by the definition of the scheme, every $z_i$ was colored by a color chosen uniformly at random from the legal palette $P_{z_i}$ determined by the neighbors of $z_i$ colored up before it (this is also true regarding $x$ and $P_x$). Then,

$$\mathds{P}\left[
    \ell (x)=c \cap \pi\rvert_{N(x)}
\right]
=
\frac{1}{\lvert P_x\rvert}\cdot\prod_{i=1}^d\frac{1}{\lvert P_{z_i}\rvert}
$$
Let $k$ be the first index of a vertex colored after $x$. A key observation is that for any $i=k,\ldots,d$ the palette $P_{z_i}$ can only change by one color when switching the color of $x$. Meaning, when switching the color of $x$ from $c$ to $c'$ the most significant change is induced when $c$ is added to the palette of $z_k$ which subsequently increases all other palettes after that, i.e.

$$\frac{
\mathds{P}\left[
    \ell (x)=c \cap \pi\rvert_{N(x)}
\right]
}{
\mathds{P}\left[
    \ell (x)=c' \cap \pi\rvert_{N(x)}
\right]
}
=
\frac{
\prod_{i=k}^d\frac{1}{\lvert P_{z_i}\rvert}
}{
\prod_{i=k}^d\frac{1}{\lvert P_{z_i}\rvert + 1}
}
=
\prod_{i=k}^d\left(1+\frac{1}{\lvert P_{z_i}\rvert}\right)
\leq
\prod_{i=k}^d\left(1+\frac{1}{2d}\right)
\leq
\left(1+\frac{1}{2d}\right)^{d}
\leq \sqrt{e}
$$
This is due to the palette size never being smaller than $3d-d=2d$ (which occurs when a vertex is colored after all its neighbors) and $1\leq k\leq d$.
\end{proof}

\begin{proof}[Proof of \cref{short_labels_degree_d_scheme}]
Let $\mathcal{S}$ be the scheme described previously which uses $3d$ colors, i.e.\ the size of the labels is $\ceil{\log (3d)}$. Again, given a configuration of colors $\pi$ generated by $\mathcal{S}$, knowing all colors but two vertices $x,y$ which are not neighbors we have a distribution of remaining colors for $x$ and $y$ from sets $S_x$ and $S_y$ as described before. We also know $2d\leq\lvert S_x\rvert,\lvert S_y\rvert\leq 3d$ which means they share at least $d$ colors.

The probability that $x$ and $y$ have the same color is lowest when the colors in $S_x\cap S_y$ are the least likely to be chosen from $S_x$ and $S_y$ (since distribution isn't actually uniform) but we are able to use \cref{short_labels_degree_d_color_dist_lemma} to show a lower bound.

More specifically, since $x$ and $y$ are not neighbors, then

$$\mathds{P}\left[
    \ell (x)=\ell (y) \given \pi
\right]
= \sum_{c\in S_x\cap S_y}\mathds{P}\left[
    \ell (x)=\ell (y)=c \given \pi
\right]
= \sum_{c\in S_x\cap S_y}\mathds{P}\left[
    \ell (x)=c \given \pi
\right]\cdot \mathds{P}\left[
    \ell (y)=c \given \pi
\right].$$
Next, let $p_x=\min_{c\in S_x}\mathds{P}\left[ \ell (x)=c \given \pi \right]$ and similarly for $p_y$. By \cref{short_labels_degree_d_color_locality_lemma} when calculating $p_x$ we can look at the probabilities conditioned only on $\pi\rvert_{N(x)}$. Considering the colors in $S_x$ which have probability larger than $p_x$, by \cref{short_labels_degree_d_color_dist_lemma} their probability does not exceed $p_x\sqrt{e}$. Then,

$$p_x+\left(\lvert S_x\rvert -1\right) p_x\sqrt{e}
\geq \sum_{c\in S_x}\mathds{P}\left[ \ell (x)=c \given \pi \right] = 1$$
This means that our lower bounds are $$\mathds{P}\left[ \ell (x)=c \given \pi \right]\geq p_x\geq\frac{1}{\sqrt{e}\lvert S_x\rvert}$$ and $$\mathds{P}\left[ \ell (y)=c \given \pi \right]\geq p_y\geq\frac{1}{\sqrt{e}\lvert S_y\rvert}.$$ Finally,

$$
\mathds{P}\left[ \ell (x)=\ell (y) \given \pi \right]
\geq \sum_{c\in S_x\cap S_y} \frac{1}{\sqrt{e}\lvert S_x\rvert} \cdot \frac{1}{\sqrt{e}\lvert S_y\rvert}
= \frac{\lvert S_x\cap S_y\rvert}{e\lvert S_x \rvert\lvert S_y\rvert}
\geq \frac{d}{e \left(3d\right)^2}
= \frac{1}{9e\cdot d}
$$
This means that when using labels of size $\ceil{\log (3d)}$ the probability of forgery of this scheme is at most $1-\frac{1}{9e\cdot d}$.
\end{proof}

The next corollary follows directly from the amplification of the coloring scheme from \cref{short_labels_degree_d_scheme}, i.e.\ using it $d$ times in parallel.

\begin{corollary}
Graphs with maximal degree $d$ have a labeling scheme (following \cref{resilience_definition}) with size $O(d \log d)$ and probability of forgery which is bounded by a constant strictly less than 1.
\end{corollary}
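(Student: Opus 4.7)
The plan is to invoke the amplification method from the paragraph labeled \emph{Amplification} together with Theorem~\ref{short_labels_degree_d_scheme}. Recall that Theorem~\ref{short_labels_degree_d_scheme} produces, for every graph of maximum degree $d$, a labeling scheme $\mathcal S$ using labels of size $\lceil\log(3d)\rceil = O(\log d)$ whose probability of forgery is at most $1 - c/d$ for some absolute constant $c > 0$ (specifically $c = 1/(9e)$ in the proof above). I would take $k = d$ independent parallel copies of $\mathcal S$, assign to every vertex $v$ the concatenated label $(\ell^{(1)}(v),\dots,\ell^{(d)}(v))$ of size $d\cdot\lceil\log(3d)\rceil = O(d\log d)$, and define the new decoder to accept a pair of labels if and only if every one of the $d$ component decoders accepts.

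For the analysis I would follow the argument outlined in~\cref{amplification_method}. Since the scheme $\mathcal S$ only errs on non-edges, so does the amplified scheme: on a true edge every component accepts with probability $1$. For the forgery probability I would appeal to the parallel repetition argument cited there (following Goldreich~\cite{Goldreich98}, Appendix C.1), which says that because the $d$ labelings $\ell^{(1)},\dots,\ell^{(d)}$ are drawn independently, the probability that an adaptive adversary forces all $d$ component decoders to err simultaneously on its declared non-edge pair is bounded by the product $(1-c/d)^d \le e^{-c}$. This is a constant strictly less than $1$, which gives the desired conclusion.

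The main (minor) point to be careful about is that the adversary is adaptive and its queries across the $d$ coordinates are correlated, so one cannot naively multiply the per-coordinate failure probabilities. This is exactly the issue that the parallel repetition argument invoked in~\cref{amplification_method} handles: independence of the sampled labelings across coordinates suffices to bound the joint forgery probability by the product, even against an adaptive adversary. With that in place, the corollary follows immediately, and the label size $O(d\log d)$ and forgery probability $e^{-c} < 1$ are exactly as stated.
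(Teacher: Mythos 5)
Your proposal is correct and matches the paper's own reasoning exactly: the paper states that the corollary ``follows directly from the amplification of the coloring scheme from Theorem~\ref{short_labels_degree_d_scheme}, i.e.\ using it $d$ times in parallel,'' which is precisely your construction. You also correctly flag, and resolve via the cited parallel-repetition argument of Goldreich, the one subtle point — that an adaptive adversary's queries couple the coordinates, so independence of the sampled labelings (not of the adversary's behavior) is what justifies multiplying the per-coordinate bounds.
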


\subsection{Covering by Matchings}\label{covering_by_matchings}

Another strategy for constructing schemes from general graphs is to use a scheme for a matching, reduce the probability of forgery if needed and combine the labels each vertex receives for any matching in which it participated.

\begin{claim}
Graphs with maximal degree $d$ have a labeling scheme (following \cref{resilience_definition}) of size $O(d\log d)$ with probability of forgery at most $1/2$.
\end{claim}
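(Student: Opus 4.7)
The plan is to invoke Vizing's theorem to decompose the edges of $G$ into at most $d+1$ matchings $M_1,\ldots,M_{d+1}$, then run the projective plane matching scheme of \cref{projective_plane_scheme} (\cref{claim: projective}) independently on each $M_i$ with per-matching forgery parameter $\varepsilon' = \frac{1}{2(d+1)}$, and concatenate the resulting sub-labels. Since each projective plane sub-label has size $\lceil 2\log(4/\varepsilon')\rceil = O(\log d)$, the total label has size $(d+1)\cdot O(\log d) = O(d\log d)$.

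More precisely, for each matching $M_i$ and each vertex $v$ incident to an edge of $M_i$, I would draw the sub-label $\ell_i(v)$ from the projective-plane scheme applied to $M_i$ (with all $d+1$ schemes using independent randomness). For vertices $v$ with no incident edge in $M_i$, I would set $\ell_i(v)$ to a distinguished symbol $\bot$ (absorbed into the $O(\log d)$-bit sub-label alphabet). The full label is $\ell(v)=(\ell_1(v),\ldots,\ell_{d+1}(v))$, and the decoder outputs $1$ on $(\ell(u),\ell(v))$ iff there exists $i\in[d+1]$ with $\ell_i(u),\ell_i(v)\neq \bot$ such that the matching decoder of \cref{projective_plane_scheme} accepts $(\ell_i(u),\ell_i(v))$.

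Correctness on true edges is immediate: if $(u,v)\in E(G)$ then $(u,v)\in M_{i_0}$ for some $i_0$, both $\ell_{i_0}(u)$ and $\ell_{i_0}(v)$ are non-$\bot$, and the projective-plane scheme (which errs only on non-edges) accepts them. For forgery, fix any adaptive adversary $\mathcal{A}$ playing the game of \cref{resilience_definition} against the combined scheme and let $(x_{k+1},x_{k+2})$ be its final non-adjacent pair. Forgery means the combined decoder outputs $1$, i.e.\ some $i$ satisfies the accepting condition. For each fixed $i$, the projection of $\mathcal{A}$'s transcript to the $i$-th coordinate yields a valid adaptive adversary against the standalone projective-plane scheme on the matching $M_i$ — independence of randomness across the $d+1$ schemes is precisely what guarantees that the remaining coordinates reveal nothing about the $i$-th scheme's labels and can be simulated by the reduction. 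Hence each $i$ is forged with probability at most $\varepsilon'$, and a union bound over $i\in[d+1]$ gives total forgery probability at most $(d+1)\varepsilon' = 1/2$.

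The only subtle point is this independence/reduction argument — making sure that a coordinate-wise adversary against the combined scheme yields an adversary against a single projective-plane matching scheme that is still \emph{adaptive} in the sense of \cref{resilience_definition} and that its ``new vertex'' constraint $x_{k+1},x_{k+2}\notin\{x_1,\ldots,x_k\}$ is preserved. Both are immediate because the constraint is syntactic on vertex identifiers (shared across coordinates) and the reduction simulates the other $d$ coordinates locally using its own independent randomness. Everything else is bookkeeping: the sub-label alphabet has size $O(1/(\varepsilon')^2)=O(d^2)$, the concatenated label has size $O(d\log d)$, and the polynomial-time edge coloring from Vizing's theorem makes the encoder efficient.
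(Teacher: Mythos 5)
Your proposal is correct and follows essentially the same route as the paper: cover the edges by $d+1$ matchings via Vizing's theorem, run an independent projective-plane scheme on each matching tuned so that the per-matching forgery probability is $\Theta(1/d)$, concatenate with $\bot$-padding, and union-bound over the matchings. The only cosmetic difference is that the paper union-bounds over the at most $d$ (not $d+1$) matchings in which both candidate vertices participate and hence sets $p+1\geq 4d$ rather than $p+1\geq 4(d+1)$, but this does not change the result.
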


\begin{proof}
Since we already have a solution for a matching, we assume $G$ is a graph with maximal degree $d>1$. Let $p$ be the first prime such that $p+1\geq 4d$. By Vizing's theorem, all edges of the graph $G$ can be covered by $d+1$ matchings, see Misra and Gries~\cite{MisraG92}. Apply the scheme for a matching using a projective plane of order $p$ as seen in \cref{projective_plane_scheme}. For every vertex $x\in V(G)$, the label would be $\ell '(x)= \left(\ell_1 (x),\ldots,\ell_{d+1} (x)\right)$ where $\ell_i (x)$ is the label given to $x$ in the independent scheme for the $i$th matching or $\perp$ if $x$ doesn't participate in the $i$th matching. The decoder will output 1 on a pair $\ell '(x),\ell '(y)$ if and only if there exists an index $i\in\left[d+1\right]$ such that $\ell_i(x)$ and $\ell_i(y)$ are not $\perp$ and the decoder of the scheme for matching $i$ outputs 1 on the pair $\ell_i (x),\ell_i (y)$.

Consider any strategy of the adversary guessing the vertices $x,y\in V(G)$ which are not connected by an edge. There are at most $d$ matchings in which both $x$ and $y$ participate, WLOG those are the first $d$ matchings out of the $d+1$ given by Vizing's theorem. By the union bound,

$$
\mathds{P}\left[\text{forgery}\right]
\leq\sum_{i=1}^d\mathds{P}\left[\text{forgery in matching }i\right]
\leq \frac{2d}{p+1}
\leq \frac{1}{2}
$$
Generally, we must decrease the original probability of forgery to order $1/d$ by choosing $p>\Omega(d)$, or else the probability of forgery will not be bounded above by a constant. In this scheme, we achieve a probability of forgery at most $1/2$ by using labels of size $d+1$ times the original label which has size at most $\ceil{2\log (p+1)}\leq \ceil{2\log (8d)}$, i.e.\ $O(d\log d)$.
\end{proof}

\begin{claim}
\label{thm: trees}
Trees with maximal degree $d$ have a labeling scheme (following \cref{resilience_definition}) of size $O(d)$.
\end{claim}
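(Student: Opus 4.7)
The plan is to combine three ingredients already assembled in the paper: the projective plane matching scheme of \cref{projective_plane_scheme}, a matching decomposition of the tree, and a retrieval data structure (hinted at in the introduction as the tool for squeezing out the extra $\log d$ factor). Since every tree is bipartite, König's edge coloring theorem partitions its edge set into \emph{exactly} $d$ matchings $M_1,\ldots,M_d$, saving one matching over the Vizing bound used in the preceding $O(d\log d)$ claim.

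On each matching $M_i$ I would independently invoke the projective plane scheme from \cref{claim: projective} with the order $p$ chosen so that the per-matching probability of forgery is $O(1/d)$. Each participating vertex then receives a sub-label $\ell_i(v)$ of size $O(\log d)$. A naive concatenation of the (at most $d$) sub-labels a vertex owns would cost $\Theta(d\log d)$ bits, which is too much; to bring this down to $O(d)$ I would encode the partial function $i\mapsto \ell_i(v)$ using a Dietzfelbinger--Pagh retrieval data structure, exactly along the lines sketched in the introduction. The essential property we rely on is twofold: (i) the scheme never has to store which matching carries which sub-label, because the decoder will just query all $d$ indices; and (ii) retrieval queries on matching indices outside the support of $v$ return values with high min-entropy from the adversary's point of view, so these ``garbage'' outputs can be handled by the min-entropy version of the projective plane analysis (\cref{projective_plane_min_entropy}).

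The decoder on input $(\ell'(u),\ell'(v))$ iterates over $i\in[d]$, retrieves both sub-labels, and accepts if some pair forms a valid point-line incidence. For an adjacent pair $(u,v)$, the unique matching $M_i$ containing the edge $uv$ always produces a valid incidence by construction, so the decoder never errs on edges. For a non-adjacent pair, each index $i$ either has both vertices genuinely participating in $M_i$ (false accept probability $O(1/d)$ by \cref{claim: projective}) or has at least one sub-label drawn from the high-min-entropy garbage distribution (false accept probability $O(1/d)$ by \cref{projective_plane_min_entropy}). A union bound over the $d$ indices then yields a constant probability of forgery, after which standard amplification (\cref{amplification_method}) drives it as low as desired while preserving the $O(d)$ label size.

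The main obstacle I expect to grapple with is the second property of the retrieval structure: namely, verifying that the Dietzfelbinger--Pagh construction can be arranged so that out-of-support queries yield outputs that remain hard to predict \emph{adaptively}, even after the adversary has inspected the labels of many other vertices whose retrieval tables share randomness or structural parameters with $v$'s. This is exactly the point where the standard (non-adversarial) retrieval analysis does not suffice and where we must exploit the specific structure of the Dietzfelbinger--Pagh scheme to argue that garbage outputs behave like fresh random projective plane points from the adversary's view. Once this is established, the rest of the argument is a routine combination of König, projective planes, and a union bound.
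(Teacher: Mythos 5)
Your proposal misses the key idea the paper uses for trees and, as written, does not actually achieve label size $O(d)$.

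The obstruction is the tension between the sub-label size and the decoder's union bound. Your decoder ``iterates over $i\in[d]$, retrieves both sub-labels, and accepts if some pair forms a valid point-line incidence,'' and you control the forgery probability by a union bound over the $d$ indices. For that union bound to yield a constant, each per-index false-accept probability must be $O(1/d)$, which forces the projective-plane order $p$ to be $\Omega(d)$ and hence each sub-label to be $\Theta(\log d)$ bits. But the Dietzfelbinger--Pagh structure of \cref{retrieval_data_structure} has size $nr + O(\log\log n)$ --- it saves the bits that would name the keys, not the bits that store the values. With $n\le d$ values of size $r=\Theta(\log d)$, the retrieval structure is still $\Theta(d\log d)$ bits, so nothing is gained over the ``covering by matchings'' claim that precedes it. Conversely, if you make the sub-labels constant-size, the per-index error is a constant and the union bound over $d$ indices gives forgery probability tending to $1$, not a constant. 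There is no choice of $p$ that makes your scheme both $O(d)$ bits and resilient. (The König-vs-Vizing distinction is cosmetic here: $d$ versus $d+1$ matchings does not affect the asymptotics.)

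What the paper actually does for trees avoids both the union bound over $d$ matchings and the retrieval structure entirely. Each vertex $x$ stores \emph{all} $d+1$ constant-size sub-labels explicitly, plus one extra $O(\log d)$-bit field $i(x)$: the index of the matching that joins $x$ to its parent. Since any edge in a rooted tree is a parent--child edge, if $x$ and $y$ are adjacent then the matching containing $xy$ is either $i(x)$ or $i(y)$, so the decoder tests \emph{only those two} matchings. The union bound is over $2$ events, not $d$, so each sub-label can safely have constant size $O(\log(1/\varepsilon))$, and the total label length is $(d+1)\cdot O(\log(1/\varepsilon)) + O(\log d) = O(d)$. The retrieval structure of \cref{retrieval_data_structure} is the tool the paper uses for \emph{general} bounded-degree graphs (\cref{retrieval_degree_d_scheme}), where the keys are distance-2 colors drawn from a universe of size $\Theta(d^2)$ and the decoder tests a single index; you appear to have transplanted that machinery to the tree case without the corresponding ``query a single index'' decoding rule that makes the constant-size sub-labels work.
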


\begin{proof}
Let $G$ be a tree with maximal degree $d$. For $\varepsilon\in \left(0,1\right)$, we construct a labeling scheme with a probability of forgery at most $\varepsilon$ by applying the projective plane method for a matching with $p$ sufficiently large to achieve a probability of forgery $\varepsilon'=\varepsilon/2$. We proceed in a manner similar to the general case, using the projective plane method for each of the $d$ matchings (since trees are Vizing class 1). Additionally, for every vertex $x\in V(G)$ let $i(x)$ be the ID of the matching the connects $x$ to its parent vertex in $G$ (for the root vertex assign any of the matching IDs for which it doesn't participate).

Under this scheme, the label for every $x\in V(G)$ is $\ell '(x)=\left(\left(\ell_1 (x),\ldots,\ell_{d+1} (x)\right),i(x)\right)$ where $\ell_j(x)$ is the label given to $x$ in the independent scheme for the $j$th matching or $\perp$ if $x$ doesn't participate in the $j$th matching. The decoder will output 1 on a pair $\ell '(x),\ell '(y)$ if and only if there exists and index $j\in\left[d\right]$ such that either $i(x)=j$ or $i(y)=j$, $\ell_j(x)$ and $\ell_j(y)$ are not $\perp$ and the decoder of the scheme for matching $j$ outputs 1 on the pair $\ell_j (x),\ell_j (y)$.

Now, let the pair $x,y\in V(G)$ be the candidates chosen by the adversary. The adversary wins if either $\dec(\ell_{i(x)}(x),\ell_{i(x)}(y))=1$ or $\dec(\ell_{i(y)}(x),\ell_{i(y)}(y))=1$ (assuming none of those values is $\perp$). The first would mean that the labels of $x$ and $y$ behave as if $y$ is the parent of $x$ and the second would mean they behave as if $x$ is the parent of $y$. On top of that, the probability of forgery in each of those cases is at most $\varepsilon'$ so the probability of forgery for the new scheme is at most $2\varepsilon'=\varepsilon$.

We conclude that there is a labeling scheme for trees of arity $d$ with label size $\ceil{2d\log (8/\varepsilon)+\log d}$ and probability of forgery at most $\varepsilon$.
\end{proof}

Our labels consist of several smaller labels (or sub-labels) which are derived from the scheme for a matching (see \cref{projective_plane_scheme}). A general difficulty in this setting is that high-degree vertices that participate in many matchings accumulate errors from the sub-labels given in each matching and therefore an adversary can use those to its advantage and increase the probability of forgery as $d$ grows. To get labels of size $O(d)$ one has to overcome this problem without scaling down the probability of forgery in the sub-labeling scheme for a matching. Trees allow for a simple solution since adjacency in trees is synonymous with having a parent-child relation and each vertex can identify the relevant sub-label for its parent.

\subsection{Labels of Size \texorpdfstring{$O(d)$}{O(d)}}

Our main objective is to construct a scheme using $O(d)$-bit labels, i.e.\ shrink the label size by a factor of $\log d$ compared to the results from \cref{covering_by_matchings}. We start by presenting a simplified version of the suggested scheme which has size $O(d\log d)$ and later explain how to improve it using a suitable data structure.

Suppose we take the graph $G$ with maximal degree $d$ and color\footnote{Unlike the coloring-based schemes suggested in \cref{coloring_paths} and \cref{coloring_degree_d}, the colors mentioned here are not used as labels themselves. They are used as a tool for the construction of a more complex scheme. In this scheme, the labels of two vertices with different colors are not necessarily decoded as having an edge.} its vertices such that no two vertices of distance 2 have the same color, i.e.\ all vertices in a neighborhood are colored using different colors. Note that every pair of colors defines an induced matching in $G$ (which we can label using a scheme for a matching).

Let $c\colon V\to \Gamma$ be a distance-2 coloring. For colors $i,j\in\Gamma$ We use $\ell_{i,j}$ (or $\ell_{j,i}$) to refer to sub-labels given by a scheme for a matching induced by vertices in $G$ colored $i$ and $j$. Then for every $x\in V(G)$ we  keep a label composed of its color and for each neighbor the color of the neighbor and the appropriate sub-label of length $O(\log(1/\varepsilon))$:
$$
\ell '(x)=
\left( c(x),
\forall y\in N_G(x)\colon \left( c(y), \ell_{c(x),c(y)} (x) \right) \right)
$$
A pair of labels $\ell '(x),\ell '(y)$ is decoded as 1 if and only if the colors from each label show up in the list of the other label and corresponding pair of sub-labels $\ell_{c(x),c(y)}(x),\ell_{c(y),c(x)}(y)$ is decoded as 1 by the decoder for this sub-label. In other words, we take from $x$'s list the sub-label corresponding to $c(y)$ and from $y$'s list the sub-label corresponding to $c(x)$ and test them.

Such distance-2 coloring is possible with $\lvert\Gamma\rvert =d^2+1$, so representing a color requires $2 \lceil \log d \rceil$ bits. This in turn means that the size of the labels is $\Theta(d\log d)$, since there are at most $d$ neighbors. Note that most of the information in these labels is not secret or random. The only part that is secret is the sub-labels and their total length is $O(d \log (1/\varepsilon))$. The main result of this section will shrink the label by utilizing a better data structure to find the correct sub-label for every relevant color, without storing the list of neighbor colors explicitly.

Dictionary data structures suitable for this task are constructed and discussed in the context of other applications in several papers such as \cite{chazelle2004bloomier,dietzfelbinger2008succinct,arbitman2010backyard}. In this section, we describe in general terms the problem solved by those constructions and adapt a specific data structure to be used in a labeling scheme for our adversarial model.

\paragraph{The Retrieval Problem:}\label{retrieval_problem}
Let $S$ be a static set of $n$ keys that come from a larger set $U$ of size $n^{O(1)}$. Let $f\colon U\mapsto\{0,1\}^r$ associate with each key an $r$-bit string. The retrieval problem is to construct a data structure that succinctly represents $f$ (on $S$) so that the value  $f(x)$ is returned when we retrieve the data associated with $x\in S$, while a query to retrieve the data associated with $x\notin S$ may return any $r$-bit string.

Usually, data structures for retrieval are used when the length $r$ of the string associated with each key is small and it is either known that only keys in $S$ will be queried or that the answers returned for keys not in $S$ do not matter. In our case, the associated strings are indeed small, sub-labels of constant size, but we {\em do} care about the results retrieved for keys {\em not} in $S$. Specifically, in our setting, we want the results to be unpredictable to the adversary, i.e.\ to have high min-entropy (e.g.\ as in a uniformly random string from some large set). Otherwise, an adversary can use this to find a vertex whose label is decoded as an edge when paired with a predictable retrieval result.

One way to obtain this requirement is by using a Bloom filter, which is a common technique to represent a set approximately. In a Bloom filter membership queries must return 1 when $x\in S$ but may err when $x\notin S$ and return 1 with probability at most $\varepsilon >0$. By adding a Bloom filter to the retrieval data structure, the scheme can actually detect most of the keys outside of $S$ and return $\perp$ for them. Any label paired with $\perp$ will not be decoded as an edge, making it useless for forgery. This actually wastes some bits, as the size of the best possible Bloom filter is roughly $n\log (1/\varepsilon)$ and we already need at least $nr$ bits to store the actual values. See also \cref{apdx:bloomier_lower_bound} which proves a general space lower bound for such constructions.

But as we will see, we can do better using the fact that from the adversary's point of view the strings in $f(S)$ have high entropy. We first prove a useful property of min-entropy and then introduce \cref{retrieval_data_structure} which uses the construction by Dietzfelbinger and Pagh~\cite{dietzfelbinger2008succinct}. By this theorem, we obtain a more succinct data structure where the result of retrieving $x\notin S$ is determined by the strings associated with the keys in $S$ (and some other random choices).

\begin{lemma}\label{min_entropy_does_not_decrease}
Let $X,Y$ be independent random variables over $\{0,1\}^r$ and let $Z=X\oplus Y$. Then
$$H_\infty [Z]\geq \max\{H_\infty [X], H_\infty [Y]\}$$
\end{lemma}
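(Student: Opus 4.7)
The plan is to fix an arbitrary target string $z \in \{0,1\}^r$ and bound $\mathds{P}[Z = z]$ directly by a uniform upper bound that depends only on the most likely outcome of $Y$ (and, by a symmetric argument, only on the most likely outcome of $X$). Once both bounds are established, taking the better of the two yields the desired inequality.

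Concretely, I would expand the probability using independence:
\[
\mathds{P}[Z = z] = \sum_{x \in \{0,1\}^r} \mathds{P}[X = x] \cdot \mathds{P}[Y = z \oplus x].
\]
Let $q = \max_{y} \mathds{P}[Y = y] = 2^{-H_\infty[Y]}$. Since $\mathds{P}[Y = z \oplus x] \leq q$ for every $x$, the sum is at most $q \cdot \sum_x \mathds{P}[X = x] = q$. Taking $\max$ over $z$ and then negative logarithm yields $H_\infty[Z] \geq H_\infty[Y]$. Repeating the argument with the roles of $X$ and $Y$ swapped (writing $\mathds{P}[Y = z \oplus x]$ outside and summing $\mathds{P}[X = x]$) gives $H_\infty[Z] \geq H_\infty[X]$, and hence $H_\infty[Z] \geq \max\{H_\infty[X], H_\infty[Y]\}$.

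There is essentially no obstacle here: the statement is a clean convexity-style fact about convolutions of distributions on a group (here $\mathbb{F}_2^r$ under $\oplus$), and the only ingredient used is independence of $X$ and $Y$ to factor the joint probability. The only minor point to articulate carefully is that for the symmetric direction one rewrites the sum by the substitution $x' = z \oplus x$, which is a bijection on $\{0,1\}^r$, so that the roles of $X$ and $Y$ are genuinely interchangeable inside the sum.
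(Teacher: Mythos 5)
Your proof is correct, and it takes a somewhat different (and arguably cleaner) route than the paper's. The paper passes through a notion of conditional min-entropy: it first asserts $H_\infty[Z] \geq H_\infty[Z \mid X]$, where $H_\infty[Z\mid X] := \sum_x \mathds{P}[X=x]\, H_\infty[Z\mid X=x]$, and then computes $H_\infty[Z\mid X] = H_\infty[Y]$ using independence. Note that the first step is \emph{not} a generic fact about this flavor of conditional min-entropy — since $-\log$ is convex, Jensen actually pushes the inequality the wrong way, and one can cook up dependent examples where $H_\infty[Z] < H_\infty[Z\mid X]$ under this definition. It is valid here only because, in the special setting of the lemma, $\max_z \mathds{P}[Z=z\mid X=x] = \max_y \mathds{P}[Y=y]$ is the \emph{same} for every $x$, so Jensen is an equality and the chain collapses to exactly the one-line bound you wrote: $\max_z \mathds{P}[Z=z] \le \mathbb{E}_x\bigl[\max_z \mathds{P}[Z=z\mid X=x]\bigr] = \max_y \mathds{P}[Y=y]$. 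Your direct expansion $\mathds{P}[Z=z] = \sum_x \mathds{P}[X=x]\,\mathds{P}[Y = z\oplus x] \le \max_y \mathds{P}[Y=y]$ gets to the same place without introducing conditional min-entropy at all, and it makes the role of independence (factoring the joint) and of the group structure (the map $x \mapsto z \oplus x$ being a bijection, for the symmetric direction) explicit. That is a small but genuine improvement in transparency; both proofs are otherwise the same calculation.
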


\begin{proof}
\begin{align*}
H_\infty [Z] &
\geq H_\infty \left[ Z\given X \right] \\&
= \sum_{x\in X}\mathds{P}[X=x] H_\infty \left[ Z\given X=x \right] \\&
= \sum_{x\in X}\mathds{P}[X=x] \left( -\log \max_{z\in Z} \mathds{P} \left[ Z=z \given X=x \right] \right) \\&
= \sum_{x\in X}\mathds{P}[X=x] \left( -\log \max_{y\in Y} \mathds{P} \left[ Z=x\oplus y \given X=x \right] \right) \\&
= \sum_{x\in X}\mathds{P}[X=x] \left( -\log \max_{y\in Y} \mathds{P} \left[ Y=y \given X=x \right] \right) \\&
= H_\infty \left[ Y\given X \right] \\&
= H_\infty [Y]
\end{align*}
The same argument can be used to show $H_\infty [Z]\geq H_\infty [X]$.
\end{proof}

\begin{theorem}
\label{retrieval_data_structure}
For any $r\in\mathbb{N}$, $K\geq 0$ and sufficiently large $n$,
there is a data structure of size $nr+O(\log\log n)$ bits for the retrieval problem with the following additional property: if for each $x \in S$ the string $f(x)$ is chosen independently from a distribution with min-entropy $K$, then the result of retrieving any $x\notin S$ has min-entropy at least $K$.
\end{theorem}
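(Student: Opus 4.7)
The plan is to take the retrieval data structure of Dietzfelbinger and Pagh~\cite{dietzfelbinger2008succinct}, which already achieves the claimed $nr+O(\log\log n)$ bit bound, and verify that with a tiny modification to how its table is populated it automatically inherits the required min-entropy property. Their construction fixes $k=O(1)$ hash functions $h_1,\ldots,h_k:U\to[m]$, stores a table $T\in(\{0,1\}^r)^m$, and answers a retrieval on $x$ by $T[h_1(x)]\oplus\cdots\oplus T[h_k(x)]$. The table is built by solving the linear system $AT=F$ over $GF(2)$ bit-plane by bit-plane, where the $i$-th row $a_{x_i}\in\{0,1\}^m$ of $A$ is the mod-$2$ indicator of the hash positions of $x_i$ and $F=(f(x_1),\ldots,f(x_n))$. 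With the standard parameters, $A$ has full row rank with high probability, so in each bit-plane the solution lies in an affine coset $T_0^{(c)}+\ker A$. My only modification is to pick each bit-plane independently and \emph{uniformly at random} from its coset: $T^{(c)}=T_0^{(c)}+R^{(c)}$ with $R^{(c)}$ uniform in $\ker A$. This changes only the value of $T$, not its encoding length, so the $nr+O(\log\log n)$ space bound is preserved.

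Now fix any $x\notin S$ and let $a_x\in\{0,1\}^m$ be its hash-indicator vector, which is non-zero by construction of the DP hash setup. The retrieved value is $a_xT=a_xT_0+a_xR$, and I analyze two cases. \emph{Case 1:} $a_x\notin\mathrm{rowspace}(A)=(\ker A)^\perp$. Then the linear functional $v\mapsto\langle a_x,v\rangle$ on $\ker A$ is non-trivial, so $\langle a_x,R^{(c)}\rangle$ is an unbiased coin for each bit-plane $c$, independently across $c$. Hence $a_xR$ is uniform in $\{0,1\}^r$ and the retrieval has min-entropy $r\geq K$. \emph{Case 2:} $a_x\in\mathrm{rowspace}(A)$, i.e.\ $a_x=\sum_{i\in I}a_{x_i}$ for some $I\subseteq[n]$, which must be non-empty since $a_x\neq 0$. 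Then $a_xR=0$ and $a_xT=a_xT_0=\sum_{i\in I}(AT_0)_i=\bigoplus_{i\in I}f(x_i)$. Pick any $i^*\in I$ and write the output as $f(x_{i^*})\oplus Y$ where $Y$ depends only on $\{f(x_i):i\in I\setminus\{i^*\}\}$. By \cref{min_entropy_does_not_decrease} together with the hypothesis that $f(x_{i^*})$ is independent of the other $f(x_i)$ and has min-entropy at least $K$, the retrieval has min-entropy at least $K$.

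The only real subtlety is guaranteeing $a_x\neq 0$ for every $x\in U$, which holds automatically for DP's setup: with $k=3$, after collapsing repeated hash positions the XOR of standard basis vectors is either one or three $e_j$'s, hence non-zero. The two cases then cover every $x\notin S$ uniformly, delivering min-entropy at least $K$ in both, and the space bound is inherited from the unmodified Dietzfelbinger--Pagh analysis since uniformly random sampling of $T$ affects only its value, not its encoding length.
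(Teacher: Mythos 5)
Your proof is correct and rests on the same core argument as the paper's: express the retrieval output for $x\notin S$ as an XOR involving at least one $f(x_i)$, and invoke Lemma~\ref{min_entropy_does_not_decrease}. The difference is in how you each instantiate the Dietzfelbinger--Pagh linear algebra. The paper describes the data structure as a length-$n$ table tied to $(f(x_1),\ldots,f(x_n))$ by an invertible $n\times n$ matrix $C$; with no kernel, the query vector for any $x\notin S$ (which is guaranteed non-zero because the hash positions do not repeat) necessarily maps to a non-empty subset $S_x\subseteq S$, so the lemma applies directly and there is no residual freedom to worry about. You instead take the rectangular picture of an $n\times m$ hash matrix $A$ of full row rank with $m\geq n$, which forces you to confront the possibility that the query indicator $a_x$ lies outside the rowspace of $A$. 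Your fix --- sampling each bit-plane of $T$ uniformly from its affine coset $T_0+\ker A$ so that the out-of-rowspace case yields a uniform (hence min-entropy $r\geq K$) output --- is clean and makes the argument robust regardless of whether the kernel is trivial. So your proof is a mild generalization that covers an eventuality the paper's square-matrix picture sidesteps by construction. Both proofs lean on DP as a black box for the $nr+O(\log\log n)$ bound, and you correctly note that your randomized coset choice doesn't change the encoding length; the only thing worth flagging is that the rectangular $m>n$ picture by itself would give a table of $mr>nr$ bits, so the precise space bound really is inherited from the finer DP construction rather than re-derived from your sketch --- but that caveat applies equally to the paper.
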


\begin{proof}
The retrieval data structure is based on the construction of \cite{dietzfelbinger2008succinct}. We will only show that the additional property is satisfied by the design of this retrieval data structure. The data structure of \cite{dietzfelbinger2008succinct} consists of $n$ values $a_1,\ldots,a_n\in\{0,1\}^r$ such that there is an invertible $n\times n$ binary matrix $C\in GL_n (2)$ satisfying
$$
\begin{pmatrix}
a_1\\ \vdots \\ a_n
\end{pmatrix}
= C \begin{pmatrix}
f(x_1) \\ \vdots \\ f(x_n)
\end{pmatrix}
$$
where the retrieved set is $S=\{x_1,\ldots,x_n\}$. To clarify, in this context addition is bitwise XOR of strings in $\{0,1\}^r$ and multiplication is bit-vector multiplication. Additionally, we use $k=\Theta(\log n)$ hash functions $h_1,\ldots,h_k$ which randomly map every element of $S$ to a subset of $\left[n\right]$. The construction of \cite{dietzfelbinger2008succinct} also promises subsets corresponding to elements not in $S$ are of size $k$, i.e.\ for every $x\in U$ the values $h_1(x),\ldots,h_k(x)$ do not repeat. Those are used to compute $C$ and retrieve any $x\in U$. In general, the result of retrieving $x$ is $a_{h_1(x)}\oplus a_{h_2(x)}\oplus\ldots\oplus a_{h_k(x)}$ and for $x_i\in S$ we have that $f(x_i)=a_{h_1(x_i)}\oplus a_{h_2(x_i)}\oplus\ldots\oplus a_{h_k(x_i)}$.

Consider the case where for each $x\in S$ the string $f(x)$ is independently chosen from a distribution with min-entropy $K$, e.g.\  uniformly at random from a subset of $\{0,1\}^r$ which has size $K$. For any $x\in U$, the retrieval result is
$$a_{h_1(x)}\oplus a_{h_2(x)}\oplus\ldots\oplus a_{h_k(x)}
= \bigoplus_{i=1}^k \left( C \begin{pmatrix}
f(x_1) \\ \vdots \\ f(x_n)
\end{pmatrix} \right)_{h_i (x)}
= \bigoplus_{y\in S_x\subseteq S} f\left(y\right)$$
where $S_x$ is a non empty subset of $S$ (because $C$ has full rank).

Applying \cref{min_entropy_does_not_decrease} (more than once if needed) for $\bigoplus_{y\in S_x\subseteq S} f\left(y\right)$ we conclude that the distribution of the retrieval result for $x$ has min-entropy at least $K$.
\end{proof}

We use a retrieval data structure to construct a resilient labeling scheme that associates with every neighbor the relevant sub-label without ``naming" the neighbors explicitly. This keeps the size of the labels to be $O(d)$.

\begin{theorem}\label{retrieval_degree_d_scheme}
The family of graphs with maximal degree $d$ has a labeling scheme (following \cref{resilience_definition}) of size $O(d)$ bits. More specifically, for probability of forgery $\varepsilon\in (0,1)$ the size of the label is $2d\log(1/\varepsilon)(1+o(1))$.
\end{theorem}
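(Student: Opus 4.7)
The plan is to layer three ingredients: a distance-$2$ coloring of $G$ that decomposes $E(G)$ into induced matchings, the projective plane matching scheme of \cref{projective_plane_scheme}, and the min-entropy-preserving retrieval data structure of \cref{retrieval_data_structure}. Concretely, I would first fix a proper distance-$2$ coloring $c\colon V(G)\to[d^2+1]$ (which exists since $G^2$ has maximum degree at most $d^2$), so that for every unordered pair of colors $\{i,j\}$ the set of $G$-edges with one endpoint of each color forms an induced matching $M_{i,j}$. I would then pick a prime $p$ with $p+1\geq 2/\varepsilon$ and run the projective plane scheme of \cref{projective_plane_scheme} independently on every $M_{i,j}$; this gives, for every vertex $x$ and every $j\in S_x:=\{c(y)\suchthat y\in N_G(x)\}$, a sub-label $f_x(j)\in\{0,1\}^r$ of length $r=\ceil{\log(p^2+p+1)}=2\log(1/\varepsilon)(1+o(1))$ that is uniform over the points and lines of the plane and independent across different matchings.

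Next I would encode the partial function $j\mapsto f_x(j)$ on the domain $S_x$ via \cref{retrieval_data_structure}, obtaining a data structure $R_x$ of size $|S_x|\cdot r+O(\log\log n)\leq d\cdot r+O(\log\log n)$ bits. The label is $\ell'(x)=(c(x),R_x)$, and the decoder, on input $(\ell'(x),\ell'(y))$, first retrieves $s_x$ from $R_x$ under key $c(y)$ and $s_y$ from $R_y$ under key $c(x)$, and then applies the projective plane matching decoder to $(s_x,s_y)$. The resulting label length is $O(\log d)+d\cdot r+O(\log\log n)=2d\log(1/\varepsilon)(1+o(1))$ as claimed, and correctness on edges is immediate: when $(x,y)\in E(G)$ we have $c(y)\in S_x$ and $c(x)\in S_y$, both retrievals recover the genuine sub-labels in $M_{c(x),c(y)}$, and the projective plane decoder outputs $1$.

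For the forgery analysis, fix non-adjacent candidates $x,y$. If $c(y)\in S_x$ and $c(x)\in S_y$, then $s_x$ and $s_y$ are the authentic sub-labels of $x$ and $y$ in $M_{c(x),c(y)}$; since $x$ and $y$ are not matched in that matching and the sub-labels of $M_{c(x),c(y)}$ are drawn independently from the labels in every other matching (and hence from everything else the adversary may have queried), \cref{claim: projective} bounds the forgery probability by $2/(p+1)\leq\varepsilon$. Otherwise, without loss of generality $c(y)\notin S_x$, and \cref{retrieval_data_structure} tells us that $s_x$ is a non-trivial XOR of the independent sub-labels $\{f_x(j)\suchthat j\in T_x\}$ for some subset $T_x\subseteq S_x$; by \cref{min_entropy_does_not_decrease} the conditional min-entropy of $s_x$ given the adversary's view is at least the largest conditional min-entropy of any individual $f_x(j)$ in the XOR, so combining with \cref{projective_plane_min_entropy} applied to the pair $(s_x,s_y)$ yields forgery probability at most $\varepsilon$.

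The main obstacle is this second case, where one must preclude the pathological scenario in which the adversary queries every matched partner $z_j$ of $x$ with $j\in T_x$, pinning each $f_x(j)$ to a set of $p+1$ possibilities and collapsing the min-entropy of $s_x$ to roughly $\log(p+1)$. I would handle this by treating the hash functions internal to $R_x$ as private randomness that the adversary cannot access without querying $\ell'(x)$, so that the XOR subset $T_x$ remains hidden from the adversary's strategic choice of which $z_j$'s to query; combined with the independence of sub-labels across different matchings, this keeps at least one $f_x(j)$ inside $T_x$ at full min-entropy $\log(p^2+p+1)$ with overwhelming probability, which is exactly what is needed for the bound above. The remaining ingredients (the size calculation, the correctness on edges, and Case~A of the forgery analysis) follow routinely from the cited results.
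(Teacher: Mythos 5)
Your construction is exactly the one in the paper: fix a distance-$2$ coloring with $d^2+1$ colors, run the projective plane scheme of \cref{projective_plane_scheme} on each induced matching $M_{i,j}$, and compress the map ``color of neighbor $\mapsto$ sub-label'' with the min-entropy-preserving retrieval structure of \cref{retrieval_data_structure}. The size bookkeeping and Case~1 of the forgery analysis also match the paper. The problem is in how you handle Case~2, where $c(y)\notin S_x$.

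You correctly observe that the adversary, by querying the matching partners $z_j$ of $x$, can pin each sub-label $f_x(j)$ to one of $p+1$ elements, so that the XOR $s_x=\bigoplus_{j\in T_x} f_x(j)$ has conditional min-entropy only $\log(p+1)$ rather than $\log(p^2+p+1)$. But you then treat this as a pathology that must be precluded, and propose to do so by keeping $T_x$ hidden so that ``at least one $f_x(j)$ inside $T_x$'' retains full min-entropy ``with overwhelming probability.'' This step is wrong: there are at most $d$ neighbors of $x$, so the adversary need not guess which $z_j$ to query -- it simply queries \emph{all} of them, pinning every $f_x(j)$ for $j\in C_x$ to $p+1$ possibilities regardless of what $T_x$ is. Hiding the hash functions inside $R_x$ does nothing to prevent this, so the ``overwhelming probability'' claim fails against the obvious adversary. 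In fact no preclusion is needed: the paper's \cref{retrieval_data_structure} and \cref{min_entropy_does_not_decrease} are invoked precisely to show that the retrieval output retains min-entropy \emph{at least} $\log(p+1)$ even after all neighbors are queried, and \cref{projective_plane_min_entropy} then gives forgery probability at most $2^{-H_\infty}\cdot O(1)\leq 2/(p+1)\leq\varepsilon$ for that min-entropy. The quantity $\log(p+1)$ is the target, not an obstacle. Dropping the last paragraph of your Case~2 and directly applying \cref{projective_plane_min_entropy} with min-entropy $\log(p+1)$, as the paper does, repairs the argument.
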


\begin{proof}
Let $G$ be a graph with maximal degree $d$. Let $\sigma\colon V(G)\mapsto C$ be a coloring of $G$ so that no two vertices of distance 2 have the same color. This can be done by using a proper coloring of the power graph $G^2$. Since the maximal degree of $G^2$ is at most $d^2$,
$\lvert C \rvert = d^2+1$ colors suffice. Now, each vertex in $G$ has around it vertices that are all colored using different colors.
Next, for every pair of colors $c_1,c_2\in C$ we look at the induced subgraph $G_{c_1,c_2}=G\left[\sigma^{-1}\left(\{c_1,c_2\}\right)\right]$ whose vertex set is all vertices colored $c_1$ or $c_2$. This graph is a matching since a vertex in $G$ colored $c_1$ can have at most one neighbor colored $c_2$ and vice versa. We apply the scheme based on a projective plane of order $p$ from \cref{projective_plane_scheme} to each of the $\binom{d^2+1}{2}$ matchings, giving every vertex a sub-label of size $\ceil{\log (p^2+p+1)}$ in every matching where it participates.

This scheme labels every $x\in V(G)$ by $\sigma (x)$ and an instance of the data structure promised by \cref{retrieval_data_structure} for the function $f_x\colon C\mapsto \{0,1\}^{\ceil{\log (p^2+p+1)}}$ which maps colors of neighbors of $x$ to the sub-label given to $x$ in the matching corresponding to that neighbor. Meaning, for every $y\in V(G)$ such that $(x,y)\in E(G)$ we have $f_x\left( \sigma (y)\right)=\ell_{\sigma (x),\sigma (y)} (x)$ where $\ell_{c_1,c_2}$ is the labeling function for the matching $G_{c_1,c_2}$. The subset of $C_x\subseteq C$ for which $f_x$ values are valid is the subset of colors in $x$'s neighborhood which we know has size at most $d$.

A pair of labels for vertices $x,y\in V(G)$ is decoded as 1 if and only if the pair $f_x\left( \sigma (y)\right), f_y\left( \sigma (x)\right)$ is decoded as 1 by the unique decoder corresponding to the pair of colors $\sigma(x), \sigma(y)$. There are two cases to consider:

\begin{enumerate}

\item When $\sigma(y)\in C_x$ and $\sigma(x)\in C_y$, the values $f_x\left( \sigma(y)\right)$ and $f_y\left(\sigma(x)\right)$ are indeed the sub-labels given to $x$ and $y$, respectively, in a matching in which $x$'s neighbor is colored $\sigma(y)$ and $y$'s neighbor is colored $\sigma(x)$. Here, the adversary wins if it can win against the labeling scheme used for $G_{\sigma (x),\sigma (y)}$, which has a probability of forgery at most $\frac{2}{p+1}$.
    
\item When the color $\sigma(y)\notin C_x$ (or the other way around), note that the adversary never asked for the labels of either $x$ or $y$, so the internal structure of the retrieval data structure for $x$ is unknown and $f_x\left( \sigma (y)\right)$ is the result of retrieving a value not in the valid set $C_x$. By the additional property of the retrieval data structure of \cref{retrieval_data_structure}, the distribution of this result has min-entropy at least $\log (p+1)$. This is true, since when assuming that the adversary knows the labels for the neighbors of $x$, the values stored for the keys in $C_x$ are independent uniformly random points on certain lines (corresponding to the label of each neighbor) in the projective plane of order $p$.

By the argument in \cref{projective_plane_min_entropy}, the probability of forgery in this case is determined by the min-entropy so it is still at most $\frac{2}{p+1}$.

\end{enumerate}
The label for a vertex requires $\ceil{\log (d^2+1)}$ bits to encode its color in addition to $d\ceil{\log (p^2+p+1)}+O(\log\log d)$ bits for the retrieval data structure. 
For any $\varepsilon\in (0,1)$, we choose the first prime $p+1\geq 2/\varepsilon$ and this construction yields a scheme with probability of forgery at most $\varepsilon$ and labels of size at most $\ceil{2d\log(4/ \varepsilon)+\log(d^2+1)}+O(\log\log d)$.
\end{proof}

\begin{remark}
Note that for every $x\in V(G)$ the only values of $f_x$ that matter are for the $d_x=\deg (x)\leq d$ colors of the neighbors of $x$, which means that the size of the retrieval data structure (and the label) for $x$ is proportional to the local degree of $x$ and not the global maximal degree $d$. Meaning, that the size of the label for $x$ is $2d_x\log (4/\varepsilon)+O(\log d)$.
\end{remark}

\begin{remark}
The construction outlined in \cref{retrieval_degree_d_scheme} is computationally efficient and, furthermore, can be performed in a distributed manner. Consider the two key components used in the proof: distance-2 coloring of the graph and the retrieval data structure. There are recent algorithms for distributed distance-2 coloring using $d^2+1$ colors in $O(\log n)$ rounds (and even less in the case $d\ll n$) in the CONGEST model, see Halld{\'o}rsson et al.~\cite{halldorsson2020distance,halldorsson2020coloring}. The retrieval data structure of \cite{dietzfelbinger2008succinct} is constructed locally in $poly(d)$ time for each vertex and $poly(d)\cdot n$ overall.
\end{remark}

\

Considering the scheme in \cref{retrieval_degree_d_scheme}, it seems that there should be a $\Omega (d)$ lower bound for the label size, especially when there are many vertices of degree $d$. To tackle the lower bound, one has to cautiously exclude ``easy" cases of high degree like a clique with self-loops or a collection of such cliques where we can have the labels be the clique number. Those labels have constant size independent of $d$ and no errors.

\section{Size Lower Bound by Learning Label Distributions}\label{section:learning}

We show a lower bound for the size of the labels in a resilient scheme by describing an adequate adversarial strategy. Generally, it can be hard to model adversarial behavior which is effective against elaborate and artificial labeling schemes. In this case, there is not much an adversary {\em can} do, which results in a natural strategy emerging. As a corollary, we will derive the following lower bound showing that for small label sizes the adversary can achieve a probability of forgery close to 1:

\begin{theorem}\label{degree_d_lower_bound}
Let $\alpha\in (0,1)$. For every $d\in\mathbb{N}$, there exists some $n>0$ such that for any graph containing $n$ non-adjacent vertices of degree at least $d$ and any labeling of that graph (following \cref{resilience_definition}) using labels of size less than $\frac{\alpha^4\cdot d}{34560}$ bits, the probability of forgery by an adversary is at least $1-\alpha$.
\end{theorem}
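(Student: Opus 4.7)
The plan is to exhibit an adversarial strategy whose analysis is driven by the Naor--Rothblum \cite{naor2006learning} algorithm for learning adaptively changing distributions (ACDs), and to show that whenever the label length $c$ is below the stated threshold, this strategy succeeds with probability at least $1-\alpha$ on any graph containing enough pairwise non-adjacent vertices of degree $\ge d$.

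\textbf{Step 1 (Setting up the ACD instance).} Given the scheme $\mathcal S$ and a graph $G$ with many high-degree, pairwise non-adjacent vertices, the adversary fixes a target vertex $x^\ast$ of degree $\ge d$ that it will never query, together with $d$ of its neighbors $\gamma_1,\dots,\gamma_d$. Thinking of $\ell(x^\ast)$ as the ``secret'' and $\ell(\gamma_1),\ell(\gamma_2),\dots$ as the sequence of ``public messages,'' the adversary queries $\gamma_1,\dots,\gamma_d$ in order; after the $(i-1)$-th query the scheme induces a true conditional distribution $D_i$ on $\ell(\gamma_i)$ that depends on the unknown $\ell(x^\ast)$. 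The adversary also maintains its own predictor $\tilde D_i$ for $\ell(\gamma_i)$, obtained by marginalizing against its current Bayesian posterior on $\ell(x^\ast)$.

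\textbf{Step 2 (The dichotomy).} The key technical ingredient I would borrow from \cite{naor2006learning} is a per-round dichotomy with a parameter $\eta$ to be tuned: in each round $i$ either (a) $\Delta(\tilde D_i, D_i)\le \eta$, so the adversary predicts neighbor labels well, or (b) observing $\ell(\gamma_i)$ reduces the entropy of the posterior on $\ell(x^\ast)$ by an expected $\Omega(\eta^2)$ bits. Since the initial posterior has entropy at most $c$, case (b) can occur in at most $O(c/\eta^2)$ of the $d$ rounds. Setting $\eta=\Theta(\alpha^2)$ and using $c<\alpha^4 d/34560$ forces at least a $(1-O(\alpha))$-fraction of the rounds to be ``good'' in sense (a), which yields a posterior on $\ell(x^\ast)$ that is $O(\alpha)$-close in statistical distance to the true posterior.

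\textbf{Step 3 (From learning to forgery).} Using the $n$ available non-adjacent high-degree vertices, the adversary runs Step 2 independently around two targets $x^\ast, y^\ast$, taking disjoint neighborhoods (guaranteed because $x^\ast$ and $y^\ast$ are non-adjacent, and $n$ is taken large enough to also avoid incidental overlaps with earlier queries). It then declares $(x^\ast,y^\ast)$ as its candidate pair. The point of Step 2 being satisfied in the majority of rounds is that, from the scheme's vantage point, the adversary's posterior on $(\ell(x^\ast),\ell(y^\ast))$ is $O(\alpha)$-close to the joint label distribution of a \emph{true} adjacent pair generated by $\mathcal S$; such pairs are decoded as edges with probability $1$ (the scheme errs only on non-edges). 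Therefore $(\ell(x^\ast),\ell(y^\ast))$ is decoded as adjacent with probability $1-O(\alpha)$, while $x^\ast\not\sim y^\ast$ in $G$, delivering the forgery.

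\textbf{Main obstacles.} The principal difficulty is a careful accounting of how statistical distances compose across (i) the ACD dichotomy itself, where the $\eta$-vs-entropy trade-off must be calibrated to absorb the factor $\alpha^4$ coming from the squared dependence twice (once for the dichotomy step size and once for passing from marginal to joint posteriors), (ii) the concatenation of the two independent ACD runs for $x^\ast$ and $y^\ast$, and (iii) turning a TV-close prediction of the label distribution into a lower bound on the decoder's probability of outputting $1$. Tracking these carefully is what produces the explicit constant $\alpha^4/34560$. A secondary obstacle is the choice of $n$: the Bayesian posteriors the adversary manipulates live over a space of size $2^c$, so maintaining and updating them to statistical accuracy $\alpha$ requires an effective sample size that is double-exponential in $d$, which explains both the need for many non-adjacent high-degree vertices and the subsequent improvement to $2^{O(d^2)}$ via the Babai--Kimmel technique in \cref{section:smmpc}.
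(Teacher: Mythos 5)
Your Steps 1 and 2 follow the paper's setup closely: define the neighbor-label distribution as an adaptively changing distribution with $\ell(x^\ast)$ as the secret, and use the Naor--Rothblum dichotomy (either the next message is well-predicted, or observing it drops the posterior entropy of the secret by $\Omega(\eta^2)$). The budget argument that at most $O(c/\eta^2)$ rounds can be ``learning'' rounds, so with $c < \alpha^4 d/34560$ most rounds must be ``predicting'' rounds, is also essentially what the paper does (\cref{entropy_drop_lemma}, \cref{probabilistic_recurrence}, \cref{bad_centers_lemma}).

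The gap is in Step 3, and it is a real one. You declare the two unqueried centers $(x^\ast,y^\ast)$ as the forgery candidates and argue that ``the adversary's posterior on $(\ell(x^\ast),\ell(y^\ast))$ is $O(\alpha)$-close to the joint label distribution of a true adjacent pair.'' This does not follow from anything established in Steps 1--2, and it is generally false. Learning the marginal posteriors of $\ell(x^\ast)$ and $\ell(y^\ast)$ to TV accuracy tells you nothing about whether that pair of labels \emph{decodes as an edge}: both posteriors could be sharply concentrated on labels $u,v$ with $\dec(u,v)=0$, since $x^\ast$ and $y^\ast$ are not adjacent and the scheme has no reason to assign them compatible labels. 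What actually forces the decoder to accept is the one-sided error condition, and it only bites for labels that lie in the support of a \emph{true next-neighbor} distribution: any label that could have been assigned to a neighbor $\gamma^{x}_{i+1}$ of a center $x$ must be accepted by $\ell(x)$. The paper therefore selects as candidates a center $x$ together with the \emph{unqueried next neighbor $\gamma^y_{j+1}$ of a different center $y$}, not two centers. The chain of statistical-distance inequalities (\cref{close_distributions_lemma}) then shows that because $D_j^{\ell(y)}$ is close to the recorded $D_y$, which is close to $D_x$, which is close to $D_i^{\ell(x)}$, the label of $\gamma^y_{j+1}$ is unlikely to land in the set of labels rejected by $\ell(x)$ --- precisely because $D_i^{\ell(x)}$ assigns that set probability zero.

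Relatedly, you do not supply the mechanism for finding the two ``compatible'' centers. The paper needs a packing argument (\cref{packing_stable_centers}) over $\mathbb{R}^{2^s}$ to guarantee that among $(2/\varepsilon)^{2^s}$ stable centers, two have $\varepsilon$-close recorded distributions; that is what creates the double-exponential requirement on $n$, not (as you write) that ``maintaining and updating posteriors to statistical accuracy $\alpha$ requires an effective sample size that is double-exponential in $d$.'' You also need a version of the ``bad center'' bound (\cref{bad_centers_lemma}) to rule out centers that exhaust all $d$ neighbor queries without ever stabilizing. Fixing Step 3 to target a next neighbor of a second stable center, adding the packing step to find such a second center, and accounting for bad centers would bring your argument into line with the paper's.
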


\begin{remark}
At first glance \cref{degree_d_lower_bound} appears stronger than what is required to show a lower bound in the setting of \cref{retrieval_degree_d_scheme} (which gives an upper bound). An alternative claim would be that to construct a resilient labeling scheme for the family of graphs with maximal degree $d$, a large label size is required. This would allow the adversary to select the hardest graph from the family to label and also solve the problem of graphs like the clique being easy to label.

Instead, we show that almost any graph in the family will suffice. In other words, resilience requires long labels {\em even if the particular graph is known when designing the labeling scheme}. In some sense, this is {\em the} interesting case, since the family of graphs with maximal degree $d$ is quite large which means the alternative formulation of the lower bound would be easy even in the non-adversarial setting. As an example, consider the trees with maximal degree $d$. Those can be labeled with error $\varepsilon$ using $2\log(1/\varepsilon )$ bits in the non-adversarial case (see \cref{apdx:degree_d_non_adversarial}). According to \cref{degree_d_lower_bound}, longer labels are required in the adversarial case.
\end{remark}

The main technique we will use in this section is based on the idea of ``If I can’t simulate what is about to happen, then I can expect to learn something by observing it" which is at the heart of the algorithm by Naor and Rothblum~\cite{naor2006learning} for learning {\em adaptively changing distributions}.

We formulate the aforementioned strategy in terms of learning the distribution of labels for the next neighbor of a center vertex in a collection of neighborhoods. The adversary essentially asks for the labels of the neighbors of a center (or many centers) one at a time and at some point decides either to guess the label of the center or simulate the distribution of labels for the next neighbor. Then, the center or the next neighbor is selected as a candidate together with a matching vertex from a different center. In some sense, the labeling of the neighborhood of each center is treated as a stochastic object the adversary is trying to imitate and when this imitation is successful, the adversary has enough vertices whose labels it hadn't queried to find a pair that has incorrect decoding with high probability. Details follow after a short overview of ACDs and known results from \cite{naor2006learning} regarding this model.

\paragraph{Adaptively Changing Distributions}
To model distributions which change adaptively over time, we use a process with some randomized mapping between states that are composed of two parts: a public state $p$ and a secret state $s$. The mapping is described by a generation algorithm $\mathcal{G}$ and sampling algorithm $\mathcal{S}$. The initial public state $p_0$ and the initial secret state $s_0$ are generated by $\mathcal{G}$ (which receives some initial input) and each new state is generated by $\mathcal{S}$ as a function of the current state and some randomness. In other words, states of the ACD are determined by $\mathcal{S}\colon S_p\times S_s\times R\mapsto S_p\times S_s$ where $S_p$ is the set of public states, $S_s$ is the set of secret states and the randomness is assumed to be taken from a set $R$.

We would like to consider the task of learning the distribution of the next public state with an algorithm that knows $\mathcal{D}$ but only has access to the sequence $p_0,p_1,p_2,\ldots$ of public states. In the context of a labeling scheme for a graph $G$, we can think of the distribution of labels for a neighbor of a vertex $x\in V(G)$ as an ACD. The generating algorithm gives $x$ a secret label $\ell(x)$ (using the graph as an auxiliary input) and the adversary can activate the sampling algorithm by asking for the label of the next neighbor of $x$ in random order. Thus, the task is to learn the distribution of labels induced by the unknown $\ell(x)$ and all previously known labels.

\begin{definition}\label{def:acd_distribution}
The distribution $D_i^s\left(p_0,\ldots,p_i\right)$ is the conditional distribution of the next public state $\mathcal{S}$ will output (in the $(i+1)$th activation) given that the past public states were $p_0,\ldots,p_i$ and given also that the initial secret generated by $\mathcal{G}$ was $s$.

For any sequence of public states $p_0,\ldots,p_i,p_{i+1}\in S_p$ and initial secret $s$, we denote by $D_i^s\left(p_0,\ldots,p_i\right)\left[p_{i+1}\right]$ the probability, conditioned on the observed public state sequence $p_0,\ldots,p_i$ and on the fact that the initial secret generated by $\mathcal{G}$ was $s$, that $\mathcal{S}$'s $(i+1)$th activation will result in $p_{i+1}$.

Generally, when the sequence of past public states is clear from the context we will write $D_i^s$. Also, when the generating algorithm $\mathcal{G}$ receives an auxiliary input, $D_i^s$ will also depend on the particular input given.
\end{definition}

\paragraph{Learning ACDs}
Give an algorithm for learning an ACD $\left(\mathcal{G},\mathcal{S}\right)$ on input length $n$, the process of learning an ACD can be described as follows:
\begin{enumerate}
    \item The generating algorithm receives input $x\in\{0,1\}^n$ and outputs the initial $s_0\in\{0,1\}^{s(n)}$ and $p_0\in S_p$. The learning algorithm receives $p_0$ and the auxiliary input $x$.\footnote{In our setting this auxiliary input will usually be the graph being labeled. Since we assume the adversary knows the graph (in fact, the adversary selects it), we will not explicitly refer to the auxiliary input.}
    \item The learning process proceeds in rounds. In each round $i$, the learning algorithm is given the new public state and may either proceed to the next round (and get the next public state by receiving the public output of $\mathcal{S}$'s next activation) or output some hypothesis $h$ as to the distribution $D_{i}^{s_0}$ of the next public state.
\end{enumerate}

\begin{theorem}[\cite{naor2006learning}]\label{acd_learning_theorem}
For any ACD $\left(\mathcal{G},\mathcal{S}\right)$ there exist an algorithm $\mathcal{A}$ and some $n_0$ such that $\mathcal{A}$ learns the ACD on inputs of size $n\geq n_0$ by activating $\mathcal{S}$ at most $O\left(\frac{s(n)}{\delta^2 (n) \varepsilon^2 (n)}\right)$ times.

Specifically, when run in the learning process described above (on input of size $n$), $\mathcal{A}$ will halt and output some hypothesis $h$ such that with probability at least $1-\delta$ (over the coin tosses of $\mathcal{G}, \mathcal{D}, \mathcal{A}$), the statistical distance between $D_i^{s_0}$ and the induced $D_h$ is $\Delta\left(D_i^{s_0}, D_h\right)\leq\varepsilon$.
\end{theorem}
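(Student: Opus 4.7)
The plan is to track the Bayesian posterior on the initial secret and exploit an information-theoretic budget: the secret carries at most $s(n)$ bits of entropy, so the algorithm can only be ``surprised" by its observations a bounded number of times. Knowing $\mathcal{G}$, $\mathcal{S}$ and the auxiliary input, the algorithm $\mathcal{A}$ maintains at each round $i$ the posterior $P_i(\cdot)$ over $s_0 \in \{0,1\}^{s(n)}$ given the observed prefix $p_0, \ldots, p_i$, and its natural Bayes-optimal hypothesis is the mixture $D_{h_i}(q) = \sum_s P_i(s) \cdot D_i^s(p_0,\ldots,p_i)[q]$, i.e.\ the marginal law of $p_{i+1}$ under the current posterior. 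The algorithm runs for $T$ rounds (to be chosen) and outputs $h_{i^*}$ at a uniformly random round $i^* \in \{0, \ldots, T-1\}$.

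The first step is the entropy budget. By the chain rule for mutual information,
\[
\sum_{i=0}^{T-1} I\bigl(s_0\,;\, p_{i+1} \,\big|\, p_0, \ldots, p_i\bigr) \;=\; I(s_0\,;\, p_1, \ldots, p_T \mid p_0) \;\leq\; H(s_0) \;\leq\; s(n).
\]
Since $D_{h_i}$ is precisely the posterior-averaged next-step distribution, the round-$i$ mutual information equals $\mathbb{E}\bigl[KL(D_i^{s_0} \,\|\, D_{h_i})\bigr]$: each observation reduces posterior entropy by exactly the KL between the true next-step distribution and the algorithm's prediction.

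The second step is Pinsker's inequality, $KL(D_i^{s_0} \,\|\, D_{h_i}) \geq 2\,\Delta(D_i^{s_0}, D_{h_i})^2$. Combining with the budget and averaging over rounds,
\[
\mathbb{E}_{i^*}\,\mathbb{E}\bigl[\Delta(D_{i^*}^{s_0}, D_{h_{i^*}})^2\bigr] \;\leq\; \frac{s(n)}{2T}.
\]
A single Markov bound on $\Delta^2$ already yields $\Delta \leq \varepsilon$ except with probability $\delta$ once $T \gtrsim s(n)/(\delta\,\varepsilon^2)$. To obtain the stated $\Theta\bigl(s(n)/(\delta^2 \varepsilon^2)\bigr)$ bound one separates the two sources of randomness: apply Markov first over the trajectory to carve out a $\delta$-bad event, then Markov over the choice of $i^*$ within the surviving trajectories, paying a factor of $\delta$ each time.

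The main obstacle is this last high-probability step. The entropy argument is clean in expectation, but the algorithm must succeed with probability $1-\delta$ without ever directly observing $\Delta(D_i^{s_0}, D_{h_i})$, which depends on the hidden $s_0$. Two ways to handle this are: (i) the two-stage Markov described above, which cleanly recovers the $\delta^2$; or (ii) augmenting $\mathcal{A}$ with a computable surrogate for $\Delta$, e.g.\ the posterior variance $\mathrm{Var}_{s\sim P_i}[D_i^s]$, which can itself be estimated by drawing a small number of additional samples from the current posterior and serves as a stopping rule. Either way, getting the exponents on $\delta$ and $\varepsilon$ right is the delicate technical point; the conceptual core is just ``either my prediction is $\varepsilon$-accurate or the next observation buys me $\Omega(\varepsilon^2)$ bits of entropy about $s_0$," iterated until the entropy budget $s(n)$ is exhausted.
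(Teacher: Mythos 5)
The paper does not prove this theorem; it imports it verbatim from Naor and Rothblum~\cite{naor2006learning}, so there is no in-paper proof to compare against. What can be compared is the structure of the argument the paper later adapts from~\cite{naor2006learning} (Lemma~\ref{entropy_drop_lemma} and Claim~\ref{probabilistic_recurrence}): an \emph{observable} stopping criterion (the ``stability'' condition of Definition~\ref{strategy_stable_condition}), paired with a lemma saying the posterior entropy of $s_0$ drops by $\Omega(\varepsilon^2\delta)$ in expectation whenever the learner is not yet stable, and a probabilistic-recurrence bound on the number of rounds. The two $\delta$'s in $O\!\left(s(n)/(\delta^2\varepsilon^2)\right)$ come from this route: one $\delta$ is built into the stability threshold, the other comes from the Markov-style bound on the stopping time. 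Your proof replaces all of that with a global information budget and a \emph{random} (rather than adaptive) stopping time: the chain rule gives $\sum_i I(s_0;p_{i+1}\mid p_{\le i}) \le H(s_0) \le s(n)$, the per-round mutual information is exactly $\mathbb{E}\!\left[\mathrm{KL}\!\left(D_i^{s_0}\,\|\,D_{h_i}\right)\right]$ for the posterior-mean predictor, Pinsker converts this to $\mathbb{E}[\Delta^2]$, and a single Markov bound over the joint randomness finishes. This is correct, and it is genuinely different: you get a cleaner argument, and in fact a \emph{stronger} bound of $O(s(n)/(\delta\varepsilon^2))$ activations. That already implies the stated $O(s(n)/(\delta^2\varepsilon^2))$.

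One correction to your own framing: you treat $\Theta(s(n)/(\delta^2\varepsilon^2))$ as a target you must hit, and introduce the two-stage Markov and the posterior-variance surrogate to ``recover'' the extra $\delta$. Neither is needed. The theorem only asserts an \emph{upper} bound on the number of activations, so proving a smaller bound is strictly better; the $\delta^2$ in~\cite{naor2006learning} is an artifact of their method (they want an \emph{observable} stopping rule, which your random stopping does not provide, but the theorem as stated does not require one). Your option (ii), the variance surrogate, is in fact roughly what their stability criterion does; it is worth knowing that route exists because an observable stopping rule is essential for the adversarial strategy in Section~\ref{section:learning} of this paper (where one needs many centers to be simultaneously ``stable'' in order to find a matching pair). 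But as a proof of the cited theorem, your one-shot mutual-information-plus-Pinsker argument is complete and preferable.
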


\subsection{The Adversarial Strategy}\label{section:lower_bound_strategy}

The strategy we describe is executed on a set of vertices $\cal C$ and their neighbors. Recall that the main theorem requires the strategy to be valid for any graph with enough vertices of degree $d$
In our case, it is not required to select the worst graph as any graph satisfying the requirements of \cref{degree_d_lower_bound} will also contain the required structure for the execution of the strategy.

Let $G$ be a graph which contains many vertices of degree at least $d$. Find vertices $x_1,\ldots,x_n$ such that for every $i\in\left[n\right]$ the vertex $x_i$ has at least $d$ neighbors and for any $i\neq j$ the vertices $x_i$ and $x_j$ have distance at least 3.
We call ${\cal C} = \{x_1,\ldots,x_n\}$ the set of centers.

\begin{remark}
In the proof above the degree that many centers have need not be equal to the maximum degree of the graph $G$. As long as there are many centers of distance $3$ and degree at least $d$ the lower bound of $\Omega(d)$ applies.
\end{remark}

The strategy is based on the algorithm used for \cref{acd_learning_theorem}, which is applied to learning the distribution $D_i^{\ell(x)}$ of the label of the next neighbor of some $x\in\mathcal{C}$ without knowing $\ell(x)$. To clarify, the public state in this case consists of the results of all the queries asked by the adversary before selecting its candidates. This means that $D_i^{\ell(x)}$ is a conditional distribution over labels which is conditioned on the results of all previous queries, including queries related to vertices that are not adjacent to $x$.
Since the objective is actually to find suitable candidates with a limited number of queries and not simply learn the label distribution of the next neighbor, the strategy uses a modified version of the algorithm with two phases as detailed below (see also \cref{fig:degree_d_strategy}). Note that $\varepsilon,\delta\in (0,1)$ and the label size $s$ are parameters of the strategy which are also mentioned in the analysis and only fixed at a later stage.

\subsubsection{Query Phase}\label{degree_d_strategy_query_phase}

In this phase, the adversary fixes some order on the neighbors of every center in $\cal C$ and queries them until a certain condition is satisfied. We will see that it is possible for the algorithm to query some (but not all) of the neighbors of a center $x\in\mathcal{C}$, switch to the neighbors of a different center $y\in\mathcal{C}$, and later come back to query the rest of the neighbors of $x$.

\begin{definition}[Stable Center]\label{strategy_stable_condition}
Let $\mathcal{G}_i(x)$ be the conditional distribution of the label of a center $x\in\mathcal{C}$ after querying $i$ of the neighbors of $x$, given all the labels queried so far (including those not from the neighborhood of $x$).

We call $x$ {\em stable} if there is a subset of labels $L$ such that $\mathcal{G}_i(x)$ gives large weight to $L$, i.e.\ $\mathds{P}_{\mathcal{G}_i(x)}\left[\ell(x)\in L\right] \geq 1-\delta$, and for any two labels $u,v\in L$ it holds that $\Delta(D_i^u, D_i^v)\leq \varepsilon$. The distribution $D_i^u$ is the conditional distribution (as seen in \cref{def:acd_distribution}) induced by $x$ having label $u$.
\end{definition}

The querying algorithm is now: 
as long as there is some $x\in\mathcal{C}$ such that $x$ is not stable and has neighbors that were not queried:
\begin{enumerate}
    \item Let $\gamma^x_1,\ldots,\gamma^x_d\in N_G (x)$ be neighbors of $x$.
    \item Let $i$ be the position of the first neighbor of $x$ not yet queried by the algorithm. Query the vertex $\gamma^x_i$ to learn label $\ell(\gamma^x_i)$.
    \begin{enumerate}
        \item If $x$ is stable with subset $L$, then stop querying neighbors of $x$ and record the position $i$. Additionally, pick a distribution $D_x = D_i^u$ for some arbitrary $u\in L$.
        \item Otherwise, repeat with $\gamma^x_{i+1}$ if $i < d$.
    \end{enumerate}
\end{enumerate}

Note that as far as we know a center $x \in \mathcal{C}$ can become unstable after being stable, as a result of queries on vertices far from it. 

\subsubsection{Decision Phase}\label{degree_d_strategy_decision_phase}

After the query phase, we can divide all centers in $\cal C$ into those that are stable and those that are fully queried, i.e.\ their $d$ neighbors mentioned in the \nameref{degree_d_strategy_query_phase} are queried. We will describe how candidates for forgery can be selected when there are enough stable centers. We will show that we are unlikely to have too many fully queried centers.

If there are at least $\left(\frac{2}{\varepsilon}\right)^{2^s}$ stable centers, then there is a pair of centers $x,y\in\mathcal{C}$ such that $j$ is the index of the last neighbor of $y$ that was queried and $\Delta(D_{x}, D_{y})\leq \varepsilon$. Select as candidates $x$ and $\gamma^y_{j+1}$. (If there are fewer stable centers, then the strategy has failed.)

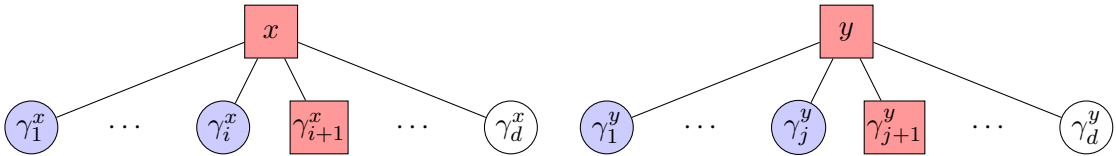
\begin{figure}[ht]
\centering
\begin{tikzpicture}[nodes={draw,circle,fill=blue!20,inner sep=1pt,minimum size=20pt},scale=.85]
\node at (0,0) [rectangle,fill=red!40] {$x$}
    child {node {$\gamma^x_1$}}
    child {node [draw=none,fill=none,minimum size=0pt] {$\cdots$} edge from parent [draw=none]}
    child {node {$\gamma^x_{i}$}}
    child {node [rectangle,fill=red!40] {$\gamma^x_{i+1}$}}
    child {node [draw=none,fill=none] {$\cdots$} edge from parent [draw=none]}
    child {node [fill=white!40] {$\gamma^x_d$}};
\node at (9,0) [rectangle,fill=red!40] {$y$}
    child {node {$\gamma^y_1$}}
    child {node [draw=none,fill=none,minimum size=0pt] {$\cdots$} edge from parent [draw=none]}
    child {node {$\gamma^y_{j}$}}
    child {node [rectangle,fill=red!40] {$\gamma^y_{j+1}$}}
    child {node [draw=none,fill=none] {$\cdots$} edge from parent [draw=none]}
    child {node [fill=white!40] {$\gamma^y_d$}};
\end{tikzpicture}
\caption{A partial view of $G$. The labels of the blue vertices are known to the adversary and the potential candidates are the red vertices. The adversary finds a pair $x,y\in\mathcal{C}$ whose distributions $D_x,D_y$ are close and the candidates are $x$ and $\gamma^y_{j+1}$. It could be the case that some centers can't be used in the \nameref{degree_d_strategy_decision_phase} because they are fully queried, but we show it is likely the adversary has enough stable centers.}
\label{fig:degree_d_strategy}
\end{figure}

\subsection{Analysis of the Strategy}
Let $\mathcal{C}_s\subseteq\mathcal{C}$ be the subset of centers that are stable at the \nameref{degree_d_strategy_query_phase}. Recall that those centers must satisfy the stability condition in $\cref{strategy_stable_condition}$ and that we assume $\lvert\mathcal{C}_s\rvert \geq \left(\frac{2}{\varepsilon}\right)^{2^s}$.

The adversary recorded distributions $\left\{D_x\right\}_{x\in\mathcal{C}_s}$ and corresponding (not queried) neighboring vertices for each distribution. We start by showing that there is a pair $x,y\in\mathcal{C}_s$ such that $\Delta\left(D_x,D_y\right)\leq\varepsilon$. Next, we show that $x$ and $\gamma^y_{j+1}$ are good candidates for forgery.

\begin{lemma}\label{packing_stable_centers}
For any $\varepsilon\in (0,1)$ and $\lvert\mathcal{C}_s\rvert \geq \left(\frac{2}{\varepsilon}\right)^{2^s}$, there exists a pair of $x,y\in\mathcal{C}_s$ such that $\Delta(D_x,D_y) \leq\varepsilon$.
\end{lemma}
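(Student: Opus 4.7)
The plan is a pigeonhole argument on the space of probability distributions over the label alphabet $\{0,1\}^s$. I would identify each $D_x$ with its probability vector $(D_x(\alpha))_{\alpha\in\{0,1\}^s}\in[0,1]^{2^s}$ and cover the simplex of distributions by a collection of at most $(2/\varepsilon)^{2^s}$ cells such that any two distributions in the same cell lie within statistical distance $\varepsilon$. Since $|\mathcal{C}_s|\ge(2/\varepsilon)^{2^s}$ meets (or exceeds) this cell count, pigeonhole produces two centers in a common cell, which is the required pair.

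The key inequality, once the cells are set up, is immediate from the $L_1$ characterization of statistical distance built into the definition: if each coordinate of $D_x$ and $D_y$ agrees up to some tolerance $w$, then
$$\Delta(D_x, D_y) \;=\; \tfrac{1}{2}\sum_{\alpha\in\{0,1\}^s} |D_x(\alpha)-D_y(\alpha)| \;\le\; \tfrac{1}{2}\cdot 2^s \cdot w.$$
Taking $w = \varepsilon/2^{s-1}$ makes the right-hand side equal to $\varepsilon$, so cells of side length $w$ suffice.

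The only real obstacle is pinning down the exact constant in the cell count. A naive product discretization of $[0,1]^{2^s}$ into boxes of width $w = \varepsilon / 2^{s-1}$ yields $\lceil 2^{s-1}/\varepsilon\rceil^{2^s}$ cells, which is larger than the stated $(2/\varepsilon)^{2^s}$. To recover the stated bound I would exploit the simplex constraint $\sum_\alpha D_x(\alpha)=1$: the probability simplex is a $(2^s{-}1)$-dimensional slice of $[0,1]^{2^s}$, and only a correspondingly small fraction of product cells meets it. Equivalently, a standard volume estimate for $\ell_1$-balls relative to the simplex yields an $\varepsilon/2$-net of the simplex of size at most $(2/\varepsilon)^{2^s-1}\le(2/\varepsilon)^{2^s}$. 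With this refined cell count in hand, the pigeonhole step and the $L_1$ inequality combine to finish the proof immediately.
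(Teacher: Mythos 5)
Your high-level plan --- discretize the space of distributions and pigeonhole --- is in the same family as the paper's proof (the paper uses the dual, packing-number formulation), but the concrete discretization you give does not achieve the needed cell count, and the fix you gesture at does not close the gap. With axis-aligned boxes of side $w = \varepsilon/2^{s-1}$ the count is $\lceil 2^{s-1}/\varepsilon \rceil^{2^s}$, which, as you observe, overshoots $(2/\varepsilon)^{2^s}$ once $s\ge 3$. Passing to the simplex parameterization does not rescue this: projecting onto the first $n-1 := 2^s-1$ coordinates, cells of $L^1$-diameter $2\varepsilon$ must have side roughly $\varepsilon/(n-1)$, and the number of such boxes meeting $\{p_i\ge 0,\ \sum_i p_i \le 1\}$ is $\binom{(n-1)/\varepsilon + n-1}{n-1}$, on the order of $\bigl(e(1+\varepsilon)/\varepsilon\bigr)^{n-1}$. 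Compared to $(2/\varepsilon)^{n}$, this is larger by a factor of roughly $\tfrac{\varepsilon}{2}(e/2)^{n-1}$, which grows with $s$, so the simplex constraint does not produce the bound $(2/\varepsilon)^{2^s-1}$ and the ``standard volume estimate'' you invoke is not substantiated. The underlying obstruction is geometric: boxes are a wasteful tiling of the simplex in the $L^1$ metric, since an $L^1$-ball of radius $r$ has volume $(2r)^n/n!$ while a box of the same $L^1$-diameter has volume only $(2r/n)^n$; that missing factor of roughly $n!/n^n\sim e^{-n}$ is exactly what your count is paying for, and ``pinning down the constant'' is not a technicality here but the entire content of the lemma.

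The paper's proof avoids this by packing $L^1$-balls (cross-polytopes) rather than boxes, and uses one further geometric fact: every probability vector lies on the $L^1$-sphere $\partial B^1_{\vec 0}$, so the $\varepsilon$-balls centered at these points all sit inside the thin annulus $\{\,1-\varepsilon \le \|\cdot\|_1 \le 1+\varepsilon\,\}$. Two distributions at statistical distance $\ge\varepsilon$ are at $L^1$-distance $\ge 2\varepsilon$, hence their $\varepsilon$-balls are disjoint; dividing the annulus volume by the ball volume, the $n!$ from the $L^1$-ball volume formula cancels and one is left with $\bigl[(1+\varepsilon)^{2^s}-(1-\varepsilon)^{2^s}\bigr]/\varepsilon^{2^s}\le (2/\varepsilon)^{2^s}$. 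That annulus packing is the missing idea. To repair your proof you would need to replace the box-based cover with an $L^1$-ball-based bound exploiting that the points lie on the $L^1$-sphere, at which point you would have essentially rederived the paper's argument.
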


\begin{proof}
We view the distributions as vectors in $\mathbb{R}^{2^s}$, i.e.\ for a center $x\in\mathcal{C}_s$ the {\em recorded distribution} $D_x$ is represented by a vector $\vec{d_x}$ where $\left(\vec{d_x}\right)_u$ is the probability of the label $u$ by $D_x$ and $\lVert\vec{d_x}\rVert_1=1$.

Let $B_{\vec{x}}^r = \{\vec{y}\in\mathbb{R}^{2^s} \suchthat \lVert \vec{x}-\vec{y}\rVert_1 \leq r \}$ be the $r$-ball around $\vec{x}$. Our distributions are points in $\partial B_{\vec{0}}^1$ which is the boundary of the unit ball (where the distance is exactly 1) and we ask how many points having statistical distance at least $\varepsilon$, which is equivalent to $L^1$ distance of at least $2\varepsilon$, are in $\partial B_{\vec{0}}^1$. We now look at all $\varepsilon$-balls around points in $\partial B_{\vec{0}}^1$, i.e.\
$$
\Phi = \bigcup_{\vec{x}\in \partial B_{\vec{0}}^1} B_{\vec{x}}^{\varepsilon}.
$$
Observe that two points in non-intersecting $\varepsilon$-balls which are packed into the volume of $\Phi$ will have statistical distance at least $\varepsilon$, since by definition the center points of those balls must have $L^1$ distance at least $2\varepsilon$. The original question is equivalent to asking for a maximal number of $\varepsilon$-balls we can pack without intersection into the volume of $\Phi$. This value is the $\varepsilon$-packing number of $\Phi$, as defined by Wainwright in~\cite[Chapter~5.1]{wainwright2019high}. It is at most

$$
\frac{\text{Vol}(\Phi)}{\text{Vol}\left(B_{\vec{0}}^{\varepsilon}\right)}
= \frac{\text{Vol}\left(B_{\vec{0}}^{1+\varepsilon}\right)-\text{Vol}\left(B_{\vec{0}}^{1-\varepsilon}\right)}{\text{Vol}\left(B_{\vec{0}}^{\varepsilon}\right)}
= \frac{\left(1+\varepsilon\right)^{2^s} - \left(1-\varepsilon\right)^{2^s}}{\left(\varepsilon\right)^{2^s}}
\leq \left(\frac{2}{\varepsilon}\right)^{2^s}.
$$
Assuming the label size $s$ is fixed (depending on the maximal degree $d$), we conclude that when the number of centers $n$ exceeds the above value there must be two distributions (points) which are $\varepsilon$-close in statistical distance.
\end{proof}

\begin{lemma}\label{close_distributions_lemma}
Let $x,y\in\mathcal{C}_s$ be the selected candidates as described in the \nameref{degree_d_strategy_decision_phase}, i.e.\ both centers have neighbors which were not queried, their recorded distributions are $D_x$ and $D_y$ and $\Delta (D_x,D_y)\leq\varepsilon$. Then the probability that the pair $\ell(x)$ and $\ell(\gamma^y_{j+1})$ is {\em rejected} by the decoder is at most $2\delta+3\varepsilon$.
\end{lemma}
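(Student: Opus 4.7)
The plan is to compare the joint distribution of $(\ell(x),\ell(\gamma^y_{j+1}))$ with the joint distribution of $(\ell(x),\ell(\gamma^x_{i+1}))$, both conditioned on the public transcript $\Pi$ of all queries made during the query phase. Since $(x,\gamma^x_{i+1})\in E(G)$ and the scheme only errs on non-edges, the decoder accepts the second pair with probability $1$; hence the probability of rejection on the first pair is upper bounded by the total variation distance between these two joint distributions, and it suffices to show this distance is at most $2\delta+3\varepsilon$.

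To bound this distance, I plan to chain the stability conditions of $x$ and $y$ with the proximity $\Delta(D_x,D_y)\leq\varepsilon$ provided by the Decision Phase via a hybrid argument. By the stability of $x$, with probability at least $1-\delta$ over $\ell(x)$ given $\Pi$ we have $\ell(x)=u\in L_x$, and in that case $\Delta(D_i^u,D_x)\leq\varepsilon$, where $D_i^u$ is the conditional distribution of $\ell(\gamma^x_{i+1})$ given $\Pi$ and $\ell(x)=u$; this step absorbs one $\delta$ and one $\varepsilon$. The selection of $x,y$ contributes a second $\varepsilon$ via $\Delta(D_x,D_y)\leq\varepsilon$. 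Finally, by the symmetric stability of $y$, with probability at least $1-\delta$ over $\ell(y)$ we have $\ell(y)=v\in L_y$ and $\Delta(D_j^v,D_y)\leq\varepsilon$, where $D_j^v$ is the conditional distribution of $\ell(\gamma^y_{j+1})$ given $\Pi$ and $\ell(y)=v$. Chaining triangle inequalities along $D_i^u \to D_x \to D_y \to D_j^v$ yields, on the favorable event of probability at least $1-2\delta$, a $3\varepsilon$ bound between the distribution of $\ell(\gamma^x_{i+1})$ (a genuine neighbor of $x$, always accepted with $\ell(x)$) and that of $\ell(\gamma^y_{j+1})$, giving the total $2\delta+3\varepsilon$.

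The main obstacle is lifting this bound on second-coordinate distributions to a bound on the joint distribution that also includes $\ell(x)$, because $D_i^u$ and $D_j^v$ are defined with respect to different conditioning: the former conditions on $\ell(x)=u$, whereas the latter conditions on $\ell(y)=v$ rather than on $\ell(x)$. I plan to handle this by marginalizing out $\ell(y)$ via the law of total probability: the conditional distribution of $\ell(\gamma^y_{j+1})$ given $\Pi$ is the mixture of the $D_j^v$'s weighted by $\mathds{P}[\ell(y)=v\mid\Pi]$, and by convexity of total variation this mixture is $(\delta+\varepsilon)$-close to $D_y$. An analogous step handles the $x$-side. Since the conditional marginal of $\ell(x)$ given $\Pi$ is identical in both joint distributions, the joint total variation distance is controlled by the expected conditional distance on the second coordinate, which the hybrid chain bounds by $2\delta+3\varepsilon$, completing the argument.
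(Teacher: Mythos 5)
Your core argument matches the paper's: condition on the good event (probability at least $1-2\delta$) where both stability conditions hold, chain statistical distances along $D_i^{\ell(x)} \to D_x \to D_y \to D_j^{\ell(y)}$ to get $3\varepsilon$, and use the one-sided error property to conclude that the distribution $D_i^{\ell(x)}$ of a true neighbor's label puts zero mass on labels rejected by $\ell(x)$. Where you diverge is in the packaging: the paper fixes the set $S$ of labels rejected by $\ell(x)$ and bounds $\mathds{P}_{u\sim D_j^{\ell(y)}}[u\in S]$ directly via the definition of statistical distance; you instead compare the joint distributions of $(\ell(x),\ell(\gamma^y_{j+1}))$ and $(\ell(x),\ell(\gamma^x_{i+1}))$. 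This is a heavier formulation and it surfaces an issue that you correctly identify --- the conditioning mismatch between $D_i^u$ (conditioned on $\ell(x)=u$) and $D_j^v$ (conditioned on $\ell(y)=v$) --- but your proposed fix does not fully close it. Decomposing the joint TV distance by the shared $\ell(x)$-marginal reduces it to the \emph{expected conditional} distance of the second coordinate given $\ell(x)$, yet your marginalization step only bounds the \emph{marginal} distance of $\ell(\gamma^y_{j+1})$ given $\Pi$; passing from one to the other requires that $\ell(\gamma^y_{j+1})$ and $\ell(x)$ be conditionally independent given $\Pi$, which you do not argue. To be fair, the paper's own proof makes an analogous implicit assumption (that fixing $\ell(x)$ does not alter $D_j^{\ell(y)}$), so your argument is not in worse shape --- it simply makes the issue more visible and then leaves it unaddressed. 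The constants you track ($\delta+\varepsilon$ from each stability condition, $\varepsilon$ from the decision phase) do reproduce the $2\delta+3\varepsilon$ bound correctly. If you retain the joint-distribution framing, you should state the conditional independence you are invoking; otherwise, following the paper and fixing $\ell(x)$ and $\ell(y)$ up front, working only with $D_j^{\ell(y)}$ and the rejected set $S$, gives a leaner argument for the same bound.
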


\begin{proof}
First, note that when $x$ is stable, then with probability at least $1-\delta$ the recorded distribution $D_x$ picked by the adversary is close to $D_i^{\ell(x)}$.  This is true, since for any $u\in L$ we have $\Delta\left(D_x, D_i^u \right)\leq \varepsilon$ and $\ell(x)$ is likely to be in $L$, i.e.\ $\mathds{P}_{\mathcal{G}_i(x)}\left[\ell(x)\in L\right]\geq 1-\delta$.

So now, for the candidate pair of centers $x,y\in\mathcal{C}_s$ picked by the adversary where  $\Delta\left(D_x,D_y\right)\leq\varepsilon$ the probability that for both of them the recorded distribution is close is at least $1-2\delta$. Assume that this is the case. 
For the candidates $x$ and $\gamma^y_{j+1}$ where $j$ is the index of last neighbor of $y$ whose label was queried by the adversary,  
using the fact that the distribution over labels for $\gamma^y_{j+1}$ is close to the distribution over labels for $x$'s next neighbor $\gamma^x_{i+1}$, we ask what is the probability (over the choice of labels, conditioned on all query results) that $\ell(x)$ rejects $\ell(\gamma^y_{j+1})$. 

Let $S \subset \{0,1\}^s$ be  the subset of labels which are rejected by $\ell(x)$, i.e.\ when decoding any label from $S$ together with $\ell(x)$ the decoder returns $0$. The probability label $\ell(\gamma^y_{j+1})$ is in $S$ is:

\begin{align}
\mathds{P} \left[ \ell(\gamma^y_{j+1}) \in S \right] &
= \mathds{P}_{u\sim D_j^{\ell(y)}} \left[ u\in S \right] \tag{a} \\&
= \mathds{P}_{u\sim D_j^{\ell(y)}} \left[ u\in S \right] - \mathds{P}_{v\sim D_i^{\ell(x)}} \left[ v\in S \right] \tag{b} \\&
\leq \max_{S'\subseteq\{0,1\}^s} \left| \mathds{P}_{u\sim D_j^{\ell(y)}} \left[ u\in S' \right] - \mathds{P}_{v\sim D_i^{\ell(x)}} \left[ v\in S' \right] \right| \nonumber \\&
= \Delta\left( D_j^{\ell(y)}, D_i^{\ell(x)} \right) \nonumber \\&
\leq \Delta\left( D_j^{\ell(y)}, D_y \right) + \Delta\left( D_y, D_x \right) + \Delta \left( D_x, D_i^{\ell(x)} \right) \nonumber \\&
\leq 3\varepsilon. \nonumber
\end{align}

We used in line (b) the fact that the distribution $D_i^{\ell(x)}$ must give probability zero to any label in $S$, since any possible label for $\gamma^x_{i+1}$ must be accepted by $\ell(x)$ when we condition on the center $x$ having the (true) label $\ell(x)$. Additionally, the last step uses the assumption that for the centers $x,y\in\mathcal{C}_s$ the recorded distributions are close to the ones induced by the real labels $\ell(x)$ and $\ell(y)$.


The adversary fails to forge if either one of the recorded distributions is far from the one induced by the real label of the center or if $\ell(x)$ and $\ell(y_{j+1})$ are rejected by the decoder even though $D_x$ and $D_y$ are close. Using the union bound for those failure events, we conclude that the strategy fails when there are enough stable centers with probability at most $2\delta+3\varepsilon$.
\end{proof}

Next, we consider the centers whose neighborhoods were fully queried. While the analysis in the proof of \cref{close_distributions_lemma} focused on the state at the end of the \nameref{degree_d_strategy_query_phase}, the next claims will investigate what happens to the entropy of the label of a center $x\in\mathcal{C}$ each time the adversary makes a query. This will be useful to claim that the number of bad centers (which were fully queried) is not too large.

Recall the conditions for stability in \cref{strategy_stable_condition}. We show that as long as $x$ is not stable, the expected entropy of the label of $x\in\mathcal{C}$ decreases every time a neighbor of $x$ is queried. Additionally, when the algorithm queries vertices that are not neighbors of $x$ (which could make $x$ unstable again), this entropy does not increase in expectation. Towards this, when we query the $i$th neighbor of $x$ we examine two random variables:
\begin{itemize}
    \item The random variable $\tilde{\ell}(x)$ over the possible labels $\{0,1\}^s$ of $x$ which is distributed according to $\mathcal{G}_i(x)$. This is the conditional distribution of the label of $x$ given the results of all past queries. Recall that the adversary doesn't know $\ell(x)$ which is the true label of $x$ but it can generate the distribution for $\tilde{\ell}(x)$. We will argue that $H[\tilde{\ell}(x)]$ drops significantly in expectation as we query the neighbors. 
    \item The random variable $\tilde{\ell}(\gamma^x_{i+1})$ over the possible labels $\{0,1\}^s$, which corresponds to the result of the next query by the adversary (i.e.\ the label of the next neighbor of $x$), given all past public states. The distribution is produced by sampling a label $u\sim \mathcal{G}_i(x)$ and then sampling from $D_i^u$.
\end{itemize}

\begin{lemma}\label{entropy_drop_lemma}
For any center $x\in\mathcal{C}$ and distribution $\mathcal{G}_i(x)$, if $x$ is {\em not} stable after querying its $i$th neighbor, then the expected entropy drop after  querying its $(i+1)$th neighbor is bounded from below as:
$$
\mathds{E}_{u\sim \tilde{\ell}(\gamma^x_{i+1})}\left[H[\tilde{\ell}(x)]-H\left[\tilde{\ell}(x) \given \tilde{\ell}(\gamma^x_{i+1})=u\right]\right] \geq \frac{\varepsilon^2 \cdot\delta}{16}.
$$
Additionally, the expected entropy drop when querying any neighbor $\gamma^y_j$ of a center $y\neq x$ is
$$
\mathds{E}_{u\sim \tilde{\ell}(\gamma^y_{j})}\left[H[\tilde{\ell}(x)]-H\left[\tilde{\ell}(x) \given \tilde{\ell}(\gamma^y_{j})=u\right]\right] \geq 0.
$$
\end{lemma}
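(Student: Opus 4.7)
The plan is to recognize the expected entropy drop as the mutual information between $\tilde{\ell}(x)$ and the newly revealed label, and then lower-bound it using the non-stability hypothesis together with Pinsker's inequality. By construction of the sampling process, conditional on $\tilde{\ell}(x) = u$ the distribution of $\tilde{\ell}(\gamma^x_{i+1})$ is precisely $D_i^u$, so letting $\bar{D} = \mathds{E}_{u \sim \tilde{\ell}(x)}[D_i^u]$ denote the marginal of $\tilde{\ell}(\gamma^x_{i+1})$, the quantity we want to bound is
\[
\mathds{E}_{u \sim \tilde{\ell}(\gamma^x_{i+1})}\!\left[H[\tilde{\ell}(x)] - H[\tilde{\ell}(x) \mid \tilde{\ell}(\gamma^x_{i+1})=u]\right] \;=\; I(\tilde{\ell}(x);\tilde{\ell}(\gamma^x_{i+1})) \;=\; \mathds{E}_{u \sim \tilde{\ell}(x)}\!\left[D_{\mathrm{KL}}(D_i^u \,\|\, \bar{D})\right].
\]

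Next I would exploit non-stability by defining the candidate set $\mathcal{F} = \{u \in \{0,1\}^s : \Delta(D_i^u, \bar{D}) \leq \varepsilon/4\}$. The key observation is that by the triangle inequality, any $u, v \in \mathcal{F}$ satisfy $\Delta(D_i^u, D_i^v) \leq \varepsilon/2 \leq \varepsilon$, so $\mathcal{F}$ itself is an eligible candidate for the set $L$ in the definition of stability. Since $x$ is assumed \emph{not} stable, we must have $\mathds{P}_{\mathcal{G}_i(x)}[\ell(x) \in \mathcal{F}] < 1-\delta$, equivalently $\mathds{P}_{u \sim \tilde{\ell}(x)}[\Delta(D_i^u,\bar{D}) > \varepsilon/4] > \delta$. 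Therefore
\[
\mathds{E}_{u \sim \tilde{\ell}(x)}\!\left[\Delta(D_i^u,\bar{D})^2\right] \;>\; \delta \cdot \left(\varepsilon/4\right)^2 \;=\; \frac{\delta\, \varepsilon^2}{16}.
\]
Applying Pinsker's inequality $D_{\mathrm{KL}}(P\|Q) \geq 2\,\Delta(P,Q)^2$ pointwise and taking expectation gives $I(\tilde{\ell}(x);\tilde{\ell}(\gamma^x_{i+1})) \geq \delta \varepsilon^2 / 8$, which is already stronger than the claimed $\varepsilon^2 \delta / 16$ (so the rather generous constant in the statement leaves room in case one prefers a weaker version of Pinsker or wishes to absorb lower-order terms).

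The second inequality is essentially free: querying a neighbor $\gamma^y_j$ of a different center $y$ yields a random variable $\tilde{\ell}(\gamma^y_j)$ whose joint distribution with $\tilde{\ell}(x)$ is well-defined, and the expected entropy drop equals $I(\tilde{\ell}(x);\tilde{\ell}(\gamma^y_j)) \geq 0$ by non-negativity of mutual information -- no structural property of the query is needed, only that conditioning never increases entropy in expectation.

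The main obstacle is bridging the combinatorial stability definition (there is no subset of large weight with all pairwise distributions $\varepsilon$-close) and a usable probabilistic lower bound on $\mathds{E}_u[\Delta(D_i^u, \bar{D})^2]$. The trick that makes this clean is centering the candidate set $\mathcal{F}$ at the marginal $\bar{D}$ itself (rather than at some reference label $u_0$): this way the triangle inequality gives pairwise closeness inside $\mathcal{F}$ for free, and the contrapositive of non-stability then directly yields the $\delta$-mass of labels far from $\bar{D}$, which is exactly what Pinsker needs on the right-hand side.
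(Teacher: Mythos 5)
Your proof is correct and follows the same approach as the paper: for the second inequality the paper uses exactly the same non-negativity-of-mutual-information argument, and for the first inequality the paper simply cites Lemma~3.1 of Naor--Rothblum~\cite{naor2006learning} as a "direct application", for which your argument (rewrite the expected entropy drop as $I(\tilde{\ell}(x);\tilde{\ell}(\gamma^x_{i+1}))=\mathds{E}_{u\sim\tilde{\ell}(x)}\bigl[D_{\mathrm{KL}}(D_i^u\|\bar{D})\bigr]$, then apply Pinsker together with the contrapositive of non-stability using the ball of radius $\varepsilon/4$ centered at the marginal $\bar{D}$) is a correct self-contained proof of essentially the same lemma. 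Your choice to center the candidate set at $\bar{D}$ so that the triangle inequality immediately yields pairwise $\varepsilon$-closeness is exactly the clean way to bridge the combinatorial stability definition and Pinsker, and it even gives a slightly better constant ($\delta\varepsilon^2/8$) than the paper states.
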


\begin{proof}
The first part regarding queries when $x$ is not stable is a direct application of Lemma 3.1 from \cite{naor2006learning} to our setting.

The second part regarding queries far from $x$ requires an additional argument. Since the adversary attempts to learn the distribution of the next neighbor for all centers $\mathcal{C}$, it could be the case that the algorithm alternates between querying the neighbors of some $x\in\mathcal{C}$ and querying the neighbors of another center $y\in\mathcal{C}$. We show that the entropy of $\tilde{\ell}(x)$ does not increase in expectation as a result of queries in $y$'s neighborhood, otherwise, it will be impossible to claim that the entropy of $\tilde{\ell}(x)$ is likely to decrease to zero. Fortunately, since the mutual information of two random variables is always non-negative we have that
$$
\mathds{E}_{u\sim \tilde{\ell}(\gamma^y_{j})}\left[H[\tilde{\ell}(x)]-H\left[\tilde{\ell}(x) \given \tilde{\ell}(\gamma^y_{j})=u\right]\right]
= H[\tilde{\ell}(x)]-H\left[\tilde{\ell}(x) \given \tilde{\ell}(\gamma^y_{j})\right]
= I\left(\tilde{\ell}(x);\tilde{\ell}(\gamma^y_{j})\right)
\geq 0.
$$
\end{proof}

We would like to use the above claim in order to argue that it is not likely the algorithm performs many queries on neighbors of a center $x\in\mathcal{C}$ without having the conditional entropy of the label of $x$ drop enough for the process to stop. But given the adaptive nature of the process and the fact that we only have results on the expectation of the entropy drop we have to be careful.   

For a center $x\in\mathcal{C}$, we will now look at the sequence of random variables describing the entropy of the label of $x$ (given all previous queries) throughout the random process of the \nameref{degree_d_strategy_query_phase}. We divide this process into steps which are separated by queries to neighbors of $x$, i.e.\ each step consists of queries to neighbors of centers that are not $x$ and a single query to the neighbor of $x$. This allows us to use the following result:\footnote{This is an equivalent phrasing using a sequence of random variables and their entropy instead of a probabilistic recurrence relation with a general size function. For the original formulation and proof, as well as a discussion of probabilistic recurrence relations, refer to Appendix B of \cite{naor2006learning}.}

\begin{claim}[\cite{naor2006learning}]\label{probabilistic_recurrence}
Let $X_0, X_1,\ldots X_N$ be a sequence of random variables derived from a random process with stopping time $N$.
Suppose that for any $i\geq 1$ in the sequence, given  the state of the random process up to the $i$th step, it holds that $\mathds{E}\left[H[X_i]\right]\leq H[X_{i-1}]-r$. Then for any $t>0$:
$$ \mathds{P}\left[N\geq t \right] \leq \frac{H[X_0]}{t\cdot r}. $$

\end{claim}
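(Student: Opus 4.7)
The plan is to convert the entropy sequence into a supermartingale by an additive shift, then read off the tail bound from optional stopping combined with the non-negativity of entropy.

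First, let $\mathcal{F}_{i-1}$ denote the $\sigma$-algebra generated by the process through step $i-1$, so that $H[X_{i-1}]$ is $\mathcal{F}_{i-1}$-measurable and the hypothesis reads
$$ \mathds{E}\!\left[H[X_i] \given \mathcal{F}_{i-1}\right] \;\leq\; H[X_{i-1}] - r \qquad \text{on } \{i \leq N\}. $$
Define the shifted sequence $Y_i = H[X_i] + i \cdot r$. Adding $i \cdot r$ to both sides yields $\mathds{E}[Y_i \given \mathcal{F}_{i-1}] \leq Y_{i-1}$, so the stopped process $(Y_{i \wedge N})_{i \geq 0}$ is a nonnegative supermartingale with $Y_0 = H[X_0]$.

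Next, fix a target value $t > 0$ and apply optional stopping at the bounded stopping time $t \wedge N$ to obtain $\mathds{E}[Y_{t \wedge N}] \leq H[X_0]$. Now exploit the non-negativity of entropy: $H[X_{t \wedge N}] \geq 0$ gives $Y_{t \wedge N} \geq (t \wedge N) \cdot r$, and on the event $\{N \geq t\}$ this strengthens to $Y_{t \wedge N} \geq t \cdot r$. Taking expectations and chaining the two bounds,
$$ t \cdot r \cdot \mathds{P}[N \geq t] \;\leq\; \mathds{E}[Y_{t \wedge N}] \;\leq\; H[X_0], $$
which rearranges to the claimed $\mathds{P}[N \geq t] \leq H[X_0]/(t \cdot r)$. (Equivalently, one could first bound $\mathds{E}[N] \leq H[X_0]/r$ by summing the expected entropy drops against the stopping time and then invoke Markov's inequality; both routes use the same underlying supermartingale structure.)

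The main obstacle is notational rather than mathematical: the statement speaks informally of ``the state of the process up to the $i$th step'', so some care is needed to realize $H[X_i]$ as an $\mathcal{F}_i$-measurable random variable, namely the entropy of the conditional distribution of the hidden object given the observed history, and to verify that the stated hypothesis really yields the conditional-expectation supermartingale inequality on the un-stopped part of the process. Once that reformulation is in place, the rest is essentially Markov's inequality applied to the shifted process, and no further technical work is required.
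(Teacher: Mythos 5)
Your proof is correct. The paper itself does not prove this claim; it cites Appendix~B of Naor and Rothblum's paper, noting only (in a footnote) that the statement is a rephrasing of their probabilistic-recurrence bound in terms of entropy. So there is no in-paper argument to compare against, but your self-contained supermartingale argument is the natural modern route and checks out: shifting $Y_i = H[X_i] + ir$ produces a nonnegative supermartingale when stopped at $N$, optional stopping at the bounded time $t \wedge N$ gives $\mathds{E}[Y_{t\wedge N}]\le H[X_0]$, and the lower bound $Y_{t\wedge N}\ge tr$ on $\{N\ge t\}$ (using $H\ge 0$) yields the tail bound directly. You also correctly handle the two subtleties: the hypothesis is a conditional-expectation inequality only on $\{i\le N\}$, and the stopped process is trivially a supermartingale past $N$; and the direct argument avoids needing $N<\infty$ a.s., which the alternative $\mathds{E}[N]\le H[X_0]/r$ plus Markov route would require. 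The original Naor--Rothblum proof is stated in Karp-style probabilistic-recurrence language and proceeds by a more hands-on iterated expectation rather than invoking the martingale formalism, but the underlying mechanism is the same; your version buys a cleaner exposition at no loss of generality.
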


\begin{lemma}\label{bad_centers_lemma}
For any center $x\in\mathcal{C}$, the probability that after the \nameref{degree_d_strategy_query_phase} is finished all neighbors of $x$ were queried is less than $\frac{16s}{d\cdot\delta\cdot\varepsilon^2}$.
\end{lemma}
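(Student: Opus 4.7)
The plan is to instantiate \cref{probabilistic_recurrence} on the sub-sequence of the Query Phase consisting only of those steps at which the algorithm queries a neighbor of the center $x$. Let $N_x$ denote the total number of queries to neighbors of $x$ made during the Query Phase. Because the algorithm only proceeds to query the next neighbor of $x$ while $x$ is non-stable, the event that all $d$ neighbors of $x$ have been queried by the end of the Query Phase coincides with $\{N_x = d\}$. I would define $X_0 := H[\tilde{\ell}(x)]$ before any query is made, and for $i \geq 1$ let $X_i := H[\tilde{\ell}(x)]$ conditioned on the transcript of labels visible to the adversary just after the $i$-th query to a neighbor of $x$.

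To check the hypothesis of \cref{probabilistic_recurrence} with drop $r := \varepsilon^2\delta/16$, I need $\mathds{E}[X_i \mid X_0,\ldots,X_{i-1}] \leq X_{i-1} - r$ for every $i$. Between the $(i-1)$-th and $i$-th entries of this sub-sequence the adversary may adaptively perform any number of queries on neighbors of other centers; by the second assertion of \cref{entropy_drop_lemma}, each such query does not increase the conditional expectation of $H[\tilde{\ell}(x)]$, and by the tower property this extends to the entire adaptive block. The block then ends with a query to a neighbor of $x$, which is made precisely because $x$ is non-stable at that moment, so the first assertion of \cref{entropy_drop_lemma} produces an expected drop of at least $r$. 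Composing the two effects yields the required contraction.

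Since labels have length $s$ bits we have $X_0 \leq s$. Applying \cref{probabilistic_recurrence} with $t = d$ gives
$$\mathds{P}[N_x \geq d] \leq \frac{H[X_0]}{d\cdot r} \leq \frac{s}{d \cdot \varepsilon^2 \delta / 16} = \frac{16 s}{d \cdot \delta \cdot \varepsilon^2},$$
which is the stated bound. The most delicate step is formalizing the ``no increase in expectation'' across a block of adaptively chosen intermediate queries, since the single-query non-increase statement from \cref{entropy_drop_lemma} must be composed along an adversary-dependent branching of future queries; this is ultimately a tower-property argument but requires careful bookkeeping of the filtration so that $X_{i-1}$ really serves as the conditioning reference at the start of the $i$-th block.
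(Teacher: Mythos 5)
Your proof is correct and follows essentially the same route as the paper: you define the same sequence indexed by queries to neighbors of $x$, apply the two parts of \cref{entropy_drop_lemma} (non-increase across the intermediate block, guaranteed drop of $r=\varepsilon^2\delta/16$ on the neighbor query made while $x$ is non-stable), and invoke \cref{probabilistic_recurrence} with the bound $H[\tilde{\ell}(x)]\leq s$. The only cosmetic difference is where you place the step boundary (you end each step with the neighbor query; the paper begins with it), which does not affect the argument, and a small notational wrinkle where you set $X_i$ to be the conditional entropy itself but then write $H[X_0]$ when quoting the claim --- the paper keeps $X_i$ as the conditional label variable so $H[X_i]$ is literally its entropy.
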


\begin{proof}
For a center $x\in\mathcal{C}$, consider the random variable, previously referred to as $\tilde{\ell}(x)$, for the label of $x$ given all previous queries throughout the \nameref{degree_d_strategy_query_phase}. Let $X_0$ be this variable at the start of this process, before any neighbor of $x$ is queried. Let $X_i$ be this variable just before $\gamma^x_i$ is queried. Recall that between queries to neighbors of $x$ other vertices which are not adjacent to $x$ can be queried by the algorithm. In the new sequence we define, a single step is between $X_i$ and $X_{i+1}$. We clarify that at every step we are looking at the random variable of the label of $x$ conditioned on all previous queries and that between $X_i$ and $X_{i+1}$ it could be the case that many vertices not adjacent to $x$ (neighbors of other centers) were queried but only a single neighbor of $x$ was queried, namely the $(i+1)$th neighbor.

Lemma~\ref{entropy_drop_lemma} gives a lower bound for the {\em expected} entropy drop in the random variable for the label of $x$ given all previous queries in the case the algorithm queries a neighbor of $x$. Additionally, the lemma  assures us that during the execution of the algorithm when vertices ``far away" from $x$ are queried, the expected entropy doesn't increase. In our sequence, a single step consists of one query to a neighbor of $x$ and some queries to neighbors of other centers. This means that we have defined a sequence where the expected entropy in each step decreases by at least $\frac{\delta\cdot\varepsilon^2}{16}$.

Even though the entropy can increase, the number of steps is unlikely to be large due to the guarantee about the expected entropy drop. Clearly, it is impossible for the entropy of any $X_i$ to be negative which limits the number of steps in the sequence. This is captured by \cref{probabilistic_recurrence} which we use to claim the probability $x$ is a bad center, i.e.\ $d$ steps were performed for $x$, is small:

$$
\mathds{P}\left[x\text{ is a bad center}\right]
\leq \frac{16\cdot H[T_0]}{d\cdot\delta\cdot\varepsilon^2}
\leq \frac{16s}{d\cdot\delta\cdot\varepsilon^2}
$$
\end{proof}

To complete the proof of \cref{degree_d_lower_bound} we need to fix the desired probability of forgery, select $\delta$ and $\varepsilon$ accordingly and incorporate the assumption about the label size $s$ being small compared to the maximal degree $d$.

Fix $\alpha\in (0,1)$ and select $\delta=\alpha/12$ and $\varepsilon=\alpha/9$ for the strategy. Suppose the graph $G$ contains at least $n\geq 10\left(\frac{10}{\alpha}\right)^{2^s}$ vertices of degree $d$ and that $s\leq \frac{d\cdot\alpha^4}{34560}$. The adversary can only select candidates in the \nameref{degree_d_strategy_decision_phase} when there are at least $\left(\frac{10}{\alpha}\right)^{2^s}$ stable centers. As long as the fraction of bad centers is at most $9/10$ we will have enough stable centers. By \cref{bad_centers_lemma} and Markov's inequality, the probability we don't have enough centers is at most $\alpha/2$:

$$
\mathds{P}\left[\text{\# of bad centers}\geq\frac{9n}{10}\right]
\leq \frac{10}{9}\cdot \frac{16s}{d\cdot\delta\cdot\varepsilon^2}
= \frac{17280s}{d\cdot\alpha^3}
\leq \frac{\alpha}{2}
$$

By \cref{close_distributions_lemma}, the candidates the adversary selects fail is at most $\alpha/2$. The adversary wins when none of these events happens (not enough stable centers or bad candidates for pairing), which means that the probability of forgery is at least $1-\alpha$.

To reiterate, for any $\alpha\in (0,1)$, any $d$ and any graph containing enough vertices of degree $d$ that is labeled using labels of size less than $\frac{d\cdot\alpha^4}{34560}$, there is a strategy which will allow an adversary to forge with probability $1-\alpha$. In other words, to construct a resilient labeling scheme (even for a particular graph that is large enough), the label size has to be $\Omega (d\cdot\alpha^4)$ bits. This concludes the proof of \cref{degree_d_lower_bound}.

\subsection{Two-Sided Error}\label{lower bound for two-sided error}
The statement of \cref{degree_d_lower_bound} that we just proved is about schemes that err on one side (on non-edges). However, we note that the only part of the proof that uses this assumption is the proof of \cref{close_distributions_lemma}. Specifically between steps (a) and (b) in the calculation, we assumed that $\mathds{P}_{v\sim D_i^{\ell(x)}} \left[ v\in S \right]=0$ for the set $S$ of labels which are rejected by the decoder when paired with $\ell (x)$, which assumes that labels from $S$ cannot be assigned to a neighbor of $x$. This is not necessarily true when we have two-sided errors. 

Let $\alpha\in (0,1/2)$. We show that schemes with two-sided error have an analog of \cref{degree_d_lower_bound} with a probability of forgery at least $1/2-\alpha$ (instead of $1-\alpha$). Suppose that the scheme in question has a two-sided probability of forgery $\eta'\in (0,1/2)$. By repeating the original scheme with error $\eta'$ in parallel we construct an amplified scheme with error $\eta$ such that $\eta\leq\frac{\alpha (1-\alpha )}{4}$. Note that the label size also increased by a constant which depends logarithmicly on $\alpha$. We analyze the attack on the amplified scheme, which implies one on the original scheme.

To show that the adversarial strategy achieves a probability of forgery at least $1/2-
\alpha$, we modify the proof of \cref{close_distributions_lemma} in the following way: Note that looking at the random variable $F_x=\mathds{P}_{v\sim D_i^{\ell(x)}} \left[ v\in S \right]$ at the time of candidate selection (conditioned on all information the adversary has learned), it must be that the expectation of $F_x$ is at most $\eta$. Otherwise, there is an adversary ${\cal A}'$ that imitates  the original adversary but then selects as candidates $x$ and $\gamma_{i+1}^x$ and has a better than $\eta$ probability of winning. Therefore, applying Markov's inequality, 
$$\mathds{P}\left[F_x\geq \frac{2\eta}{1-\alpha}\right]\leq\frac{1-\alpha}{2}.$$
Assuming $F_x\leq \frac{2\eta}{1-\alpha}$, instead of subtracting $\mathds{P}_{v\sim D_i^{\ell(x)}} \left[ v\in S \right]$ in step (b) from \cref{close_distributions_lemma}, we add $\frac{2\eta}{1-\alpha}-\mathds{P}_{v\sim D_i^{\ell(x)}} \left[ v\in S \right]$ and continue in the same manner. We know this assumption does not hold with probability at most $\frac{1-\alpha}{2}$. As a result, we get that the candidates are rejected with probability at most $$2\delta+3\varepsilon+\frac{1-\alpha}{2}+\frac{2\eta}{1-\alpha}
\leq 2\delta+3\varepsilon+\frac{1}{2}.$$
The parameters $\delta$ and $\varepsilon$ are from the original description of the adversarial strategy in \cref{section:lower_bound_strategy} and appear for the same reason as in \cref{close_distributions_lemma}.

We follow the original proof, but use the modified version of \cref{close_distributions_lemma}. This includes selecting $\varepsilon$ and $\delta$ such that $2\delta+3\varepsilon=\frac{\alpha}{2}$ and showing that the probability we don't have enough centers is at most $\frac{\alpha}{2}$. As a result, for a large enough graph and label size $s\leq \Omega(d\cdot\alpha^4)$, the adversarial strategy has a probability of forgery at least $\frac{1}{2}-\alpha$.

\begin{corollary}
Let $\alpha\in (0,1/2)$.
For every $d\in\mathbb{N}$, there exists some $n>0$ such that for any graph containing $n$ non-adjacent vertices of degree at least $d$ and any labeling of that graph (following \cref{resilience_definition}) with two-sided error using labels of size less than $\Omega(\alpha^4\cdot d)$ bits, the probability of forgery by an adversary is at least $1/2-\alpha$.
\end{corollary}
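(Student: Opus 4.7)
The plan is to adapt the adversarial strategy from \cref{section:lower_bound_strategy} essentially verbatim, paying only a constant-factor price in both the label size and in the final forgery probability. The first step is the standard two-sided amplification described near the end of \cref{section:discussion}: if the given scheme has forgery probability $\eta'\in(0,1/2)$, then by taking a majority vote over $O(\log(1/\alpha))$ independent copies we obtain an amplified scheme with forgery probability $\eta\leq \alpha(1-\alpha)/4$, whose label size is larger by only a constant depending on $\alpha$. It suffices to prove the lower bound against the amplified scheme, since an $\Omega(d\alpha^4)$ lower bound there pulls back to the same asymptotic lower bound on the original.

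I would then run the Query Phase and Decision Phase of \cref{section:lower_bound_strategy} against the amplified scheme. A careful inspection of the one-sided proof shows that one-sidedness is used in exactly one place: the step from line (a) to line (b) in \cref{close_distributions_lemma}, where we appealed to $F_x:=\mathds{P}_{v\sim D_i^{\ell(x)}}[v\in S]=0$ for the rejection set $S$ of $\ell(x)$. For a two-sided scheme $F_x$ need not vanish, but one can still bound $\mathds{E}[F_x]\leq \eta$, with the expectation taken over the public transcript at the moment of candidate selection together with the remaining coins of the labeling. This bound comes from a clean reduction: a competing adversary $\mathcal{A}'$ that runs the same Query Phase and then declares the candidates $(x,\gamma_{i+1}^x)$ (which form a genuine edge of $G$) wins exactly when the decoder rejects this edge, an event of probability $F_x$; therefore $\mathds{E}[F_x]$ cannot exceed the scheme's forgery probability $\eta$, or else $\mathcal{A}'$ would violate its security.

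Given $\mathds{E}[F_x]\leq \eta$, Markov's inequality gives $\mathds{P}[F_x\geq 2\eta/(1-\alpha)]\leq (1-\alpha)/2$. Conditioned on the good event $F_x\leq 2\eta/(1-\alpha)\leq \alpha/2$, I re-run the chain of inequalities in \cref{close_distributions_lemma} but add and subtract $F_x$ inside the absolute value, so that the bound on the decoder-rejection probability becomes $2\delta+3\varepsilon+\alpha/2$ instead of $2\delta+3\varepsilon$. Choosing $\delta=\alpha/12$ and $\varepsilon=\alpha/9$ (so that $2\delta+3\varepsilon=\alpha/2$) and combining with the Markov failure probability $(1-\alpha)/2$ and the bad-centers probability $\alpha/2$ from \cref{bad_centers_lemma} (which is precisely the source of the constraint $s\leq \Omega(d\alpha^4)$ on the label size), the total failure probability of the adversary is at most $1/2+\alpha$, so it forges with probability at least $1/2-\alpha$.

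The main obstacle is justifying the bound $\mathds{E}[F_x]\leq \eta$. One has to verify that the alternative adversary $\mathcal{A}'$ is a legal player under the (two-sided analogue of) \cref{resilience_definition}: its declared candidates must be distinct and not previously queried, which is fine because $\gamma_{i+1}^x$ is by construction the next neighbor the strategy had not yet queried, and one has to check that the averaging over the public transcript that defines $\mathds{E}[F_x]$ matches the averaging in the scheme's security definition. Once that reduction is nailed down, everything else is bookkeeping: the one-sided proof carries through unchanged, with the single extra additive term $2\eta/(1-\alpha)$ threaded through the calculation in \cref{close_distributions_lemma}.
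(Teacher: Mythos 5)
Your proposal reproduces the paper's argument essentially verbatim: the same two-sided amplification to $\eta\leq\alpha(1-\alpha)/4$, the same reduction via the competing adversary $\mathcal{A}'$ that bets on the true edge $(x,\gamma^x_{i+1})$ to derive $\mathds{E}[F_x]\leq\eta$, the same Markov step, and the same parameter choices $\delta=\alpha/12$, $\varepsilon=\alpha/9$ leading to a total failure probability of $1/2+\alpha$. The only addition is your explicit check that $\mathcal{A}'$ is a legal adversary (candidates distinct and not yet queried), which the paper leaves implicit but which holds exactly as you argue.
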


Note that in the two-sided case, there is no adversarial strategy that has a probability of forgery better than $1/2$ for every scheme (so getting close to $1/2$ is the best we can do for the lower bound). Consider the following scheme for any graph that has an error of exactly $1/2$: color the graph using two colors where a color is assigned to each vertex independently at random. The decoder will return $1$ for a pair of different colors and $0$ otherwise. For a pair of vertices, the probability that the colors correctly represent whether there is an edge between them is $1/2$ and the adversary learns nothing by querying other vertices.

\section{Connections with the Simultaneous Messages Model}\label{section:smmpc}

A technique for obtaining adversarial resilient labeling that we have not explored so far in this work is to use protocols in the communication model of simultaneous messages with private coins (SMMPC) (i.e.\ {\em no} shared public randomness), as was done by Mironov, Naor and Segev~\cite{mironov2008sketching}. Recall that in the simultaneous messages model, two parties Alice and Bob are given inputs  \( x \) and \(y\) respectively and should compute some function \( f \) of the inputs \emph{without communicating with each other}. Instead, each one sends a message to a third party (a referee) who calculates \( f(x,y) \) given the messages from Alice and Bob. The quintessential example of a protocol that can save on communication (as opposed to the deterministic case) is the equality function:  Deciding whether two $m$-bits strings are equal can be done using $\Theta(\sqrt{m})$ communication (or sketches) in this model~\cite{NewmanS96,BabaiK97}.

The general approach for using the simultaneous message model with private randomness in the context of labeling is to first come up with some deterministic labeling and then show that decoder's function can be computed with low communication in that model. 
Sketches derived from protocols in this model are inherently resilient to adversaries in our setting, since  nothing is learned about the {\em private} randomness used to produce the label of a vertex by querying other vertices. 

 This allows us to have sketches for the family of trees or forests (no matter what the degree is) of length $O(\sqrt{\log n})$: the idea is to start with the deterministic labeling where every node knows its own name and the name of its parent ($\log n$ bits each) and then simply for each vertex to store an equality sketch for its own name and for its parent's name.

This, in turn, implies similar-sized sketches for low arboricity graphs (the arboricity of a graph is the minimum number of forests into which its edges can be partitioned), such as planar graphs which have arboricity 3. Therefore we get: 
\begin{corollary}
\label{cor: trees}
The family of graphs with constant arboricity and maximum degree $d$ has a labeling scheme with adversarial resilience (following \cref{resilience_definition}) of length $O(\min\{d, \sqrt{\log n}\})$ for graphs with $n$ vertices.
\end{corollary}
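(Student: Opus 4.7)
The plan is to combine two independent constructions and take the shorter label. The first is the general scheme of \cref{thm: general}, which already provides the $O(d)$ side of the minimum. The second, new here, achieves $O(\sqrt{\log n})$ for graphs of constant arboricity and is obtained by plugging the private-coin SMMPC equality protocol of Babai--Kimmel into a classical forest-based labeling.

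First I would invoke the constant arboricity $a$ to decompose the edge set of $G$ into forests $F_1, \ldots, F_a$. In each $F_i$ I pick an arbitrary root per tree, so every non-root vertex has a unique parent $\text{par}_i(v)$. Recall the standard deterministic scheme for forests: give each vertex its own name in $[n]$ and, for every $i$, the name of its parent in $F_i$ (or a null symbol if it is a root). Adjacency in $G$ holds iff for some $i$ and some orientation the name on one label matches the parent on the other. This deterministic scheme has size $O(a \log n) = O(\log n)$.

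Next I would replace every $\log n$-bit name occurring in a label by an independent private-randomness equality sketch of length $O(\sqrt{\log n})$, using the Babai--Kimmel protocol. The label of $v$ becomes the $2a$-tuple of sketches of $\text{id}(v)$ and of each $\text{par}_i(v)$, totaling $O(a\sqrt{\log n}) = O(\sqrt{\log n})$ bits. The decoder runs, for each $i$ and each of the two orientations, the equality-protocol referee on the appropriate pair of sketches, and accepts if any of the $2a$ tests accepts. For a non-adjacent pair $u,v$ all $2a$ underlying name pairs consist of distinct strings, so by a union bound the error is at most $2a$ times the one-shot error of the equality protocol, a constant we can make as small as we like and then drive to any $\varepsilon$ via the standard amplification of \cref{amplification_method}.

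The main substantive point, and the one I expect to be the hardest to pin down, is adversarial resilience, and this is where the use of \emph{private} rather than shared randomness is essential. Each vertex's sketches are generated from independent private coins, so after the adversary queries any $x_1,\ldots,x_k$ the randomness used to produce the sketches of the final pair $(x_{k+1},x_{k+2})$ is independent of the adversary's view. The forgery analysis then reduces to the one-shot guarantee of the equality protocol against an adversarially chosen second party: for any pair of distinct names, no matter what arbitrary information the adversary has collected from other vertices, the referee accepts only with the protocol's built-in constant probability. The obstacle is to state this property of the Babai--Kimmel sketch cleanly --- essentially that the sketch of one name remains ``hard to match'' against a sketch of any other name even from the point of view of an adversary that has observed many unrelated sketches --- and to verify that the $2a$-fold union bound is not disturbed by conditioning on the queried labels. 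Running this $O(\sqrt{\log n})$ scheme in parallel with \cref{thm: general} and invoking whichever decoder goes with the shorter label then yields the claimed $O(\min\{d,\sqrt{\log n}\})$ bound.
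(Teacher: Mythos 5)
Your proposal is correct and follows essentially the same route as the paper: decompose by arboricity into $O(1)$ rooted forests, label each vertex with private-coin SMMPC equality sketches of its own id and of each parent id, and observe that private randomness makes the queried labels carry no information about the unqueried sketches, so the one-shot error bound of the equality protocol (and the union bound over the $O(a)$ tests) survives the adversary's conditioning; combining with \cref{thm: general} gives the $\min$. The only quibble is a small miscount ($a+1$ rather than $2a$ sketches per label), which does not affect the bound.
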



\paragraph{The Hypercube:} The same type of result is true also for specific graphs such as a collection of cliques of size $d$ and the $d$-dimensional hypercube. 
The hypercube (as a specific graph on $2^d$ vertices) has an adjacency sketch in the non-adversarial sense using constant size labels\footnote{This is implied by several previous results, but for an explicit construction refer to the note by Harms~\cite{harms2022adjacency}.} and our result (Theorem~\ref{degree_d_lower_bound}) shows that {\em this is impossible in the adversarial setting}. However, as mentioned in~\cref{packing_stable_centers}, to prove the lower bound we need the number of vertices to be double exponential in the label size, so it gives an $\Omega(\log d)$ lower bound on the label size for the hypercube.

We can show that labels of $O(\sqrt{d})$ bits suffice, using the equality protocol in the simultaneous message model with private randomness and syndrome decoding of a Hamming code, together with the fact that the equality protocol applies a linear function to the original label and sends parts of it to the referee. 

In the following section, we show that the order of $\sqrt{d}$ is actually tight for the hypercube by improving the lower bound.
The perhaps unexpected connection we show is that we can utilize techniques developed for showing lower bounds in this model in order to improve the proof of~\cref{degree_d_lower_bound}  and show that the number of centers needed is $2^{O(d^2)}$. 

\subsection{Size Lower Bound Using Characterizing Multisets}
The proof of~\cref{degree_d_lower_bound} defines the notion of a stable center and the adversary can make use of pairs of stable centers whose recorded distributions are close. Currently, for label size $s$ finding such a pair requires $\left(\frac{2}{\varepsilon}\right)^{2^s}$ centers. We propose a modification where roughly $2^{O(s^2)}$ centers will suffice.

A naive way to approach this is to modify the condition satisfied by a good pair of distributions associated with stable centers. Suppose that instead of looking for two centers $x,y$ with close distributions $D_x$ and $D_y$, the adversary looks for centers whose distributions are supported on the same subset of labels. This can be satisfied with an order of $2^s$ centers, since each label corresponds to the subset of labels it accepts (and each distribution is supported on such a subset). Unfortunately, this condition is not sufficient to prove that the decoder accepts the label of $x$ with the label of the next neighbor of $y$ (see~\cref{close_distributions_lemma}).

Instead, we suggest that the adversary look for two distributions associated with stable centers that have the same {\em characterizing multiset}. This idea is based on the technique used in the communication complexity lower bound in the randomized simultaneous messages model by Babai and Kimmel~\cite{BabaiK97} (see also Cohen and Naor~\cite{CohenN22}).

We define the notion of a characterizing multiset and analyze the alternative adversarial strategy where the modification is to select a pair of centers with similar characterizing multisets in the \nameref{degree_d_strategy_decision_phase} instead of centers with close recorded distributions. Also, we keep the meaning of the parameters $\varepsilon$ and $\delta$ from the description of the adversarial strategy in~\cref{section:lower_bound_strategy}.

\begin{definition}
For any multiset of labels $T \subset \{0,1\}^s$ such that $\lvert T\rvert =t$ and any label $u\in\{0,1\}^s$ and decoding procedure $\dec\colon \{0,1\}^s \times  \{0,1\}^s \mapsto \{0,1\}$, let the value of $T$ be  $$Q(T,u)=\frac{1}{t}\sum_{i\in \left[t\right]} \dec(T[i],u).$$
\end{definition}

\begin{definition}[Characterizing Multiset]
For any distribution $D$ over labels in $\{0,1\}^s$ we say that the the multiset $T \subset \{0,1\}^s$ {\em characterizes} $D$ wrt the decoding procedure  $\dec\colon \{0,1\}^s \times  \{0,1\}^s \mapsto \{0,1\}$ if for every label $v\in\{0,1\}^s$ it holds that
$$
\left|Q(T,v)-\mathds{P}_{u\sim D_x}\left[\dec(u,v)=1\right]\right|
\leq \frac{\varepsilon}{2}
$$
\end{definition}

We now explain the procedure of sampling a characterizing multiset for a stable center with some recorded distribution $D_x$. The adversary uses this procedure to find a characterizing multiset for every stable center (this would mean that the multiset characterizes the recorded distribution for that center) and proceeds to select candidates using a pair of centers that have the same multiset.

\begin{lemma}[Sampling a Characterizing Multiset]\label{lemma: sampling multiset}
For any stable center $x\in\mathcal{C}_s$ with recorded distribution $D_x$, let $u_1,\ldots,u_t$ be independent samples from $D_x$ where $t=2\cdot\frac{4}{\varepsilon^2}\cdot\ln{\left(\frac{2^{s+2}}{\delta}\right)}$. Then, the multiset $T=\{u_1,\ldots,u_t\}$ is a characterizing multiset for $D_x$ with probability at least $1-\frac{\delta}{2}$.
\end{lemma}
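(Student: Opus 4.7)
The plan is to invoke a concentration inequality (Hoeffding) for each fixed label $v$ and then take a union bound over all $v\in\{0,1\}^s$. First I would fix an arbitrary label $v\in\{0,1\}^s$ and observe that, since $u_1,\ldots,u_t$ are i.i.d.\ samples from $D_x$, the indicators $Y_i(v):=\dec(u_i,v)$ are i.i.d.\ Bernoulli random variables with mean $p_v := \mathds{P}_{u\sim D_x}\left[\dec(u,v)=1\right]$. The quantity $Q(T,v)$ is exactly the empirical mean $\tfrac{1}{t}\sum_{i=1}^{t} Y_i(v)$, so it concentrates around $p_v$.

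Next I would apply Hoeffding's inequality to the bounded (in $\{0,1\}$) variables $Y_i(v)$: for any fixed $v$,
\[
\mathds{P}\left[\left|Q(T,v)-p_v\right|>\tfrac{\varepsilon}{2}\right]\;\le\;2\exp\!\left(-\tfrac{t\,\varepsilon^{2}}{2}\right).
\]
Plugging in $t=2\cdot\frac{4}{\varepsilon^{2}}\cdot\ln\!\left(\frac{2^{s+2}}{\delta}\right)$ makes the exponent at least $\ln(2^{s+2}/\delta)$ (in fact comfortably more), so the right-hand side is at most $\delta/2^{s+1}$.

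Finally I would take a union bound over all $v\in\{0,1\}^s$ (there are $2^{s}$ such labels). This gives
\[
\mathds{P}\!\left[\exists\,v\in\{0,1\}^s\colon \left|Q(T,v)-p_v\right|>\tfrac{\varepsilon}{2}\right]\;\le\;2^{s}\cdot\tfrac{\delta}{2^{s+1}}\;=\;\tfrac{\delta}{2},
\]
so with probability at least $1-\delta/2$, the multiset $T$ satisfies $|Q(T,v)-\mathds{P}_{u\sim D_x}[\dec(u,v)=1]|\le \varepsilon/2$ for every label $v$, i.e., $T$ characterizes $D_x$.

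There is essentially no conceptual obstacle here; the lemma is a clean application of concentration plus union bound, and the only thing to verify is that the chosen sample size $t$ balances the Hoeffding deviation $\varepsilon/2$ against the failure budget $\delta/(2\cdot 2^{s})$ per label. The logarithmic dependence on $2^{s}/\delta$ (hence on $s$ and $1/\delta$) and inverse-quadratic dependence on $\varepsilon$ reflect exactly this balance, and the factor $2$ in front of $\frac{4}{\varepsilon^{2}}\ln(2^{s+2}/\delta)$ provides the slack needed to absorb the constants in Hoeffding's bound.
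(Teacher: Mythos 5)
Your proof is correct and follows essentially the same route as the paper: apply a Chernoff/Hoeffding concentration bound to the empirical frequency $Q(T,v)$ for each fixed $v$, then union bound over all $2^s$ labels. The only (cosmetic) difference is that you use the sharp Hoeffding bound for $\{0,1\}$-valued summands, giving exponent $-t\varepsilon^2/2$, while the paper applies the Alon--Spencer form for variables bounded by $1$ in absolute value, giving the looser exponent $-t\varepsilon^2/8$; the chosen $t$ has enough slack to accommodate either constant.
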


\begin{proof}
Recall that the decoder $\dec$ outputs either $0$ or $1$. As a result, for the every $i\in\left[t\right]$ and every label $v\in\{0,1\}^s$ it holds that $\mathds{E}\left[\dec(u_i,v)\right] = \mathds{P}_{u\sim D_x}\left[\dec(u,v)=1\right]$ where the randomness is over the choice $u_i\sim D_x$.

We now define random variables $X_i = \dec(u_i,v) - \mathds{P}_{u\sim D_x}\left[\dec(u,v)=1\right]$ which are independent (because the samples used to construct $T$ are independent), bounded and have expectation zero.

We can now use a Chernoff bound (see Alon and Spencer~\cite[Theorem~A.1.16]{alon2008probabilistic})
to show concentration for the mean of $X_i$s for a fixed label $v\in\{0,1\}^s$. More specifically,
$$
\mathds{P}\left[ \left|Q(T,v)-\mathds{P}_{u\sim D_x}\left[\dec(u,v)=1\right]\right|
> \frac{\varepsilon}{2} \right]
= \mathds{P}\left[ \left|\frac{1}{t}\sum_{i=1}^t X_i\right| > \frac{\varepsilon}{2} \right]
< 2e^{-\frac{\varepsilon^2 t}{4\cdot 2}} = \frac{\delta}{2^{s+1}}
$$
Next, by the union bound over all $v\in\{0,1\}^s$ we get that $T$ is a characterizing set with probability at least $1-\frac{\delta}{2}$.
\end{proof}

Next, we outline the different proofs for the key steps in the analysis of the adversarial strategy which are affected by the modification, i.e.\ the replacement for \cref{packing_stable_centers} and \cref{close_distributions_lemma}.

\begin{claim}
If $\lvert\mathcal{C}_s\rvert \geq O\left( 2^{\left(\frac{s}{\varepsilon}\right)^2} \right)$, then there exists a pair of $x,y\in\mathcal{C}_s$ such that the recorded distributions $D_x$ and $D_y$ have the same characterizing multiset.
\end{claim}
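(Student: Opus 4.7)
The plan is a direct pigeonhole argument layered on top of \cref{lemma: sampling multiset}. First, I would note that the sampling lemma implies every recorded distribution $D_x$ with $x \in \mathcal{C}_s$ admits at least one characterizing multiset: since a random sample of size $t$ is characterizing with probability at least $1 - \delta/2 > 0$, some multiset in the support of the sampling procedure must work. Crucially, the relevant size is $t = \frac{8}{\varepsilon^2}\ln(2^{s+2}/\delta) = O(s/\varepsilon^2)$, treating $\delta$ as a fixed constant of the adversarial strategy (so $\ln(1/\delta)$ is absorbed).

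Second, I would count the number of distinct characterizing multisets that could arise. A multiset of size $t$ over the universe $\{0,1\}^s$ of $2^s$ elements is determined by at most $\binom{2^s + t - 1}{t} \leq (2^{s+1})^t = 2^{(s+1)t}$ choices. Plugging in $t = O(s/\varepsilon^2)$ yields an upper bound of $2^{O(s^2/\varepsilon^2)}$ possible multisets. This is the numerical heart of the claim: because $t$ depends only logarithmically on $2^s$, the exponent is quadratic rather than doubly exponential in $s$.

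Third, I would finish by pigeonhole. For every $x \in \mathcal{C}_s$, fix an arbitrary representative characterizing multiset $T_x$ of $D_x$ (guaranteed to exist by the first step). This defines a map $x \mapsto T_x$ from $\mathcal{C}_s$ into a set of size at most $2^{O(s^2/\varepsilon^2)}$. Therefore, if $\lvert\mathcal{C}_s\rvert$ exceeds this quantity, two distinct $x, y \in \mathcal{C}_s$ must satisfy $T_x = T_y$, exhibiting a common characterizing multiset for $D_x$ and $D_y$.

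There is no real obstacle in this argument beyond careful bookkeeping; the only subtlety is to keep track of the dependence of $t$ on the parameters $s$, $\varepsilon$, and $\delta$ from \cref{lemma: sampling multiset}, so that the resulting exponent is of the promised form $(s/\varepsilon)^2$ (with $\delta$ a fixed constant). Everything else is a straightforward combinatorial counting followed by pigeonhole.
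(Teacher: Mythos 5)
Your proposal is correct and follows essentially the same approach as the paper: bound the size $t$ of a characterizing multiset via \cref{lemma: sampling multiset}, count the number of size-$t$ multisets over $\{0,1\}^s$ to get $2^{O((s/\varepsilon)^2)}$, and apply pigeonhole. The paper bounds $\binom{2^s+t-1}{t}$ by $(2^s+t-1)^t/t!$ while you use the slightly looser $(2^{s+1})^t$, and you make explicit the probabilistic-method step that each $D_x$ admits at least one characterizing multiset, but these are cosmetic differences in the same argument.
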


\begin{proof}
Recall that in~\cref{lemma: sampling multiset} the size of the multiset is $t=2\cdot\frac{4}{\varepsilon^2}\cdot\ln{\left(\frac{2^{s+2}}{\delta}\right)} = O\left(\varepsilon^{-2}\max\{s,\ln\left(\frac{1}{\delta}\right)\}\right)$.
The claim follows from counting the number of multisets of size $t$ from a universe of size $2^s$ which is
$$
\binom{2^s+t-1}{t}
\leq \frac{(2^s+t-1)^t}{t!}
= O\left( 2^{\left(\frac{s}{\varepsilon}\right)^2} \right)
.$$
\end{proof}

\begin{lemma}[Alternative Version of~\cref{close_distributions_lemma}]
Let $x,y\in\mathcal{C}_s$ be the selected candidates as described in modified the version of the \nameref{degree_d_strategy_decision_phase} using characterizing multisets, i.e.\ both centers have neighbors which were not queried and $D_x$ and $D_y$ have the same characterizing multisets. Then the probability that $\ell(x)$ and 
$\ell(\gamma^y_{j+1})$ are rejected by the decoder is at most $3\delta+3\varepsilon$.
\end{lemma}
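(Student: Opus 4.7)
The plan is to follow the structure of \cref{close_distributions_lemma}, replacing the assumption $\Delta(D_x, D_y) \leq \varepsilon$ by the weaker property that there is a common multiset $T$ characterizing both $D_x$ and $D_y$. Set $v = \ell(x)$ and let $S \subseteq \{0,1\}^s$ denote the set of labels the decoder rejects when paired with $v$; the goal is to upper-bound $\mathds{P}[\ell(\gamma^y_{j+1}) \in S]$.

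First I would identify the failure events and union-bound their probabilities. By stability, the recorded distribution $D_x$ is $\varepsilon$-close in statistical distance to $D_i^{\ell(x)}$ with probability at least $1-\delta$, and similarly for $y$; these contribute $2\delta$. By \cref{lemma: sampling multiset}, the common sampled multiset $T$ is characterizing for $D_x$ with probability at least $1 - \delta/2$ and likewise for $D_y$, contributing another $\delta$. The total failure budget for the bad events is thus $3\delta$.

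On the complement of these bad events I would chain a short sequence of inequalities to show $\mathds{P}[\ell(\gamma^y_{j+1}) \in S] \leq 3\varepsilon$. The one-sided error property forces $\mathds{P}_{u \sim D_i^{\ell(x)}}[\dec(u,v) = 1] = 1$, so by statistical closeness $\mathds{P}_{u \sim D_x}[\dec(u,v) = 1] \geq 1 - \varepsilon$. Applying the characterizing-multiset property of $T$ to the test label $v$ for both $D_x$ and $D_y$, and using the triangle inequality on $Q(T, v)$, gives $|\mathds{P}_{u \sim D_x}[\dec(u,v) = 1] - \mathds{P}_{u \sim D_y}[\dec(u,v) = 1]| \leq \varepsilon$, hence $\mathds{P}_{u \sim D_y}[\dec(u,v) = 1] \geq 1 - 2\varepsilon$. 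A final application of statistical closeness on the $y$ side yields $\mathds{P}_{u \sim D_j^{\ell(y)}}[\dec(u,v) = 1] \geq 1 - 3\varepsilon$, that is, $\mathds{P}[\ell(\gamma^y_{j+1}) \in S] \leq 3\varepsilon$. Combining with the $3\delta$ slack for the bad events gives the advertised bound $3\delta + 3\varepsilon$.

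The main subtlety is that sharing a characterizing multiset is strictly weaker than being close in statistical distance: the characterizing property only controls probabilities of acceptance events of the form $\{u : \dec(u, w) = 1\}$ for each fixed $w$, not probabilities of arbitrary subsets. The key observation that makes the argument go through is that the set of interest $S$ corresponds to exactly one such acceptance event, at $w = v = \ell(x)$, so evaluating $Q(T, \cdot)$ at the single label $v$ suffices. One must also first use the one-sided error property to push the $D_x$-side acceptance probability up to essentially $1$ before the characterizing inequality can propagate a nontrivial lower bound to $D_y$ and then to $D_j^{\ell(y)}$.
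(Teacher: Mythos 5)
Your proof is correct and takes essentially the same approach as the paper: a $3\delta$ union bound over the stability and multiset-sampling failure events, followed by a three-term triangle inequality through $D_i^{\ell(x)}$, $D_x$, $D_y$, $D_j^{\ell(y)}$, using the one-sided error property to zero out the $D_i^{\ell(x)}$ term and the shared characterizing multiset to bridge $D_x$ and $D_y$. Your closing remark correctly isolates the key observation the paper relies on, namely that the event $\{u : \dec(u, \ell(x)) = 0\}$ is exactly the kind of set the characterizing-multiset property controls.
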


\begin{proof}
The following is a slightly modified version of the proof of~\cref{close_distributions_lemma}. We assume that for the pair of centers $x,y\in\mathcal{C}_s$ mentioned in the \nameref{degree_d_strategy_decision_phase} the recorded distributions $D_x$ and $D_y$ are $\varepsilon$-close to the actual distributions $D_i^{\ell(x)}$ and $D_j^{\ell(y)}$. Recall that we selected $x$ and $y$ which have the same characterizing multiset so we also assume the procedure described in~\cref{lemma: sampling multiset} produced an actual characterizing multiset for both centers.

The probability the decoder $\cal D$ rejects the pair $\ell(x)$ and $\ell(\gamma^y_{j+1})$ is therefore:
\begin{align}
\mathds{P}\left[ \dec(\ell(x),\ell(\gamma^y_{j+1}))=0 \right] \nonumber &
= \mathds{P}_{u\sim D_j^{\ell(y)}}\left[ \dec(\ell(x),u)=0 \right] \nonumber \\&
= \left| \mathds{P}_{u\sim D_j^{\ell(y)}}\left[ \dec(\ell(x),u)=0 \right] - \mathds{P}_{u\sim D_i^{\ell(x)}}\left[ \dec(\ell(x),u)=0 \right] \right| \nonumber \\&
\leq \left| \mathds{P}_{u\sim D_j^{\ell(y)}}\left[ \dec(\ell(x),u)=0 \right] - \mathds{P}_{u\sim D_y}\left[ \dec(\ell(x),u)=0 \right] \right| \\&
+ \left| \mathds{P}_{u\sim D_y}\left[ \dec(\ell(x),u)=0 \right] - \mathds{P}_{u\sim D_x}\left[ \dec(\ell(x),u)=0 \right] \right| \\&
+ \left| \mathds{P}_{u\sim D_x}\left[ \dec(\ell(x),u)=0 \right] - \mathds{P}_{u\sim D_i^{\ell(x)}}\left[ \dec(\ell(x),u)=0 \right] \right| \\&
\leq 3\varepsilon \nonumber
\end{align}
We used the fact that $\mathds{P}_{u\sim D_i^{\ell(x)}}\left[ \dec(\ell(x),u)=0 \right]=0$ since the scheme only errs on non-edges. Note that terms (1) and (3) are at most $\varepsilon$ since they are bounded by the distances $\Delta(D_i^{\ell(x)},D_x)$ and $\Delta(D_j^{\ell(y)},D_y)$. 

Also, the upper bound on term (2) is due to the following observation: when two centers $x,y\in\mathcal{C}_s$ have the same characterizing multiset, for any label $v\in\{0,1\}^s$ we get that
$$
\left| \mathds{P}_{u\sim D_x}\left[\dec(u,v)=0\right]-\mathds{P}_{u\sim D_y}\left[\dec(u,v)=0\right] \right|
=\left| \mathds{P}_{u\sim D_x}\left[\dec(u,v)=1\right]-\mathds{P}_{u\sim D_y}\left[\dec(u,v)=1\right] \right|
\leq \varepsilon.
$$

Similarly to the proof of~\cref{close_distributions_lemma}, our bounds for terms (1), (2) and (3) are based on the previously mentioned assumptions, i.e.\ that the recorded distribution are close to the actual distribution and we successfully sample characterizing multisets. The adversary fails to forge if (i) At least one of those assumptions does not hold (which happens with probability at most $3\delta$ by the union bound for those failure events) or (ii) If they do hold but, the candidates end up being rejected by the decoder (which happens with probability at most $3\varepsilon$). This means that with sufficiently many stable centers the adversarial strategy fails with probability at most $3\delta+3\varepsilon$.
\end{proof}

To conclude, the technique we introduce in this section requires modifying the \nameref{degree_d_strategy_decision_phase} to sample characterizing multisets for all stable centers and select a pair that has the same multisets. Given this modifications, substituting the relevant parts from the proof of~\cref{degree_d_lower_bound} with the results of this section yields a stronger version of~\cref{degree_d_lower_bound} (with a different constant), i.e.\ the number of required vertices of large degree is $O(2^{\left(\frac{s}{\varepsilon}\right)^2})$ instead of $\left(\frac{2}{\varepsilon}\right)^{2^s}$. For the $d$-dimensional hypercube we get:

\begin{corollary}
\label{cor: hypercube}
Consider the $d$-dimensional hypercube (as a specific graph on $2^d$ vertices). It has a labeling scheme with adversarial resilience (following~\cref{resilience_definition}) using labels of size $\Theta(\sqrt{d})$ and this is in fact the best possible size.
\end{corollary}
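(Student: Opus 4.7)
The plan is to prove the corollary in two parts, using tools already developed in the paper.

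\textbf{Upper bound ($O(\sqrt{d})$).} I identify the vertices of the $d$-cube with $\{0,1\}^d$, so adjacency is ``Hamming distance exactly $1$''. I would build a two-layer label. The first layer is the syndrome $s(x)=Hx$ under the parity check matrix $H$ of a Hamming code of length $\geq d$, using $O(\log d)$ bits. Since the Hamming code has minimum distance $3$, the pair $(s(x),s(y))$ already decides adjacency up to the following ambiguity: if $s(x)=s(y)$ then $x\oplus y$ is a codeword, hence either $x=y$ or $|x\oplus y|\geq 3$ (so not adjacent); if $s(x)\neq s(y)$ there is a unique $i$ with $s(x)\oplus s(y)=H_i$, and then either $y=x\oplus e_i$ (adjacent) or $|x\oplus y|\geq 2$ (not adjacent). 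The second layer is a private-coin equality sketch in the spirit of Babai--Kimmel~\cite{BabaiK97}: fix a linear code $C$ of length $n=O(d)$ and constant relative distance, and for each vertex $x$ sample a uniformly random $S_x\subseteq [n]$ of size $O(\sqrt{d})$, storing $\{(j,C(x)_j):j\in S_x\}$. The decoder, after computing $i$ from the syndromes, accepts iff for every $j\in S_x\cap S_y$ one has $C(x)_j\oplus C(y)_j=C(e_i)_j$. By linearity of $C$, the tested vector is $C(y\oplus x\oplus e_i)$, which is zero exactly when $y=x\oplus e_i$ and otherwise a nonzero codeword of weight $\Omega(n)$; a standard birthday-style analysis then yields constant advantage, and tuning $|S_x|$ or parallel repetition gives probability of forgery $\varepsilon$. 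Adversarial resilience is inherited from the private-coin nature of the sketch: the random set $S_x$ (and the random choices producing $s(x)$ if any) are independent of everything the adversary learns by querying other vertices, so conditioned on the adversary's view the distribution of the candidates' labels is unchanged.

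\textbf{Lower bound ($\Omega(\sqrt{d})$).} This is where the improvement in this section pays off. The $d$-cube has $2^{d-1}$ pairwise non-adjacent vertices (e.g.\ all even-parity points), each of degree $d$ and pairwise at graph distance $\geq 2$, so they form an admissible collection of centers $\mathcal{C}$. By the strengthened version of \cref{degree_d_lower_bound} established above, whenever the number of such non-adjacent high-degree vertices is at least $2^{O((s/\varepsilon)^2)}$ the adversary wins with probability $1-\alpha$ for any constant $\alpha$. Fixing constants $\varepsilon,\alpha$, resilience forces $2^{\Omega(s^2)}>2^{d-1}$, i.e.\ $s=\Omega(\sqrt{d})$.

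\textbf{Main obstacle.} The delicate step is the upper bound: the index $i$ that the decoder needs to test is a function of both inputs and is unknown at encoding time. The key insight is that because $C$ is linear, the predicate ``$C(y)=C(x)\oplus C(e_i)$'' can be tested coordinate-wise from independently sampled positions of $C(x)$ and $C(y)$ once $i$ is revealed, and the nonzero codeword $C(y\oplus x\oplus e_i)$ is guaranteed to be heavy. The rest of the argument is routine parameter chasing (choosing the sketch size to get a prescribed error $\varepsilon$ and verifying that the label size remains $O(\sqrt{d}+\log d)=O(\sqrt{d})$), together with noting that the lower bound of \cref{degree_d_lower_bound} with the improved packing bound of this section matches this up to constants.
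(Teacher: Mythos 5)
Your proposal follows essentially the same route as the paper. The paper proves this corollary very tersely — for the upper bound it says only that one should combine ``the equality protocol in the simultaneous message model with private randomness and syndrome decoding of a Hamming code, together with the fact that the equality protocol applies a linear function to the original label'' — and your syndrome-plus-linear-code construction with random coordinate sampling is exactly the intended instantiation. Your observation that linearity lets the referee compensate for the offset $e_i$ after it is determined from the syndromes, and that adversarial resilience follows because $S_x$ is private randomness independent of every other label, is precisely the intended argument. The lower bound is also the same: plug the hypercube into the strengthened form of \cref{degree_d_lower_bound} obtained via characterizing multisets, which needs only $2^{O((s/\varepsilon)^2)}$ high-degree centers.

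One small technical remark: the strategy behind \cref{degree_d_lower_bound} (as laid out in \cref{section:lower_bound_strategy}) actually picks centers that are pairwise at graph distance at least $3$, not merely non-adjacent, so that their neighborhoods are disjoint. Your choice of all even-parity vertices only guarantees pairwise distance $\geq 2$. This is harmless — replace the even-parity code by any linear code of minimum distance $\geq 4$ (e.g.\ an extended Hamming code), which still has $2^{d-O(\log d)}$ codewords, comfortably exceeding the $2^{O((s/\varepsilon)^2)}$ threshold and giving the same $s=\Omega(\sqrt{d})$ conclusion. Other than this nuance, the argument is complete and matches the paper's.
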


Observe that \cref{cor: hypercube} (or even the lower bound for the label sizes for the hypercube before the improvements of this section) gives a natural counter-example in the adversarial case to the results from \cite{harms2022randomized} regarding constant size sketches for graph products, including the hypercube. Furthermore, since there is a matching upper bound for the hypercube, we conclude that we reached the asymptotically optimal size $n$ required for the lower bound of \cref{degree_d_lower_bound}.

\subsection{Separating Adversarial Resilient Sketching and the SMMPC}
At this point the reader may be wondering whether all the results regarding sketching may be phrased as ones in the SMMPC. In fact, the use of shared randomness is inherent in many labeling schemes we suggest, such as the projective plane method for labeling a matching in~\cref{projective_plane_scheme}. We show that this is necessary. 

Consider the matching graph on $2n$ vertices and consider an adversarial resilient (following~\cref{resilience_definition}) labeling scheme for this graph, applying the methods outlined in this section. In this framework we first start with some deterministic labeling and then come up with a randomized protocol in the SMMPC for the decoding function. We claim that any such protocol can be used for solving the equality function in the SMMPC: Alice and Bob, given a value $i \in [n]$, simulate the initial (deterministic) labeling $(u_i,v_i)$ given to the two nodes that form edge number $i$ in the matching, and then the message to sent the referee is simply the concatenation of the two messages $(f,g)$ sent in the randomized protocol. The referee for the equality protocol that is given a pair of messages $(f,g)$ and $(f',g')$ decides that the original values are equal if $(f,g')$ and $(f',g)$ are both accepted by the adjacency decoder for the matching graph.

We know that the equality protocol in the SMMPC requires labels of size $\Theta (\sqrt{\log n})$, so this gives us a lower bound on the length of the labeling obtained by such a method. On the other hand,  we know that the matching graph has labels of size roughly $2\log (1/\varepsilon)$ (Claim~\ref{claim: projective}). So we get that there is a large gap in this case. 

\section{Discussion, Conclusions and Open Problems}

The main construction of adversarial resilient sketches we saw used $2d \log (1/\varepsilon)(1+o(1))$-bit labels for max-degree $d$ graphs (Theorem~\ref{retrieval_degree_d_scheme}). This is only $4$ times the length of labels in the non-adversarial case (see Claim~\ref{thm: non adversarial}). One question is what exactly is the gap between these two cases. Note that 
we saw a $2 \log(1/\varepsilon)$ lower bound for the $d=1$ case (Theorem~\ref{matching_lower_bound}), but we do not have a similarly tight one for general $d$. 

We saw a specific construction for trees of max-degree $d$ (Claim~\ref{thm: trees}), but although the method is simpler, the resulting labels were of comparable size to those generated by the general construction for graphs with a maximum degree of $d$. From the lower bound (Theorem~\ref{degree_d_lower_bound}) we know that such trees need labels of size $\Omega(d)$ in order to have adversarial resilience, but perhaps there is a method with a better constant here.

\paragraph{Large Degrees:} 
We saw in Section~\ref{section:smmpc} (Corollaries~\ref{cor: trees} and \ref{cor: hypercube}) that for large $d$, say $\Theta(\log n)$, certain graphs have better than $O(d)$ adversarial resilient labeling. 
Is it possible for any specific graph $G$ to come up with a labeling scheme with similar properties (where the label size is roughly $\min(d, \sqrt{\log n})$ bits) for $G$, or are there ``rigid" graphs, say with $d$ which is $\Theta(\log n)$, that require labels of length $d$? 

\paragraph{Sketches for Other Types of Information:}
We did not consider a wide range of applications of randomness for constructing labeling schemes. For example, consider $k$-distance sketches which are used for deciding whether $\text{dist}(x,y)<k$ for vertices $x,y$ by using their labels. A construction of a $k$-distance sketch for the family of trees using $O(k)$ bits is implied by Harms~\cite{harms2020universal}. Furthermore, new results by Esperet, Harms and Kupavskii~\cite{esperet2022sketching} characterize (approximate) distance sketching in monotone graph families. In this context, we ask whether there is some general procedure for transforming such sketches to be resilient against an adaptive adversary. Our results for adjacency sketches do not seem to suggest such a procedure, so this is an intriguing open problem.

\section*{Acknowledgements}
We thank Oded Goldreich for very helpful discussions regarding amplification in the two-sided error case. 
\bibliographystyle{abbrv}
\bibliography{index}

\begin{thebibliography}{10}

\bibitem{alon2017asymptotically}
N.~Alon.
\newblock Asymptotically optimal induced universal graphs.
\newblock {\em Geometric and Functional Analysis}, 27:1--32, 2017.

\bibitem{alon2017optimal}
N.~Alon and R.~Nenadov.
\newblock Optimal induced universal graphs for bounded-degree graphs.
\newblock In {\em Proceedings of the Twenty-Eighth Annual ACM-SIAM Symposium on
  Discrete Algorithms}, pages 1149--1157. SIAM, 2017.

\bibitem{alon2008probabilistic}
N.~Alon and J.~H. Spencer.
\newblock {\em The Probabilistic Method, Third Edition}.
\newblock Wiley-Interscience series in discrete mathematics and optimization.
  Wiley, 2008.

\bibitem{AlstrupDK17}
S.~Alstrup, S.~Dahlgaard, and M.~B.~T. Knudsen.
\newblock Optimal induced universal graphs and adjacency labeling for trees.
\newblock {\em J. {ACM}}, 64(4):27:1--27:22, 2017.

\bibitem{alstrup2015adjacency}
S.~Alstrup, H.~Kaplan, M.~Thorup, and U.~Zwick.
\newblock Adjacency labeling schemes and induced-universal graphs.
\newblock In {\em Proceedings of the forty-seventh annual ACM symposium on
  Theory of Computing}, pages 625--634, 2015.

\bibitem{arbitman2010backyard}
Y.~Arbitman, M.~Naor, and G.~Segev.
\newblock Backyard cuckoo hashing: Constant worst-case operations with a
  succinct representation.
\newblock In {\em 2010 IEEE 51st Annual symposium on foundations of computer
  science}, pages 787--796. IEEE, 2010.

\bibitem{BabaiK97}
L.~Babai and P.~G. Kimmel.
\newblock Randomized simultaneous messages: Solution of a problem of {Y}ao in
  communication complexity.
\newblock In {\em Proceedings of Computational Complexity. Twelfth IEEE
  Conference}, pages 239--246. IEEE, 1997.

\bibitem{boyle2019adversarially}
E.~Boyle, R.~LaVigne, and V.~Vaikuntanathan.
\newblock Adversarially robust property-preserving hash functions.
\newblock In {\em 10th Innovations in Theoretical Computer Science Conference,
  {ITCS} 2019}, volume 124, pages 16:1--16:20. Schloss Dagstuhl-Leibniz-Zentrum
  f{\"{u}}r Informatik, 2019.

\bibitem{butler2009induced}
S.~Butler.
\newblock Induced-universal graphs for graphs with bounded maximum degree.
\newblock {\em Graphs and Combinatorics}, 25(4):461--468, 2009.

\bibitem{carter1978exact}
L.~Carter, R.~Floyd, J.~Gill, G.~Markowsky, and M.~Wegman.
\newblock Exact and approximate membership testers.
\newblock In {\em Proceedings of the tenth annual ACM symposium on Theory of
  computing}, pages 59--65, 1978.

\bibitem{chazelle2004bloomier}
B.~Chazelle, J.~Kilian, R.~Rubinfeld, and A.~Tal.
\newblock The bloomier filter: an efficient data structure for static support
  lookup tables.
\newblock In {\em Proceedings of the fifteenth annual ACM-SIAM symposium on
  Discrete algorithms}, pages 30--39, 2004.

\bibitem{CohenN22}
S.~P. Cohen and M.~Naor.
\newblock Low communication complexity protocols, collision resistant hash
  functions and secret key-agreement protocols.
\newblock In {\em Advances in Cryptology - {CRYPTO} 2022 - 42nd Annual
  International Cryptology Conference, {CRYPTO} 2022, Santa Barbara, CA, USA,
  August 15-18, 2022, Proceedings, Part {III}}, volume 13509 of {\em Lecture
  Notes in Computer Science}, pages 252--281. Springer, 2022.

\bibitem{dietzfelbinger2008succinct}
M.~Dietzfelbinger and R.~Pagh.
\newblock Succinct data structures for retrieval and approximate membership.
\newblock In {\em International Colloquium on Automata, Languages, and
  Programming}, pages 385--396. Springer, 2008.

\bibitem{esperet2022sketching}
L.~Esperet, N.~Harms, and A.~Kupavskii.
\newblock Sketching distances in monotone graph classes.
\newblock In {\em Approximation, Randomization, and Combinatorial Optimization.
  Algorithms and Techniques (APPROX/RANDOM 2022)}, volume 245, pages
  18:1--18:23. Schloss Dagstuhl-Leibniz-Zentrum f{\"u}r Informatik, 2022.

\bibitem{fraigniaud2009randomized}
P.~Fraigniaud and A.~Korman.
\newblock On randomized representations of graphs using short labels.
\newblock In {\em Proceedings of the twenty-first annual symposium on
  Parallelism in algorithms and architectures}, pages 131--137, 2009.

\bibitem{gilbert1974codes}
E.~N. Gilbert, F.~J. MacWilliams, and N.~J. Sloane.
\newblock Codes which detect deception.
\newblock {\em Bell System Technical Journal}, 53(3):405--424, 1974.

\bibitem{Goldreich98}
O.~Goldreich.
\newblock {\em Modern Cryptography, Probabilistic Proofs and Pseudorandomness},
  volume~17 of {\em Algorithms and Combinatorics}.
\newblock Springer, 1998.

\bibitem{halldorsson2020distance}
M.~M. Halld{\'o}rsson, F.~Kuhn, and Y.~Maus.
\newblock Distance-2 coloring in the congest model.
\newblock In {\em Proceedings of the 39th Symposium on Principles of
  Distributed Computing}, pages 233--242, 2020.

\bibitem{halldorsson2020coloring}
M.~M. Halld{\'o}rsson, F.~Kuhn, Y.~Maus, and A.~Nolin.
\newblock Coloring fast without learning your neighbors' colors.
\newblock In {\em 34th International Symposium on Distributed Computing (DISC
  2020)}, volume 179, pages 39:1--39:17. Schloss Dagstuhl-Leibniz-Zentrum
  f{\"u}r Informatik, 2020.

\bibitem{hambardzumyan2022counter}
L.~Hambardzumyan, H.~Hatami, and P.~Hatami.
\newblock A counter-example to the probabilistic universal graph conjecture via
  randomized communication complexity.
\newblock {\em Discrete Applied Mathematics}, 322:117--122, 2022.

\bibitem{hardt2013robust}
M.~Hardt and D.~P. Woodruff.
\newblock How robust are linear sketches to adaptive inputs?
\newblock In {\em Proceedings of the forty-fifth annual ACM symposium on Theory
  of computing}, pages 121--130, 2013.

\bibitem{harms2020universal}
N.~Harms.
\newblock Universal communication, universal graphs, and graph labeling.
\newblock In {\em 11th Innovations in Theoretical Computer Science Conference
  (ITCS 2020)}, volume 151, pages 33:1--33:27. Schloss Dagstuhl-Leibniz-Zentrum
  f{\"u}r Informatik, 2020.

\bibitem{harms2022adjacency}
N.~Harms.
\newblock Adjacency labeling and sketching for induced subgraphs of the
  hypercube.
\newblock \url{https://cs.uwaterloo.ca/~nharms/downloads/hypercube_sketch.pdf},
  2022.

\bibitem{harms2022randomized}
N.~Harms, S.~Wild, and V.~Zamaraev.
\newblock Randomized communication and implicit graph representations.
\newblock In {\em Proceedings of the 54th Annual ACM SIGACT Symposium on Theory
  of Computing}, pages 1220--1233, 2022.

\bibitem{hatami2022implicit}
H.~Hatami and P.~Hatami.
\newblock The implicit graph conjecture is false.
\newblock In {\em 2022 IEEE 63rd Annual Symposium on Foundations of Computer
  Science (FOCS)}, pages 1134--1137. IEEE, 2022.

\bibitem{HolmgrenLTW22}
J.~Holmgren, M.~Liu, L.~Tyner, and D.~Wichs.
\newblock Nearly optimal property preserving hashing.
\newblock In {\em Advances in Cryptology - {CRYPTO} 2022 - 42nd Annual
  International Cryptology Conference, {CRYPTO} 2022, Santa Barbara, CA, USA,
  August 15-18, 2022, Proceedings, Part {III}}, volume 13509 of {\em Lecture
  Notes in Computer Science}, pages 473--502. Springer, 2022.

\bibitem{impagliazzo2010constructive}
R.~Impagliazzo and V.~Kabanets.
\newblock Constructive proofs of concentration bounds.
\newblock In {\em Approximation, Randomization, and Combinatorial Optimization.
  Algorithms and Techniques, 13th International Workshop, {APPROX} 2010, and
  14th International Workshop, {RANDOM} 2010, Barcelona, Spain, September 1-3,
  2010. Proceedings}, volume 6302 of {\em Lecture Notes in Computer Science},
  pages 617--631. Springer, 2010.

\bibitem{kannan1992implicat}
S.~Kannan, M.~Naor, and S.~Rudich.
\newblock Implicit representation of graphs.
\newblock {\em SIAM Journal on Discrete Mathematics}, 5(4):596--603, 1992.

\bibitem{mironov2008sketching}
I.~Mironov, M.~Naor, and G.~Segev.
\newblock Sketching in adversarial environments.
\newblock In {\em Proceedings of the fortieth annual ACM symposium on Theory of
  computing}, pages 651--660, 2008.

\bibitem{MisraG92}
J.~Misra and D.~Gries.
\newblock A constructive proof of {V}izing's theorem.
\newblock {\em Inf. Process. Lett.}, 41(3):131--133, 1992.

\bibitem{naor2022bet}
M.~Naor and N.~Oved.
\newblock Bet-or-pass: Adversarially robust bloom filters.
\newblock In {\em Theory of Cryptography: 20th International Conference, TCC
  2022, Chicago, IL, USA, November 7--10, 2022, Proceedings, Part II}, pages
  777--808. Springer, 2022.

\bibitem{naor2006learning}
M.~Naor and G.~N. Rothblum.
\newblock Learning to impersonate.
\newblock In {\em Proceedings of the 23rd international conference on Machine
  learning}, pages 649--656, 2006.

\bibitem{naor2015bloom}
M.~Naor and E.~Yogev.
\newblock Bloom filters in adversarial environments.
\newblock {\em {ACM} Trans. Algorithms}, 15(3):35:1--35:30, 2019.

\bibitem{NewmanS96}
I.~Newman and M.~Szegedy.
\newblock Public vs. private coin flips in one round communication games
  (extended abstract).
\newblock In {\em Proceedings of the Twenty-Eighth Annual {ACM} Symposium on
  the Theory of Computing, Philadelphia, Pennsylvania, USA, May 22-24, 1996},
  pages 561--570. {ACM}, 1996.

\bibitem{panconesi1997randomized}
A.~Panconesi and A.~Srinivasan.
\newblock Randomized distributed edge coloring via an extension of the
  chernoff-hoeffding bounds.
\newblock {\em {SIAM} J. Comput.}, 26(2):350--368, 1997.

\bibitem{wainwright2019high}
M.~J. Wainwright.
\newblock {\em High-dimensional statistics: A non-asymptotic viewpoint},
  volume~48.
\newblock Cambridge university press, 2019.

\end{thebibliography}

\appendix

\section{Space Lower Bound for ``Bloomier Filters"}\label{apdx:bloomier_lower_bound}
In \cref{retrieval_problem} we use a retrieval data structure in our labeling scheme to succinctly represent sub-labels for all neighbors of a labeled vertex. Recall that a retrieval data structure answers queries $x\in S$ with values $f(x)\in \{0,1\}^r$ while for $x\notin S$ it may answer any $r$-bit string. A retrieval that approximately detects membership in $S$, i.e.\ returns $\perp$ for queries $x\notin S$ with probability at least $1-\varepsilon$, is often called a ``Bloomier filter".

Our construction and analysis can be simplified by using a ``Bloomier filter" but known data structures with this property have size $n\left(r+\log (1/\varepsilon )\right)$. It would be quite beneficial if the same could be achieved with size $n\cdot\max\{r,\log (1/\varepsilon )\}$ instead but unfortunately this is not possible. The following result is an extension of Carter et al.~\cite{carter1978exact} and \cite[{Appendix C}]{dietzfelbinger2008succinct} to data structures solving both retrieval and approximate membership.

\begin{claim}
Let $S$ be a static set of $n$ keys which come from a larger set $U$ of size $u$ and let $f\colon U\mapsto\{0,1\}^r$ be a mapping from keys to values. A retrieval data structure which detects membership in $S$ with error probability $\varepsilon\in \left(0,1\right]$ must have size at least $n\left(r+\log (1/\varepsilon )\right)-O(1)$ when $\lvert U\rvert > n^2/\varepsilon$.
\end{claim}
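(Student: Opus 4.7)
The plan is to prove this by an information-theoretic argument. I set up an encoding view: consider a random input consisting of a uniformly chosen $n$-subset $S\subseteq U$ together with a uniformly random assignment $f|_S\in(\{0,1\}^r)^S$, so that the joint Shannon entropy of the input is $\log\binom{u}{n}+nr$. Let $M$ denote the (possibly randomized) $m$-bit data structure produced by the construction; trivially $H(M)\leq m$. The bound will then follow from the data-processing chain $m\geq H(M)\geq I(M;S,f|_S)=H(S,f|_S)-H(S,f|_S\mid M)$, together with an upper bound on the residual entropy $H(S,f|_S\mid M)$.

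To bound this residual entropy, I would associate with every realization $d\in\{0,1\}^m$ of the data structure its \emph{accepting set} $A_d=\{x\in U:\, d(x)\neq\perp\}$ and \emph{value function} $g_d:A_d\to\{0,1\}^r$ given by $g_d(x)=d(x)$. Correctness of retrieval on members implies that if $M=d$ is a possible output on input $(S,f|_S)$, then $S\subseteq A_d$ and $f|_S=g_d|_S$; hence the conditional support of $(S,f|_S)$ given $M=d$ has size at most $\binom{|A_d|}{n}$, so $H(S,f|_S\mid M=d)\leq\log\binom{|A_d|}{n}$. A quick calculation gives $\mathds{E}[|A_M|]\leq n+\varepsilon u$: for each $x\in U$, $\mathds{P}[x\in A_M]\leq \mathds{P}[x\in S]+\mathds{P}[x\notin S]\cdot\varepsilon\leq n/u+\varepsilon$, and linearity of expectation closes the sum. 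Because $x\mapsto\log\binom{x}{n}$ is concave on $[n,\infty)$ (its second derivative equals $-\sum_{i=0}^{n-1}(x-i)^{-2}<0$), Jensen's inequality yields $\mathds{E}_M[\log\binom{|A_M|}{n}]\leq\log\binom{n+\varepsilon u}{n}$, and combining everything gives
\[
m\geq \log\binom{u}{n}+nr-\log\binom{n+\varepsilon u}{n}.
\]

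The remaining step is to show that when $u>n^2/\varepsilon$ this combinatorial ratio contributes essentially $n\log(1/\varepsilon)-O(1)$. I would write the ratio as $\binom{u}{n}/\binom{n+\varepsilon u}{n}=\prod_{i=0}^{n-1}(u-i)/(\varepsilon u+n-i)$, verify that each factor is minimized at $i=0$ (since $(1-\varepsilon)u>n$ under the hypothesis makes the derivative with respect to $i$ positive), and observe that the minimum factor $u/(\varepsilon u+n)=1/(\varepsilon+n/u)$ is at least $(1/\varepsilon)\cdot n/(n+1)$ by using $n/u<\varepsilon/n$. Multiplying $n$ such factors gives the product at least $(1/\varepsilon)^n\cdot(n/(n+1))^n\geq (1/\varepsilon)^n/e$, so the log is at least $n\log(1/\varepsilon)-\log e$. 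Substituting back yields $m\geq n(r+\log(1/\varepsilon))-O(1)$.

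The main obstacle is precisely this last quantitative estimate. A more elementary Markov-based reduction that replaces the concavity step by ``with probability at least $1/2$, $|A_M|\leq 2(n+\varepsilon u)$'' introduces a multiplicative factor of $2$ inside the binomial, which in turn costs an additive $\Theta(n)$ in the final bound and ruins the $O(1)$ additive term. Using Jensen on the concave function $\log\binom{x}{n}$ avoids any such slack, and the hypothesis $u>n^2/\varepsilon$ is used exactly to convert $1/(\varepsilon+n/u)$ into $1/\varepsilon$ up to the factor $n/(n+1)$, whose $n$-th power stays bounded away from zero.
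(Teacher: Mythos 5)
Your proposal follows the same core idea as the paper's proof: a Bloomier filter, together with the information needed to locate $S$ inside the accepting set, encodes a full dictionary, and the dictionary requires $\log\binom{u}{n}+nr$ bits; the additive loss is $\log\binom{|A_M|}{n}$ where $|A_M|\leq n+\varepsilon u$, and the rest is algebra on the binomial ratio. The paper phrases this as a direct counting/reduction argument (augment $\Gamma$ with an explicit encoding of $S_\Gamma\subseteq U_\Gamma$), implicitly treating $|U_\Gamma|$ as deterministic; your version routes the same inequality through $I(M;S,f|_S)$ and closes the gap for randomized constructions by invoking concavity of $\log\binom{x}{n}$ and Jensen, which is in fact a cleaner way to handle the expectation over $M$. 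The key inequality $m\geq \log\binom{u}{n}+nr-\log\binom{n+\varepsilon u}{n}$ is identical in both treatments.

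One step as written does not hold in general. You claim that the hypothesis $u>n^2/\varepsilon$ makes $(1-\varepsilon)u>n$, so that each factor $\frac{u-i}{\varepsilon u+n-i}$ is increasing in $i$ and hence minimized at $i=0$. This implication fails once $\varepsilon\geq n/(n+1)$: for instance $\varepsilon=1$, $n=10$, $u=101>n^2/\varepsilon$ gives $(1-\varepsilon)u=0<n$. The fix is routine: instead of using monotonicity in $i$, lower-bound every factor uniformly by taking the smallest numerator and largest denominator separately,
$$
\frac{u-i}{\varepsilon u+n-i}\;\geq\;\frac{u-n+1}{\varepsilon u+n}\;>\;\frac{1-\varepsilon/n}{\varepsilon(1+1/n)}\;\geq\;\frac{1}{\varepsilon}\cdot\frac{n-1}{n+1},
$$
using $n/u<\varepsilon/n$. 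The $n$-th power of $\frac{n-1}{n+1}$ is still bounded below by a universal constant, so you recover $n\log(1/\varepsilon)-O(1)$ with a slightly worse constant. With that repair, the argument is correct and complete.
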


\begin{proof}
We show a lower bound on the size of a ``Bloomier filter" by using it to construct a data structure for the dictionary problem. In this setting, we are still considering a static set of $n$ keys $S\subseteq U$ with $r$-bit values but we require that a query on $x\in U$ reports whether or not $x\in S$ and in the positive case returns the value associated with the key $x$.

There are $\binom{\lvert U\rvert}{n}\cdot 2^{nr}$ possible dictionaries since each one is characterized by the choice of a set $S$ and the values associated with the set. Since every dictionary must have a unique memory representation, this means that every data structure for the dictionary problem must be at least $\log\binom{\lvert U\rvert}{n}+nr$ bits in the worst case.

Next, take a specific instance of a dictionary $f\colon S\mapsto\{0,1\}^r$ and let $\varepsilon\in \left(0,1\right]$. We construct a ``Bloomier filter" $\Gamma$ of optimal size which is a relaxation of $f$, i.e.\ for all $x\in S$ it answers $f(x)$ and for $x\notin S$ it answers $\perp$ with probability at least $1-\varepsilon$.

Consider the set $U_\Gamma=\left\{x\in U\mid\Gamma (x)\neq\perp\right\}$. We know that $\lvert U_\Gamma\rvert \leq \varepsilon\left(\lvert U\rvert -n\right) + n$ since $U_\Gamma$ consists of $S$ and the keys in $U\setminus S$ for which $\Gamma$ erroneously returns a value in $\{0,1\}^r$ instead of $\perp$. The fraction of those keys in $U\setminus S$ is at most $\varepsilon$. Since $\Gamma$ doesn't explicitly store the set $S$, we can add the information to turn it into a data structure for the dictionary problem. This is done by specifying a subset $S_\Gamma\subseteq U_\Gamma$ of size $n$ and testing every input $x\in U$ against this subset first: if $x\in S_\Gamma$ answer $\Gamma (x)$ and otherwise answer $\perp$. When $S_\Gamma = S$ this is a dictionary data structure for $f$.

Note that there are at least $\binom{\floor{\varepsilon\left(\lvert U\rvert-n\right)}+n}{n}$ choices for $S_\Gamma$ and each must have a unique memory representation. This means that it will require at least $\log\binom{\floor{\varepsilon\left(\lvert U\rvert-n\right)}+n}{n}$ bits in the worst case. By the lower bound on dictionary size,

$$
\lvert\Gamma \rvert + \log\binom{\floor{\varepsilon\left(\lvert U\rvert-n\right)}+n}{n} \geq \log\binom{\lvert U\rvert}{n}+nr
$$
Assuming $\lvert U\rvert > n^2/\varepsilon$, we have:

\begin{align*}
\lvert\Gamma \rvert &
\geq nr+\log\frac{\binom{\lvert U\rvert}{n}}{\binom{\floor{\varepsilon\left(\lvert U\rvert-n\right)}+n}{n}} \\&
= nr+\log\frac{\lvert U\rvert\left(\lvert U\rvert -1\right)\cdots\left(\lvert U\rvert -n+1\right)}{\left(\varepsilon\left(\lvert U\rvert-n\right)+n\right)\left(\varepsilon\left(\lvert U\rvert-n\right)+n-1\right)\cdots\left(\varepsilon\left(\lvert U\rvert-n\right)+1\right)} \\&
> nr+n\log\frac{\lvert U\rvert -n}{\varepsilon\left(\lvert U\rvert-n\right)+n} \\&
= nr+n\log\left(1/\varepsilon\right)-\log\left(1+\frac{n}{\varepsilon\left(\lvert U\rvert -n\right)}\right)^n \\&
\geq nr+n\log\left(1/\varepsilon\right)-\frac{n^2}{\varepsilon\left(\lvert U\rvert -n\right)} \\&
= nr+n\log\left(1/\varepsilon\right)-O(1)
\end{align*}
\end{proof}

\section{Graphs with Maximal Degree \texorpdfstring{$d$}{d} Without an Adversary}\label{apdx:degree_d_non_adversarial}

For completeness, we consider adjacency sketches for the family of graphs with maximal degree $d$ in the non-adversarial setting to contextualize our main results from \cref{section:high_degree,section:learning}. Let $\mathcal{F}_d$ be the family of graphs with maximal degree $d$.

In the context of deterministic labeling schemes (and universal graphs) for $\mathcal{F}_d$, we use the following theorem:

\begin{theorem}\label{deterministic_degree_d}
For every $d\geq2$, there is a deterministic labeling scheme for $\mathcal{F}_d$ using labels of size $\frac{d}{2}\log n (1+o(1))$ bits for $n$ vertex graphs. In fact, this is the best possible label size.
\end{theorem}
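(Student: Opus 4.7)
I will handle the upper and lower bounds separately. The plan for the upper bound is to reduce labeling to an orientation problem. First I would show that every $G\in\mathcal{F}_d$ admits an orientation in which each vertex has out-degree at most $\lceil d/2\rceil$: pair up the odd-degree vertices of $G$ and add a dummy matching between each pair, producing a multigraph $G'$ in which every vertex has even degree at most $d+1$; take an Eulerian circuit of each connected component of $G'$ and orient every edge in the direction of traversal, so that in $G'$ the in-degree of every vertex equals its out-degree. Removing the dummy edges yields an orientation of $G$ in which every vertex has out-degree at most $\lceil d/2\rceil$. Fix this orientation, assign a distinct $\lceil\log n\rceil$-bit identifier to every vertex, and let $\ell(v)$ be the pair $(\mathrm{id}(v),\,\mathrm{id}(N^+(v)))$ where $N^+(v)$ is the (sorted, sentinel-padded) list of out-neighbours. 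The decoder outputs $1$ on $(\ell(u),\ell(v))$ iff $\mathrm{id}(u)$ appears in $v$'s list or $\mathrm{id}(v)$ appears in $u$'s list; correctness is immediate from the orientation. The resulting size is $(\lceil d/2\rceil+1)\lceil\log n\rceil=\tfrac{d}{2}\log n\,(1+o(1))$.

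The plan for the lower bound is a direct counting argument. Using either the configuration model or McKay's asymptotic formula, the number of $d$-regular labeled graphs on $n$ vertices (when $nd$ is even) is at least $n^{nd/2}/\exp(O(nd\log d))$, so in particular $\log|\mathcal{F}_d(n)|\geq\tfrac{nd}{2}\log n - O(nd\log d)$. Any adjacency labeling scheme of length $c$ represents a graph as an ordered $n$-tuple of labels, so the set of representable $n$-vertex graphs has size at most $2^{cn}$. Combining,
\[
cn\geq\tfrac{nd}{2}\log n - O(nd\log d),\qquad\text{i.e.}\qquad c\geq \tfrac{d}{2}\log n\,(1-o(1))
\]
as $n\to\infty$ for any $d$ with $\log d=o(\log n)$.

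The main technical issue I expect is tightening the leading constant in the upper bound. The naive construction above pays $(\lceil d/2\rceil+1)\log n$ bits, which is already $\tfrac{d}{2}\log n(1+o(1))$ when $d$ grows with $n$, but for small fixed $d$ it overshoots by an additive $\Theta(\log n)$ term coming from storing the vertex's own identifier alongside its out-neighbour list. Closing this gap amounts to folding the self-identifier into the out-neighbour list without losing decodability. A clean way to do this is to invoke an existing induced-universal-graph construction for bounded-degree graphs of size $n^{d/2+o(1)}$ (these follow from standard arguments combining orientation with a slight over-allocation of the vertex set of the universal graph), so that each label is simply an address in the universal graph of length $\tfrac{d}{2}\log n(1+o(1))$. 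Verifying the compatibility of the Eulerian orientation with such a succinct self-embedding is the delicate step; once this is established, the upper and lower bounds match and the theorem follows.
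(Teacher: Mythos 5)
You should first note that the paper does not actually give a proof of this theorem: it simply cites Alon and Nenadov for the optimal upper-bound construction and Butler for the lower bound. So the comparison is really between your blind proposal and the cited literature.

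Your lower bound is correct and self-contained. Two small things are worth calling out. First, you are implicitly counting \emph{labeled} $d$-regular graphs, and that is exactly what makes the argument work: an adjacency labeling scheme with $c$-bit labels encodes each labeled $n$-vertex graph injectively as an $n$-tuple of labels (if two labeled graphs got the same tuple, the decoder would have to answer identically on every pair, so the graphs would coincide), giving $|\mathcal{F}_d(n)|\leq 2^{cn}$. The number of labeled $d$-regular graphs has logarithm $\tfrac{dn}{2}\log n - O(dn\log d)$, so $c\geq\tfrac{d}{2}\log n(1-o(1))$. If instead you counted \emph{unlabeled} graphs (as the paper's commented-out remark does), you would lose a factor of $n!$ in the numerator and only recover $\tfrac{d-2}{2}\log n$, which is weaker than what the theorem claims. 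So your choice to count labeled graphs is not just cosmetic — it is what makes the bound tight.

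Your upper bound, however, has a genuine gap that you partially flag but do not close. The Eulerian-orientation trick cleanly yields out-degree at most $\lceil d/2\rceil$ and thus labels of size $(\lceil d/2\rceil+1)\lceil\log n\rceil$, but for fixed $d$ this is $\tfrac{d+2}{2}\log n$, which is a \emph{constant factor} $\tfrac{d+2}{d}$ larger than $\tfrac{d}{2}\log n$, not a $(1+o(1))$ factor. The equality you assert in the display — $(\lceil d/2\rceil+1)\lceil\log n\rceil=\tfrac{d}{2}\log n\,(1+o(1))$ — is simply false for fixed $d$, and your later caveat (``when $d$ grows with $n$'') acknowledges this. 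The proposed remedy, namely to invoke ``an existing induced-universal-graph construction of size $n^{d/2+o(1)}$,'' is circular: a universal graph of that size \emph{is} exactly an adjacency labeling with $\tfrac{d}{2}\log n(1+o(1))$ bits, so you would be citing the theorem to prove itself. The honest conclusion is that the Eulerian-orientation approach gives arboricity-based schemes of size $(\lceil d/2\rceil+1)\log n$ (as in Kannan--Naor--Rudich), and that shaving the extra additive $\log n$ term requires the genuinely different probabilistic construction of Alon and Nenadov, which shows that a suitable random graph on $n^{d/2+o(1)}$ vertices contains every $n$-vertex graph of maximum degree $d$ as an induced subgraph. There is no known simple ``folding'' of the self-identifier into the out-neighbour list that achieves this; your sketch would need to either reproduce that argument or weaken the theorem to the $(\lceil d/2\rceil+1)\log n$ bound.
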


See Alon and Nenadov~\cite{alon2017optimal} for the optimal construction and Butler~\cite{butler2009induced} for the lower bound which is based on a counting argument for $d$-regular graphs.

\subsection{Upper Bound}
The upper bound we obtain here is shorter by roughly a factor of $4$ than the one in the adversarial case.  
\begin{claim}
\label{thm: non adversarial}
Let $\varepsilon\in (0,1)$ and $d\in\mathbb{N}$. The family $\mathcal{F}_d$ has an adjacency sketch of size  $(\lceil \frac{d}{2} \rceil +1)\lceil \log (1/\varepsilon )\rceil + \lceil 2 \log d \rceil$ with one-sided probability of error $\varepsilon$ in the non-adversarial setting.
\end{claim}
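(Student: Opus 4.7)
The plan is to combine three tools: an Eulerian-style orientation of $G$ with out-degree at most $\lceil d/2\rceil$, a distance-$2$ coloring $h\colon V(G)\to[d^2+1]$ (so $h$ is injective on every closed neighborhood of $G$), and the Dietzfelbinger--Pagh retrieval data structure from \cref{retrieval_data_structure} --- used here in its plain form, since we do not need its additional min-entropy property in the non-adversarial model.

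Concretely, first orient each edge of $G$ so every vertex has out-degree at most $\lceil d/2\rceil$, by pairing up the odd-degree vertices with auxiliary edges, running an Eulerian circuit, orienting along it, and then discarding the auxiliary edges. Next, properly color $G^2$ with $d^2+1$ colors to obtain $h$, and sample an independent uniform tag $r_v\in\{0,1\}^{k}$ for each vertex $v$, where $k:=\lceil \log(1/\varepsilon)\rceil$. For each $v$, build a retrieval structure $D_v$ encoding the partial map $h(w)\mapsto r_w$ for $w$ in $v$'s out-neighborhood $N^+(v)$; by \cref{retrieval_data_structure} this uses $|N^+(v)|\cdot k+O(\log\log d)$ bits. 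Vertex $v$'s label is the triple $(h(v),\,r_v,\,D_v)$, whose total size matches the bound claimed (the $O(\log\log d)$ additive term is absorbed into the $\lceil 2\log d\rceil$ slot).

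The decoder, given two labels, returns $1$ iff $D_u(h(v))=r_v$ or $D_v(h(u))=r_u$. If $(u,v)\in E(G)$ the edge is oriented one way --- say $u\to v$ --- so $v\in N^+(u)$ and hence $D_u(h(v))=r_v$ by construction, giving one-sided error on edges. The heart of the argument is the non-edge case: if $(u,v)\notin E(G)$ then $v\notin N^+(u)$, so $D_u$ was built using only the tags $\{r_w\colon w\in N^+(u)\}$, a family that does not include $r_v$; by independence of the random tags, $D_u(h(v))$ is therefore independent of $r_v$, and $\Pr[D_u(h(v))=r_v]\le 2^{-k}\le \varepsilon$. A symmetric bound on the other retrieval together with a union bound over the two sides gives the claimed error, up to constants that can be absorbed by adjusting $k$ by a single bit.

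The main conceptual point --- the step that separates this argument from the adversarial proof of \cref{retrieval_degree_d_scheme} --- is that here the retrieval output on an absent key is not required to have high min-entropy; the weaker property that this output is independent of the omitted tag, which holds by construction, already suffices when the query pair is fixed before the labels are sampled. This is precisely why the non-adversarial bound can shave a factor of roughly $2$ off the per-tag length (no projective-plane layer is needed) and a further factor, from $d$ to $d/2$, thanks to the orientation.
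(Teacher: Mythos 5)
Your proposal matches the paper's construction: orient the edges so every vertex has out-degree at most $\lceil d/2\rceil$, distance-$2$ color the graph, assign each vertex an independent uniform $\lceil\log(1/\varepsilon)\rceil$-bit password, and have each vertex store its own password plus the passwords of its out-neighbors in a retrieval structure keyed by neighbor color, with the decoder accepting iff either side's retrieval matches the other's password. Your writeup is in fact more careful than the paper's own sketch --- including the useful observation that plain independence of $D_u(h(v))$ from $r_v$, rather than the min-entropy guarantee of \cref{retrieval_data_structure}, is all that is needed --- and the small slack you flag from the two-sided union bound is equally implicit in the paper's version.
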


This claim is not trivial, but some variant of our construction of the proof of \cref{retrieval_degree_d_scheme} can also be used for this claim. Observe that the method for sketching a matching can be simplified when we assume there is no (adaptive) adversary. We highlight two key differences in the construction:

\begin{itemize}
\item Instead of using the projective plane method from \cref{projective_plane_scheme}, we use a password-like scheme. Meaning, give each vertex a random $\lceil \log (1/\varepsilon ) \rceil $-bit string. In the matching, a pair of vertices is decoded as being adjacent if their passwords are the same. This is unlikely to happen for any pair of non-adjacent vertices.
\item 
Roughly follow the methodology of \cref{retrieval_degree_d_scheme}: color the vertices such that no two neighbors of a vertex have the same color. One source of saving we can achieve here is storing information for only about half the neighbors: orient all the edges such that the out-degree of all vertices is at most $\lceil d/2 \rceil$.   
    Choose a random password for each vertex. Instead of storing $d$ passwords using the retrieval data structure, store only the passwords corresponding to out-edges (together with the color of the neighbor as the key). Every pair can still be decoded correctly since one of the vertices has the password for the other vertex. This requires storing the password itself plus  $\lceil {d}/{2} \rceil$ passwords of neighbors.
\end{itemize}

\subsection{Lower Bound}
\begin{claim}
Let $\varepsilon\in (0,1)$ and $d\in\mathbb{N}$. Any adjacency sketch for $\mathcal{F}_d$ which only errs on non-edges with a probability of error $\varepsilon$ in the non-adversarial setting requires $\frac{d}{4}\log (1/\varepsilon )$-bit labels.
\end{claim}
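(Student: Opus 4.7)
The plan is a counting/compression argument against the hard subfamily of $d$-regular bipartite graphs on $2n$ vertices (a subset of $\mathcal{F}_d$), which is the same subfamily underlying the deterministic lower bound of \cref{deterministic_degree_d}. Writing $\mathcal{F}^{\mathrm{bip}}_{d,n}$ for this subfamily, the configuration-model count gives
$$\log\bigl|\mathcal{F}^{\mathrm{bip}}_{d,n}\bigr| \;\geq\; nd\log(n/d) - O(nd)$$
for $n$ sufficiently large. The strategy is to show that any sketch with label size $s$ and one-sided error $\varepsilon$ can only compress this family up to an $\varepsilon$-controlled ambiguity, then extract a bound on $s$.

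Fix such a sketch. For a labeling $\ell$, define the decoded graph $\hat G_\ell$ by $(u,v)\in E(\hat G_\ell)\iff\dec(\ell(u),\ell(v))=1$. Because the sketch errs only on non-edges, $E(G)\subseteq E(\hat G_\ell)$ always, and by linearity the expected number of extra edges is at most $\varepsilon\binom{2n}{2}=O(\varepsilon n^2)$. By Markov's inequality, with probability at least $1/2$ over the encoder's randomness the labeling is \emph{good} for $G$, meaning $|E(\hat G_\ell)\setminus E(G)|\leq O(\varepsilon n^2)$; in particular every $G\in\mathcal{F}^{\mathrm{bip}}_{d,n}$ admits at least one good labeling.

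The key counting step: for any labeling $\ell$, a graph $G$ for which $\ell$ is good must have $E(G)$ as an $nd$-edge subset of $E(\hat G_\ell)$ with $|E(\hat G_\ell)|\leq nd+O(\varepsilon n^2)$, so the number of graphs sharing any given good $\ell$ is at most $\binom{nd+O(\varepsilon n^2)}{nd}$. Since the total number of possible labelings is at most $2^{2ns}$ and every graph has a good labeling, double counting yields
$$
2^{2ns}\cdot \binom{nd+O(\varepsilon n^2)}{nd}\;\geq\;\bigl|\mathcal{F}^{\mathrm{bip}}_{d,n}\bigr|.
$$
Taking logarithms, using $\log\binom{a}{b}\leq b\log(ea/b)$, and noting that for $n\gtrsim d/\varepsilon$ the right-hand log is bounded by $nd\log(\varepsilon n/d)+O(nd)$, the $\log(n/d)$ factors cancel against the lower bound on $\log|\mathcal{F}^{\mathrm{bip}}_{d,n}|$ and I obtain $2ns\geq nd\log(1/\varepsilon)-O(nd)$. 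This gives $s\geq \tfrac{d}{2}\log(1/\varepsilon)-O(d)$, which is at least $\tfrac{d}{4}\log(1/\varepsilon)$ once $\log(1/\varepsilon)$ exceeds an absolute constant.

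The main obstacle is entirely bookkeeping: picking $n$ large enough (essentially $n\gtrsim d/\varepsilon$) for the binomial estimate to yield the clean cancellation of $\log(n/d)$, and correctly handling the randomized encoder by leveraging the ``every $G$ has a good labeling'' existence statement (from Markov) rather than tracking the encoder's probabilities directly. One conceptually tempting alternative would be to reduce directly to a Bloom-filter lower bound on the label of a single high-degree vertex, but the labels of the other vertices can themselves carry information about $N(u)$, so the Bloom-filter argument does not apply cleanly per-vertex; the global counting argument above sidesteps this by charging all the label bits of all $2n$ vertices simultaneously.
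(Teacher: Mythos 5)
Your proof is correct, and it takes a genuinely different route from the paper. The paper's proof is a two-line derandomization: run the sketch $k = \tfrac{2\log n}{\log(1/\varepsilon)}$ times in parallel (taking the AND of the decoders, valid for one-sided error) to push the error below $1/n^2$, argue by linearity of expectation that some fixed labeling makes zero errors, and then invoke the deterministic lower bound of \cref{deterministic_degree_d} (which is itself a counting argument on $d$-regular graphs, attributed to Butler) against the resulting deterministic scheme of size $sk$. Your proof instead reproves the counting from scratch \emph{in the randomized setting}: you exploit the one-sided error to note that the decoded graph $\hat G_\ell$ always contains $E(G)$, you use Markov to guarantee each $G$ has a labeling with only $O(\varepsilon n^2)$ spurious edges, and you double-count against the $2^{2ns}$ possible labelings. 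The key observation that a good $\ell$ pins $E(G)$ down to one of $\binom{nd+O(\varepsilon n^2)}{nd}$ possibilities is sound, and the arithmetic after taking logarithms does deliver $s \geq \tfrac{d}{2}\log(1/\varepsilon) - O(d)$.

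What each buys: your argument is self-contained (it never appeals to \cref{deterministic_degree_d} as a black box) and actually yields a sharper leading constant of $\tfrac{d}{2}$ rather than the $\tfrac{d}{4}$ that falls out of the paper's amplification-then-reduce route; the $\tfrac14$ in the paper is an artifact of the factor-$2$ in $k=\tfrac{2\log n}{\log(1/\varepsilon)}$. The cost is that your additive $-O(d)$ term means you need $\log(1/\varepsilon)$ above an absolute constant for the stated $\tfrac{d}{4}\log(1/\varepsilon)$ bound to hold verbatim (which you flag); the paper's route does not have this caveat in the same form, though it also becomes vacuous for $\varepsilon$ near $1$. Both arguments ultimately trace back to a count of regular graphs, so conceptually they are siblings; yours inlines the counting and thereby gets a cleaner dependence on the error parameter.
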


\begin{proof}
Assume towards contradiction that we have an $s$-bit one-sided adjacency sketch for $\mathcal{F}_d$ with error probability $\varepsilon$ where $s<\frac{d}{4}\log (1/\varepsilon )$. We will reduce the probability of error using amplification and show that this implies a good deterministic labeling scheme for $\mathcal{F}_d$ which contradicts the lower bound of \cref{deterministic_degree_d}.

Consider a graph $G$ on $n$ vertices in $\mathcal{F}_d$. By using this scheme $k=\frac{2\log n}{\log (1/\varepsilon )}$ times in parallel (see~\cref{amplification_method}), we reduce the probability of error to $\varepsilon^k=\frac{1}{n^2}$. By union bound,
$$
\mathds{E}\left[\#\text{ of incorrectly decoded vertex pairs}\right]
\leq \frac{1}{n^2}\cdot\binom{n}{2} < 1
$$
We conclude that there is a labeling function in the distribution over labeling functions of the amplified version of the sketch that has no errors in $G$. This gives a deterministic adjacency labeling scheme of size $s\cdot k<\frac{d}{2}\log n$ which contradicts \cref{deterministic_degree_d}.

\end{proof}

\begin{remark}
We emphasize that even though constructing a sketch for $\mathcal{F}_d$ requires $d$ bits, some sub-families of $\mathcal{F}_d$ like trees can be sketched using a label size which is independent of $d$ (see \cite{fraigniaud2009randomized}). This is not the case in the adversarial setting where any particular graph in $\mathcal{F}_d$ which is large enough must have label size that depends on $d$, as seen in \cref{degree_d_lower_bound}.
\end{remark}

\end{document}